\newcommand*\samethanks[1][\value{footnote}]{\footnotemark[#1]}
\g@addto@macro\bfseries{\boldmath}
\g@addto@macro\mdseries{\unboldmath}
\g@addto@macro\normalfont{\unboldmath}
\g@addto@macro\rmfamily{\unboldmath}
\g@addto@macro\upshape{\unboldmath}
\renewcommand*{\multicitedelim}{\addcomma\space}
    \newlength{\temp@x}%
    \newlength{\temp@y}%
    \newlength{\temp@w}%
    \newlength{\temp@h}%
    \def\my@coords#1#2#3#4{%
      \setlength{\temp@x}{#1}%
      \setlength{\temp@y}{#2}%
      \setlength{\temp@w}{#3}%
      \setlength{\temp@h}{#4}%
      \adjustlengths{}%
      \my@pdfliteral{\strip@pt\temp@x\space\strip@pt\temp@y\space\strip@pt\temp@w\space\strip@pt\temp@h\space re}}%
      \def\my@pdfliteral#1{\pdfliteral page{#1}}
      \def\adjustlengths{}%
      \def\my@pdfliteral #1{}
      \def\adjustlengths{\setlength{\temp@h}{-\temp@h}\addtolength{\temp@y}{1in}\addtolength{\temp@x}{-1in}}%
    \def\Hy@colorlink#1{%
      \begingroup
        \ifHy@ocgcolorlinks
          \def\Hy@ocgcolor{#1}%
          \my@pdfliteral{q}%
          \my@pdfliteral{7 Tr}
        \else
          \HyColor@UseColor#1%
        \fi
    }%
    \def\Hy@endcolorlink{%
      \ifHy@ocgcolorlinks%
        \my@pdfliteral{/OC/OCPrint BDC}%
        \my@coords{0pt}{0pt}{\pdfpagewidth}{\pdfpageheight}%
        \my@pdfliteral{F}
        %
        \my@pdfliteral{EMC/OC/OCView BDC}%
        \begingroup%
          \expandafter\HyColor@UseColor\Hy@ocgcolor%
          \my@coords{0pt}{0pt}{\pdfpagewidth}{\pdfpageheight}%
          \my@pdfliteral{F}
        \endgroup%
        \my@pdfliteral{EMC}%
        \my@pdfliteral{0 Tr}
        \my@pdfliteral{Q}%
      \fi
      \endgroup
    }%
\colorlet{DarkRed}{red!50!black}
\colorlet{DarkGreen}{green!50!black}
\colorlet{DarkBlue}{blue!50!black}
\declaretheorem[numberwithin=section]{theorem}
\declaretheorem[numberlike=theorem]{lemma}
\declaretheorem[numberlike=theorem]{corollary}
\declaretheorem[numberlike=theorem]{definition}
\newcommand*{\vect}[1]{\mathbf{#1}}
\newcommand{\x}{\vect{x}}
\DeclareMathOperator{\poly}{poly}
\DeclareMathOperator{\ID}{ID}
\let\P\relax
\DeclareMathOperator{\P}{\mathbb{P}}
\DeclareMathOperator{\Z}{\mathbb{Z}}
\DeclareMathOperator{\N}{\mathbb{N}}
\DeclareMathOperator{\R}{\mathbb{R}}
\DeclareMathOperator{\sign}{sign}
\DeclareMathOperator{\rank}{rank}
\DeclareMathOperator{\diag}{diag}
\DeclareMathOperator{\dom}{dom}
\DeclareMathOperator*{\argmax}{arg\ max}
\DeclareMathOperator*{\argmin}{arg\ min}
\DeclareMathOperator{\Ot}{\widetilde{\mathit{O}}}
\title{The Laplacian Paradigm in the Broadcast Congested Clique}
\author{
  Sebastian Forster\thanks{Department of Computer Science, University of Salzburg, Austria}
  \and
  Tijn de Vos\samethanks
}
\date{}
\begin{document}
\maketitle
\thispagestyle{empty}
\begin{abstract}
In this paper, we bring the main tools of the Laplacian paradigm to the Broadcast Congested Clique. We introduce an algorithm to compute spectral sparsifiers in a polylogarithmic number of rounds, which directly leads to an efficient Laplacian solver. 
Based on this primitive, we consider the linear program solver of Lee and Sidford~\cite{LS14}. We show how to solve certain linear programs up to additive error~$\epsilon$ with $n$ constraints on an $n$-vertex Broadcast Congested Clique network in $\Ot(\sqrt{n}\log(1/\epsilon))$ rounds. Using this, we show how to find an exact solution to the minimum cost flow problem in $\Ot(\sqrt{n})$ rounds.
\end{abstract}

\newpage
\tableofcontents
\newpage

\section{Introduction}

In this paper, we study algorithms for the \emph{Broadcast Congested Clique (BCC)} model~\cite{DKO12}.
In this model, the (problem-specific) input is distributed among several processors and the goal is that at the end of the computation each processor knows the output or at least the share of the output relevant to it.
The computation proceeds in rounds and in each round each processor can send one message to all other processors.
We can also view the communication as happening via a shared blackboard to which each processor may write (in the sense of appending) at most one message per round.
The main metric in designing and analyzing algorithms for the Broadcast Congested Clique is the number of rounds performed by the algorithm.

A typical way of for example distributing an $ n \times n $ input matrix among $ n $ processors would be that initially processor~$ i $ only knows row~$ i $ of the matrix.
In many graph problems, this input matrix is the adjacency matrix of the graph.
If communication with other processors is only possible along the edges of this graph, then the resulting model is often called the Broadcast CONGEST model~\cite{Lynch96}.
Note that the unicast versions of these models, in which each processor may send a different message to each (neighboring) processor, are known as the Congested Clique~\cite{LPSPP05} and the CONGEST model~\cite{Peleg00}, respectively.

In this paper, we bring the main tools of the so-called \emph{Laplacian paradigm} to the BCC model.
In a seminal paper, Spielman and Teng developed an algorithm for approximately solving linear systems of equations with a Laplacian coefficient matrix in a near-linear number of operations~\cite{ST14}.
The Laplacian paradigm~\cite{Teng10} refers to exploring the applications of this fast primitive in algorithm design.
In a broader sense, this paradigm is also understood as the more general idea of employing linear algebra methods from continuous optimization outside of their traditional domains.
Using such methods is very natural in distributed models because a matrix-vector multiplication can be carried out in a single round if each processor stores one coordinate of the vector.
In recent years, this methodology has been successfully employed in the CONGEST model~\cite{GKK+15,BeckerFKL21} and in particular, solvers for Laplacian systems with near-optimal round complexity have been developed for the CONGEST model -- in networks with arbitrary topology~\cite{FGLP+20} and in bounded-treewidth graphs~\cite{AGL21} -- and for the HYBRID model~\cite{AGL21}.
In this paper, we switch the focus to the BCC model and show that it allows a faster implementation of the basic Laplacian primitive.

What further makes the BCC model intriguing is that -- in contrast to the Congested Clique -- for several problems no tailored BCC algorithms are known that are significantly faster than low-diameter versions of (Broadcast) CONGEST model algorithms.
Consider, for example, the single-source shortest path problem.
In the (Broadcast) CONGEST model, the fastest known algorithm takes $ \Ot (\sqrt{n} D^{1/4} + D) $ rounds~\cite{ChechikM20}, where $ D $ is the diameter of the underlying (unweighted) communication network.\footnote{ Throughout the introductory part of this paper we often assume that all weights of graphs and entries of matrices are polynomially bounded to simplify some statements of running time bounds.}
In the BCC model, the state of the art for this problems is $ \Ot (\sqrt{n}) $ rounds~\cite{Nanongkai14}, which essentially is not more efficient than the special case $ D = 1 $ of the Broadcast CONGEST model.
In the Congested Clique model however, $ \sqrt{n} $ is not a barrier for this problem as it can be solved in $ \Ot (n^{1/6}) $ rounds~\cite{CDKL21} on undirected graphs.
A similar classification can be made for directed graphs~\cite{ForsterN18,Censor-HillelKK19}.
This naturally leads to the question whether BCC algorithms can be developed that are faster than their CONGEST model counterparts, since it is not clear which one dominates the other in strength. 

It has recently been shown that in the CONGEST model, the maximum flow problem as well as the unit-capacity minimum cost flow problem can be solved in $ O (m^{3/7 + o(1)} (\sqrt{n} D^{1/4} + D)) $ rounds~\cite{FGLP+20}, where $ m $ denotes the number of edges of the input graph; note that this round complexity can only be sublinear in $ n $ for sparse graphs.

\paragraph{Our contributions.}
Our main result is an algorithm that solves the minimum cost flow problem\footnote{Note that in contrast to the algorithm of Forster et al.~\cite{FGLP+20}, we do not need to assume unit capacities.} (which generalizes both the single-source shortest path problem and the maximum flow problem) in $ \Ot (\sqrt{n}) $ rounds in the BCC model, which in particular is sublinear for any graph density and matches the currently known upper bounds for the single-source shortest paths problem.

\begin{theorem}\label{thm:mincostflow_BCC}
There exists a Broadcast Congested Clique algorithm that, given a directed graph $G=(V,E)$ with integral costs $q\in \Z^{m}$ and capacities $c\in \Z_{>0}^{m}$ with $||q||_\infty\leq M$ and $||c||_\infty \leq M$, computes a minimum cost maximum $s$-$t$ flow with high probability in $\Ot(\sqrt{n}\log^3 M)$ rounds. 
\end{theorem}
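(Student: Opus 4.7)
The plan is to cast the minimum cost maximum $s$-$t$ flow problem as a sequence of linear programs that fit into the framework of the LP solver stated earlier in the paper, invoke that solver to obtain a near-optimal fractional solution, and then round to an exact integral optimum via iterative refinement with cost scaling. This is the standard interior-point method (IPM) based pipeline; the novelty here is that each ingredient can be implemented in the Broadcast Congested Clique using the Laplacian primitives developed earlier in the paper.

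First we set up the LP. Let $B\in\Z^{n\times m}$ denote the vertex-edge incidence matrix of $G$. The maximum flow value $F^{*}$ is obtained via an initial call to the LP solver on $\max F$ subject to $Bf=F(e_t-e_s)$ and $0\le f\le c$; the minimum cost maximum flow is then the optimum of $\min q\tran f$ subject to $Bf=F^{*}(e_t-e_s)$ and $0\le f\le c$. Both LPs have $n$ equality constraints and $m$ box constraints, which is precisely the shape handled by the paper's LP solver. Since in the BCC model each processor already holds its own row of $B$, matrix-vector products with $B$ and $B\tran$ require only a single round.

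A single invocation of the LP solver yields a fractional solution with additive objective error $\epsilon$ in $\Ot(\sqrt{n}\log(1/\epsilon))$ rounds. Choosing $\epsilon$ polynomially small in $M$ gives $\Ot(\sqrt{n}\log M)$ rounds per call, which is not yet enough to recover the exact integral optimum; pushing $\epsilon$ small enough by brute force would require $\log(1/\epsilon)=\Omega(n\log M)$ and would blow the round complexity past $\sqrt{n}$. To bridge this gap we employ a cost-scaling / iterative-refinement loop that calls the solver $O(\log^{2}M)$ times. Each iteration uses the approximate primal and dual certificates together with complementary slackness to fix a subset of edges to either $0$ or $c_e$; what remains is a strictly smaller residual min-cost flow problem on the unfixed edges, which is then solved recursively. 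After $O(\log M)$ outer refinements the residual becomes trivial, and assembling the fixed edge values reconstructs the exact integral min-cost max-flow. Combining $\Ot(\sqrt{n}\log M)$ per solver call with $O(\log^{2}M)$ calls yields the advertised $\Ot(\sqrt{n}\log^{3}M)$ round bound.

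The main obstacle is making this IPM-plus-rounding pipeline truly distributed: the LP solver must accept inputs and produce outputs in a distributed representation (each vertex knowing the values on its incident edges, each row of any multiplied matrix living at the appropriate processor), and the residual instance at each refinement step must be constructed and agreed upon consistently across the network. Because flow conservation is a per-vertex constraint and each vertex holds its own incident rows of $B$, both the detection of fixed edges via complementary slackness and the formation of the residual LP can be carried out in $\Ot(1)$ BCC rounds. Thus the dominant cost remains the $\sqrt{n}$ IPM iterations themselves, each executed on top of the Laplacian solver built earlier in the paper.
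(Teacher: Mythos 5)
Your high-level pipeline (reduce to an LP of the right shape, invoke the $\Ot(\sqrt{n})$-iteration solver with the Laplacian primitive, then recover an exact integral answer) matches the paper's, but the step where you recover the \emph{exact} optimum is where the argument breaks down, and it is handled completely differently in the paper. You assert that solving to polynomially small additive error is ``not yet enough'' and that exactness would need $\log(1/\epsilon)=\Omega(n\log M)$, and you bridge the gap with an iterative-refinement/cost-scaling loop that fixes edges via complementary slackness. As stated this loop has no supporting argument: the LP solver returns a strictly interior point together with only approximate dual information, so no edge is exactly at $0$ or $c_e$, approximate complementary slackness does not identify which edges are tight in \emph{some} optimal solution (there may be many optima, and the approximate point can be a convex combination of them far from any single vertex of the polytope), and there is no reason given why a residual instance shrinks enough for the recursion to terminate in $O(\log M)$ outer rounds. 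This is precisely the difficulty the paper resolves by a different mechanism: following Daitch and Spielman, the costs are randomly perturbed by multiples of $\frac{1}{4|E|^2M^2}$ so that with constant (boostable) probability the optimum is \emph{unique}; then a \emph{single} call to the LP solver with $\epsilon=\Theta(1/\poly(m)M^{O(1)})$ — i.e.\ $\log(1/\epsilon)=O(\log(nM))$, contradicting your $\Omega(n\log M)$ claim — yields a point within $1/6$ of the unique integral optimum on every coordinate, so coordinate-wise rounding finishes the job.

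Two further points where your formulation diverges from (and is weaker than) the paper's. First, you propose two separate LPs (one for the max-flow value $F^{*}$, one for the min-cost flow at that value); the paper instead uses a single augmented LP with slack variables $y,z$ and the flow value $F$ as a variable, with a large reward $-2n\tilde{M}F$ and a large penalty $\lambda(\1^Ty+\1^Tz)$. This augmentation is not cosmetic: it supplies the explicit interior starting point $x_0$ that Theorem~\ref{thm:BCC_LPSolve} requires (your max-flow LP $Bf=F(e_t-e_s)$, $0\le f\le c$ has no obvious strictly feasible point to hand to the solver), and it removes the need to pin down $F^{*}$ exactly before the second solve. Second, you do not verify the hypothesis that $(A^TDA)^{-1}y$ can be applied in $T(n,m)=\Ot(\log M)$ rounds; for the paper's augmented constraint matrix $A=[B\ I\ -I\ -e_t]^T$ the normal matrix is SDD but not Laplacian, and one needs Gremban's reduction (simulating a doubled virtual graph) before the BCC Laplacian solver of Theorem~\ref{thm:laplaciansolveBCC} applies. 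Without the uniqueness-via-perturbation argument and these implementation details, the proof does not go through.
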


In obtaining this result, we develop machinery of the Laplacian paradigm that might be of independent interest.
The first such tool is an algorithm for computing a spectral sparsifier in the Broadcast CONGEST model.
\begin{theorem}\label{thm:BC_SS}
There exists an algorithm that, given a graph $G=(V,E,w)$ with positive real weights satisfying $||w||_\infty \leq U$ and an error parameter $\epsilon>0$, with high probability outputs a $(1\pm \epsilon)$-spectral sparsifier $H$ of $G$, where $|H|=O\left( n\epsilon^{-2}\log^4 n\right)$. Moreover, we obtain an orientation on $H$ such that with high probability each edge has out-degree $O(\log^4(n)/\epsilon^2)$. The algorithm runs in $O\left( \log^5(n)\epsilon^{-2}\log(nU/\epsilon)\right)$ rounds in the Broadcast CONGEST model. 
\end{theorem}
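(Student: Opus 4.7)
I would adapt the recursive ``bundle-and-halve'' spectral sparsification framework to the Broadcast CONGEST model. At each level $i$, I compute a \emph{$t$-bundle spanner} $B_i$ of the current weighted graph $G_i$ with $t=\Theta(\epsilon^{-2}\log^2 n)$: the union of $t$ low-stretch spanners, each constructed on $G_i$ minus the previously chosen bundle edges. Non-bundle edges of $G_i$ are then independently kept in $G_{i+1}$ with probability $1/2$ and their weights doubled. Iterating for $L=\Theta(\log(nU/\epsilon))$ levels empties the residual graph, and the output sparsifier is the (reweighted) union $\bigcup_{i\leq L} B_i$. A matrix-Chernoff argument applied to the stretch-bounded contribution of the non-bundle edges at each level shows that, with the chosen $t$, every level contributes spectral error at most $\epsilon/L$, so the cascade is a $(1\pm\epsilon)$ spectral sparsifier with high probability, with total size $O(n\epsilon^{-2}\log^4 n)$.

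\textbf{Distributed implementation.} The core subroutine is a Broadcast CONGEST spanner algorithm of polylogarithmic round complexity, for which a Baswana--Sen-style construction fits naturally: in each of $O(\log n)$ phases, cluster centers sample themselves independently, broadcast their identifiers, and each vertex joins one of the sampled clusters and broadcasts that choice in a single message. The decision of which incident edges to include in the spanner is then purely local to the two endpoints of each edge, so no unicast is needed. A $t$-bundle is assembled by running this subroutine $t$ times, each iteration locally masking the previously bundled edges. Sampling at rate $1/2$ is local once a shared random seed (of $\poly\log n$ length) is broadcast, and $G_{i+1}$ is represented implicitly by each endpoint marking its surviving edges. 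Each level therefore costs $\Ot(\epsilon^{-2})$ rounds, and the $L$ levels together match the claimed $O(\log^5(n)\epsilon^{-2}\log(nU/\epsilon))$ bound.

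\textbf{Orientation and main obstacle.} The orientation is inherited from the Baswana--Sen structure: every edge added in a given phase is oriented from the cluster endpoint toward the cluster center, giving each vertex $O(1)$ out-degree per phase; summing over the $O(\log n)$ phases, $t=\Theta(\epsilon^{-2}\log^2 n)$ bundle copies, and $L=\Theta(\log(nU/\epsilon))$ recursion levels yields the claimed $O(\epsilon^{-2}\log^4 n)$ out-degree bound (with the $\log U$ factor absorbed by a tighter per-level accounting or directly by the edge-count bound). The principal obstacle I anticipate is encoding all of Baswana--Sen's per-neighbor coordination as \emph{broadcasts}: unlike in the Congested Clique, a vertex cannot tailor messages to individual neighbors. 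This is resolvable because each phase only requires every vertex to announce a single cluster ID and weight class in $O(\log n)$ bits, and the edge-inclusion decision can be made independently by each endpoint from what it hears. A secondary point is verifying that the matrix-Chernoff bound still applies under the limited-independence sampling obtained from a short shared random seed, which is standard but needs to be checked to keep the communication polylogarithmic.
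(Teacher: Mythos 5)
Your high-level framework is the right one (bundle spanners \`a la Koutis--Xu/Kyng et al., subsample the non-bundle edges, reweight, iterate), and your Baswana--Sen-based spanner subroutine and orientation are essentially what the paper does. But the step you dispose of in one sentence --- ``sampling at rate $1/2$ is local once a shared random seed (of $\polylog n$ length) is broadcast'' --- is precisely the crux of the problem in the Broadcast CONGEST model, and your resolution of it has two genuine gaps. First, in Broadcast CONGEST the communication network \emph{is} the input graph, so disseminating even a single $O(\log n)$-bit seed to all vertices costs $\Omega(D)$ rounds, where $D$ is the diameter; the theorem's round bound $O(\log^5(n)\epsilon^{-2}\log(nU/\epsilon))$ has no diameter term, so your approach cannot achieve it on high-diameter graphs. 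Second, even granting the seed, the matrix-Chernoff analysis of Koutis--Xu/Kyng et al.\ is stated for fully independent edge sampling; whether a $\polylog$-seed bounded-independence sampler suffices for this framework is exactly what the paper flags as an open question (it points to Doron et al.\ as a candidate technique but does not carry it out), so you cannot wave it off as ``standard but needs to be checked.'' Note also that the naive alternative --- let the lower-ID endpoint flip the coin --- fails because that vertex may be responsible for polynomially many incident edges and cannot broadcast all the outcomes, yet the other endpoints need to know which of their edges survive in order to run the next spanner phase.

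The paper's actual solution is different in kind: it never performs the subsampling a priori. Instead it maintains, for each edge, the probability that the edge still ``exists,'' and defers the coin flip to the moment the spanner algorithm's \textsc{Connect} procedure actually tries to use the edge. Because \textsc{Connect} examines a vertex's candidate neighbors in a canonical (weight, ID) order and the vertex then broadcasts which neighbor it connected to, every neighbor can \emph{deduce} from that single broadcast whether its own edge was sampled out (it was tried and rejected iff it precedes the announced edge in the canonical order). Care is taken so that the two endpoints never flip the coin for the same edge simultaneously (the cluster-ID splitting in steps 3.1/3.2). Correctness then follows from a coupling argument (Lemma~\ref{lm:SS_alg_eq}) showing that this lazy, implicitly-communicated sampling induces exactly the same output distribution as the a priori algorithm, so the black-box guarantee of Kyng et al.\ applies unchanged. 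Your proposal is missing this mechanism entirely, and without it (or a worked-out bounded-independence substitute that also avoids the $\Omega(D)$ seed dissemination) the claimed round complexity does not follow.
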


At a high level, our sparsifier algorithm is a modification of the CONGEST-model algorithm of Koutis and Xu~\cite{KX16}; essentially, uniform edge sampling is trivial in the CONGEST model, but challenging in the Broadcast CONGEST model.
Note that the sparsifier algorithm of Koutis and Xu being restricted to the CONGEST model is a major obstacle for implementing the CONGEST-model Laplacian solver of Forster et al.~\cite{FGLP+20} also in the Broadcast CONGEST model.

Making the sparsifier known to every processor leads to a simple residual-correction algorithm for solving systems of linear equations with a Laplacian coefficient matrix up to high precision in the BCC model. Note that there is reduction~\cite{Gremban96} from solving linear equations with symmetric diagonally dominant (SDD) coefficient matrices to solving linear equations with Laplacian coefficient matrices, which also applies in the Broadcast Congested Clique.
\begin{theorem}\label{thm:laplaciansolveBCC}
There exists an algorithm in the Broadcast Congested Clique model that, given a graph $G=(V,E,w)$, with positive real weights satisfying $||w||_\infty \leq U$ and Laplacian matrix $L_G$, a parameter $\epsilon\in(0,1/2]$, and a vector $b\in \R^n$, outputs a vector $y\in \R^n$ such that $||x-y||_{L_G}\leq \epsilon||x||_{L_G}$, for some $x\in R^n$ satisfying $L_Gx=b$. The algorithm needs $O(\log^5(n)\log(nU))$ preprocessing rounds and takes $O(\log(1/\epsilon)\log(nU/\epsilon))$ rounds for each instance of $(b,\epsilon)$. 
\end{theorem}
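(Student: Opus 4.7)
My plan is to precompute a constant-factor spectral sparsifier, make it globally known so that every processor can locally solve systems in it, and then use it as a preconditioner inside an iterative refinement procedure for each query.

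In the preprocessing phase I would invoke Theorem~\ref{thm:BC_SS} with a constant accuracy parameter (say $\epsilon_0 = 1/2$) to obtain a sparsifier $H$ with $O(n\log^4 n)$ edges together with an orientation of out-degree $O(\log^4 n)$ per vertex, at the cost of $O(\log^5 n \cdot \log(nU))$ rounds. Then every processor broadcasts all of its outgoing sparsifier edges; since out-degrees are $O(\log^4 n)$ and each edge description fits in $O(\log(nU))$ bits, this takes only $O(\log^4 n \cdot \log(nU))$ rounds, which is absorbed into the sparsifier cost. After preprocessing every processor holds the full Laplacian $L_H$ and, because local computation is unbounded in the Broadcast Congested Clique, can apply $L_H^{+}$ to any vector it knows.

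For a query $(b,\epsilon)$, I would run preconditioned Richardson iteration
\[
x_{k+1} = x_k + L_H^{+}\bigl(b - L_G x_k\bigr),
\]
starting from $x_0 = 0$ with every processor maintaining the full current iterate $x_k$. In iteration $k$, processor $i$ uses its row of $L_G$ together with its local copy of $x_k$ to compute $(L_G x_k)_i$, and then broadcasts the residual entry $r_i = b_i - (L_G x_k)_i$. Transmitting one numerical value to $O(\log(nU/\epsilon))$ bits of precision costs $O(\log(nU/\epsilon))$ rounds; once the full residual $r_k$ is known globally, each processor locally applies $L_H^{+}$ and updates its copy of $x_{k+1}$. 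Since $\tfrac{1}{2} L_G \preceq L_H \preceq \tfrac{3}{2} L_G$, the standard analysis of preconditioned Richardson's method shows the $L_G$-error contracts by a constant factor per step, so $O(\log(1/\epsilon))$ iterations suffice and the per-query cost is $O(\log(1/\epsilon)\cdot\log(nU/\epsilon))$ as claimed.

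The part I expect to require the most care is the bit-precision bookkeeping: broadcasting only truncated residual entries introduces rounding, and one has to verify that carrying $\Theta(\log(nU/\epsilon))$ bits throughout keeps the iteration stable and yields the stated guarantee in the semidefinite $L_G$-norm. This should follow from the standard observation that the iterates stay in the orthogonal complement of the all-ones vector (the kernel of both $L_G$ and $L_H$ when $G$ is connected), where $L_G$ is strictly positive definite with condition number polynomial in $n$ and $U$, so that the chosen precision exceeds the log of the condition number by a $\log(1/\epsilon)$ factor and accumulated errors stay well below the target accuracy. Disconnected graphs are handled component-by-component, with each processor locally identifying its component from the globally known $H$.
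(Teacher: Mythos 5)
Your proposal matches the paper's proof in all essentials: the paper likewise precomputes a $(1\pm 1/2)$-spectral sparsifier via Theorem~\ref{thm:BC_SS} in $O(\log^5(n)\log(nU))$ rounds, notes that in the Broadcast Congested Clique the sparsifier is automatically global knowledge (every added edge was broadcast during the construction, so your extra dissemination phase is not even needed, though it fits the budget), and then answers each query by a residual-correction scheme in which every iteration multiplies $L_G$ by a vector distributively at a cost of $O(\log(nU/\epsilon))$ rounds per broadcast entry and solves in $L_H$ locally --- the paper routes this through preconditioned Chebyshev iteration (Theorem~\ref{thm:PCI} via Corollary~\ref{cor:laplaciansolving}) with $\kappa=3$, but for constant $\kappa$ your Richardson iteration yields the same $O(\log(1/\epsilon))$ iteration count. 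One small correction: the sparsifier guarantee is $\tfrac12 L_H \preccurlyeq L_G \preccurlyeq \tfrac32 L_H$, not $\tfrac12 L_G \preccurlyeq L_H \preccurlyeq \tfrac32 L_G$ as you wrote; under the former the undamped iteration matrix $I - L_H^{+}L_G$ has spectral radius at most $1/2$ on the relevant subspace, whereas the relation you stated only bounds the radius by $1$ and would force you to insert a damping factor to get strict contraction.
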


Finally, we show how to implement the algorithm of Lee and Sidford~\cite{LS14}\footnote{Note that in the more technical parts of our paper we explicitly refer to the arXiv preprints~\cite{LS13} and~\cite{LS19} instead of the conference version~\cite{LS14}.} for solving linear programs up to small additive error in $ \Ot (\sqrt{\text{rank}}) $ iterations in the BCC model.
Here, the rank refers to the constraint matrix of the LP and in each iteration a linear system needs to be solved.
If the constraint matrix has a special structure -- which is the case for the LP formulation of the minimum cost flow problem -- then a high-precision Laplacian solver can be employed for this task.
\begin{theorem}\label{thm:BCC_LPSolve}
Let $A \in \R^{m\times n}$ be a constraint matrix with $\rank(A)=n$, let $b\in \R^n$ be a demand vector, and let $c\in \R^m$ be a cost vector. Moreover, let $x_0$ be a given initial point in the feasible region $\Omega^{\mathrm{o}}:=\{x\in\R^m : A^Tx=b,\ l_i\leq x_i\leq u_i\}$. Suppose a Broadcast Congested Clique network consists of $n$ vertices, where each vertex $i$ knows both every entire $j$-th row of $A$ for which $A_{ji}\neq 0$ and knows $(x_0)_j$ if $A_{ji}\neq 0$. Moreover, suppose that for every $y\in \R^n$ and positive diagonal $D\in\R^{m\times m}$ we can compute $(A^TDA)^{-1}y$ up to precision $\poly(1/m)$ in $T(n,m)$ rounds.
Let $U:=\max\{||1/(u-x_0)||_\infty,||1/(x_0-l)||_\infty,||u-l||_\infty,||c||_\infty\}$. Then with high probability the Broadcast Congested Clique algorithm \LPSolve outputs a vector $x\in \Omega^{\mathrm{o}}$ with $c^Tx \leq {\rm{OPT}} + \epsilon$ in $\Ot(\sqrt{n}\log(U/\epsilon)(\log^2(U/\epsilon)+T(n,m)))$ rounds.
\end{theorem}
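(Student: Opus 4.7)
My plan is to take the path-following interior-point algorithm of Lee and Sidford~\cite{LS14,LS19} as a black box and to argue that each of its primitive operations can be executed in the BCC model within the stated round budget. The algorithm consists of an outer \PathFollowing loop of $\Ot(\sqrt{n}\log(U/\epsilon))$ iterations, each invoking a logarithmic number of \CenteringInexact steps; inside a step the heavy work is a constant number of solves of the form $(A^TDA)^{-1}y$ for varying positive diagonal $D$, a refresh of leverage-score-based weights, and a few coordinatewise vector updates.

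First I would fix the data layout dictated by the hypothesis. Since every row $j$ of $A$ is stored at every vertex $i$ with $A_{ji}\neq 0$, I would maintain as an invariant that every vector $u\in\R^m$ manipulated by the algorithm is stored redundantly in the sense that $u_j$ is known to exactly those vertices that host row~$j$. Under this invariant, the main linear-algebra primitives are cheap: computing $Av$ for $v\in\R^n$ takes one round in which each vertex broadcasts its coordinate of $v$, after which each row $j$ is recomputed locally by every owner; computing $A^Tu$ for a redundantly stored $u\in\R^m$ is just a local column sum at each vertex. Diagonal scalings, elementwise nonlinearities, and inner products $\langle u,w\rangle$ for $u,w\in\R^m$ reduce, respectively, to fully local work and to a single $O(\log n)$-round aggregation in which each row is charged to a canonical owner to avoid double counting. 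Finally, a single call to the oracle $(A^TDA)^{-1}y$ costs $O(1)$ broadcast rounds for the right-hand side $y\in\R^n$ plus $T(n,m)$ rounds, and its output naturally lives in $\R^n$, one coordinate per vertex.

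With these primitives in place, I would walk through \CenteringInexact and \PathFollowing step by step. The only non-trivial subroutine is the approximate leverage score computation driving the weight update: I would implement it via the standard Johnson--Lindenstrauss construction, reducing the task to $O(\log n)$ evaluations of $A(A^TDA)^{-1}A^Tq$ for independent Gaussian vectors $q\in\R^m$ generated from a shared random seed (broadcast once by a designated vertex). Each such evaluation is $O(1)$ rounds plus one oracle call, and the oracle precision $\poly(1/m)$ matches exactly the accuracy under which Lee and Sidford analyze their centering procedure in~\cite{LS19}. The \ProjectMixedBall step is a handful of further linear-system solves together with local updates, and the progress step is again a single oracle call plus local work. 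Summing, each inner step costs $\Ot(T(n,m))$ oracle work and $\Ot(\log^2(U/\epsilon))$ rounds of aggregations and rescalings, the quadratic logarithmic factor absorbing the bit complexity of the duality-gap updates and the number of such updates per iteration. Multiplying by the $\Ot(\sqrt{n}\log(U/\epsilon))$ outer iterations produces the claimed round bound.

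The main obstacle is not the round accounting but the distributed bookkeeping of the lazily updated weight vector. Lee and Sidford refresh only a subset of coordinates per iteration and argue correctness via a perturbation analysis tolerating oracle error up to $\poly(1/m)$. In the BCC setting I need to verify that (i) whenever a coordinate of the weight vector is resampled, the update is broadcast precisely to the owners of the corresponding row, preserving the redundant-storage invariant at no asymptotic extra cost, and (ii) the random seeds used for the Johnson--Lindenstrauss projections and for sampling are shared so that every vertex produces the same $q$, which is a one-round broadcast under our model. Once these two checkpoints are discharged, the error-propagation analysis of~\cite{LS14,LS19} applies verbatim and yields the stated convergence guarantee for \LPSolve.
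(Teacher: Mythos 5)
Your overall skeleton matches the paper's: treat the Lee--Sidford analysis as a black box, store each coordinate of a vector in $\R^m$ redundantly at the owners of the corresponding row of $A$, and charge each iteration a constant number of oracle calls plus local work and aggregations. However, the two places you pass over most quickly are exactly the two subroutines the paper identifies as \emph{not} translating directly to the broadcast setting, and your treatment of both has a genuine gap. First, for the leverage scores you propose ``independent Gaussian vectors $q\in\R^m$ generated from a shared random seed (broadcast once by a designated vertex).'' A truly independent Gaussian (or Achlioptas) sketch $Q\in\R^{k\times m}$ requires $\Theta(km)$ independent random values; a seed determining them cannot be broadcast in $O(1)$ rounds of $O(\log n)$ bits, and a short seed does not produce independent entries, so the standard JL analysis you invoke no longer applies. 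The paper resolves this by switching to the Kane--Nelson JL family (Theorem~\ref{thm:JL}), which is samplable from only $O(\log(1/\delta)\log m)$ uniform bits --- that polylogarithmic seed length is what makes the one-broadcast trick sound, and it must be invoked explicitly rather than assumed.

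Second, you describe \ProjectMixedBall as ``a handful of further linear-system solves together with local updates.'' It is neither: it is the optimization $\argmax_{\|x\|_2+\|l^{-1}x\|_\infty\le 1}a^Tx$, and the centralized routine of Lee and Sidford sorts the $m$ ratios $|a_i|/l_i$ and precomputes $m$ prefix quantities --- both infeasible in the BCC, where $m$ may be $\Theta(n^2)$ and each vertex broadcasts only $O(\log n)$ bits per round. The paper's Lemma~\ref{lm:mixed_norm_ball} supplies the missing argument: the objective restricted to the $i$-th breakpoint depends only on three prefix sums, each computable by a single $O(\log(U/\epsilon))$-round aggregation, and concavity of the function $g(t)$ allows the maximizing breakpoint to be located by binary search over the implicitly sorted ratios, yielding $\Ot(\log^2(U/\epsilon))$ rounds. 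Without an argument of this kind your per-iteration cost of $\Ot(\log^2(U/\epsilon)+T(n,m))$ is unjustified for this step. The remainder of your accounting (matrix-vector products, oracle calls, the $\Ot(\sqrt{n}\log(U/\epsilon))$ iteration count) is consistent with the paper's Lemmas~\ref{lm:centering} and~\ref{lm:path_following}.
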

While this approach of solving LPs is inherently parallelizable (as the PRAM depth analysis of Lee and Sidford indicates), several steps pose a challenge for the BCC model and require more than a mere ``translation'' between models. In particular we need to use a different version of the Johnson-Lindenstraus lemma to approximate leverage scores. Further we give a BCC algorithm for projecting vectors on a mixed norm ball. 

As in the approach of Lee and Sidford, our main result on minimum cost maximum flow then follows from plugging a suitable linear programming formulation of the problem into the LP solver.

\paragraph{Overview.}

We provide a visual overview of the results in this paper and how they are interconnected in Figure~\ref{fig:overview}.
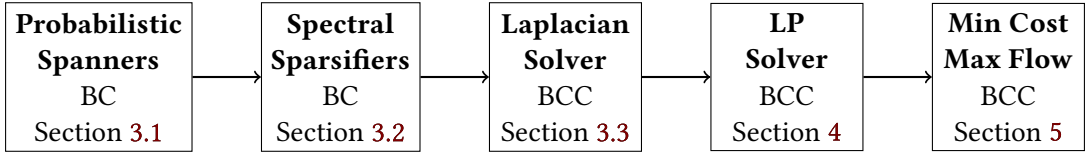
\begin{figure}[h!]
\centering
\begin{tikzpicture}[node distance=.9cm,every text node part/.style={align=center}]
\tikzstyle{alg} = [rectangle, minimum width=2cm, minimum height=1cm, text centered, draw=black]
\node (1) [alg] {\textbf{Probabilistic} \\ \textbf{Spanners}\\ BC\\Section~\ref{sc:prob_spanners}};
\node (2) [alg, right = of 1] {\textbf{Spectral}\\ \textbf{Sparsifiers}\\ BC\\Section~\ref{sc:BC_sparsifier}};
\node (3) [alg, right = of 2] {\textbf{Laplacian}\\ \textbf{Solver}\\ BCC\\Section~\ref{sc:BCC_laplacian}};
\node (4) [alg, right = of 3] {\textbf{LP} \\ \textbf{Solver}\\ BCC\\Section~\ref{sc:LP}};
\node (5) [alg, right = of  4] {\textbf{Min Cost}\\ \textbf{Max Flow}\\ BCC\\Section~\ref{sc:BCC_flow}};

\path[->,thick]
(1) edge (2)
(2) edge (3)
(3) edge (4)
(4) edge (5)
;
\end{tikzpicture}
\caption{An overview of the results in this paper. We denote BC for Broadcast CONGEST and BCC for Broadcast Congested Clique. }
\label{fig:overview}
\end{figure}

To compute spectral sparsifiers in the Broadcast CONGEST model, we follow the setup of Koutis and Xu~\cite{KX16}.
Roughly said, this consists of repeatedly computing spanners and retaining each edge that is not part of a spanner with probability $1/4$.
While this easily allows for an implementation in the CONGEST model (as pointed out by Koutis and Xu), it is not clear how to do this in a broadcast model -- neither the Broadcast CONGEST model, nor the more powerful Broadcast Congested Clique.\footnote{We believe that it would be interesting to explore whether the bounded-independence sampling technique of Doron et al.~\cite{DMVZ20} could also be applied to the algorithmic framework of Koutis and Xu~\cite{KX16}.
Such a sampling method based on a random seed of polylogarithmic size would also significantly simplify an argument in the quantum sparsifier algorithm of Apers and de Wolf~\cite{ApersW20}. Note that in the Broadcast Congested Clique model, a designated vertex could initially sample such a small random seed and communicate it to all other vertices with only a polylogarithmic overhead in round complexity. In the Broadcast CONGEST model, such an approach would however lead to an overhead of $ \Omega(D) $ rounds (the diameter of the underlying communication network), which, as we show, is avoidable.}
A straightforward way to sample an edge would be that one of its endpoints (say the one with the lower ID) decides if it should further exist.
The problem with this approach is that a vertex might be responsible for performing the sampling of a polynomial number of incident edges and the broadcast constraint prevents this vertex from sharing the result with each of the corresponding neighbors.
We overcome this obstacle as follows.
We explicitly maintain the probability that an edge still exists in the current iteration of the sparsifier algorithm of Koutis and Xu.
Every time that an edge should be added to the current iteration's spanner according to the spanner algorithm, one of the endpoints samples whether the edge exists using the maintained probability.
Due to the vertex' subsequent action in the spanner algorithm, the corresponding neighbor can deduce the result of the sampling.
We show that this idea of implicitly learning about the result of the sampling can be implemented by modifying the spanner algorithm of Baswana and Sen~\cite{BS07}.
We present our modification to compute a spanner on a ``probabilistic'' graph (in the sense described above) in Section~\ref{sc:prob_spanners}.
In Section~\ref{sc:BC_sparsifier}, we prove that this can be plugged into the framework of Koutis and Xu to compute a spectral sparsifier in the Broadcast CONGEST model.
Subsequently, we show in Section~\ref{sc:BCC_laplacian} that the spectral sparsifier can be used for Laplacian solving with standard techniques.

In Section~\ref{sc:LP}, we present our LP solver.
Given a linear program of the form\footnote{Following Lee and Sidford, we write $A^Tx=b$ instead of the more common $Ax=b$ for the linear program, since this means that $n$ corresponds with the number of vertices and $m$ with the number of edges in LP formulations of flow problems.}
\[ \min_{x\in R : A^Tx=b} c^Tx, \]
for some constraint matrix $A\in\R^{m\times n}$ and some convex region $R \subseteq\R^m$, Lee and Sidford~\cite{LS14,LS19} show how to find an $\epsilon$-approximate solution in $\Ot(\sqrt{\rank{(A)}}\log(1/\epsilon))$ time. An implementation of this algorithm in the Broadcast Congested Clique is rather technical and needs new subroutines, the main one being our Laplacian solver.

The algorithm is an interior point method that uses weighted path finding to make progress. The weights used are the \emph{Lewis weights}, which can be approximated up to sufficient precision using the computation of leverage scores, which are defined as $\sigma(M):=\diag(M(M^TM)^{-1}M^T)$, where in our case $M=DA$, for some diagonal matrix $D$. Computing leverage scores exactly is expensive, hence these too are approximated. This can be done using the observation that $\sigma(M)_i = ||M(M^TM)^{-1}M^T e_i||_2^2$ and the Johnson-Lindenstrauss lemma~\cite{JL84}, which states that there exists a map $Q\in \R^{k\times m}$ such that $ (1-\epsilon)||x||_2\leq ||Qx||_2\leq (1+\epsilon)||x||_2$, for polylogarithmic~$k$. Nowadays, several different (randomized) constructions for $Q$ exist. A common choice in the realm of graph algorithms~\cite{SS11, LS19} is to use Achlioptas' method~\cite{Achlioptas03}, which samples each entry of $Q$ with a binary coin flip. However, this is in practice not feasible in the Broadcast Congested Clique: we would need a coin flip for every edge, which can be performed by one of the endpoints, but cannot be communicated to the other endpoint due to the broadcast constraint. Instead we use the result of Kane and Nelson~\cite{KN12}, that states we need only a polylogarithmic number of random bits in total. These can simply be sampled by one vertex and broadcast to all the other, who then internally construct $Q$. Now if we can multiply both $A$ and $A^T$ by a vector, and solve linear systems involving $A^TDA$, for diagonal $D$, then we can compute these leverage scores efficiently. These demands on $A$ are not unreasonable when we consider graph problems, because in such cases the constraint matrix will adhere to the structure of the graph Laplacian, and hence our Laplacian solver can be applied.

A second challenge in implementing Lee and Sidford's LP solver is a subroutine that computes projections on a mixed norm ball. To be precise: for $a,l\in \R^m$ distributed over the network, the goal is to find
\begin{align*}
\argmax_{||x||_2+||l^{-1}x||_\infty\leq 1} a^Tx.
\end{align*}
We show that we can solve this maximization problem when we know the sums $\sum_{k\in[i]}a_k^2$, $\sum_{k\in[i]}l_k^2$, and $\sum_{k\in[i]}|a_k||l_k|$ for all $i\in[m]$. Computing such a sum for fixed $i$ is feasible in a polylogarithmic number of rounds. Moreover, we show that we do not need to inspect these sums for all $i\in[m]$, but that we can do a binary search, which reduces the total time complexity to polylogarithmic. 

Following Lee and Sidford~\cite{LS14}, we apply the LP solver to an LP formulation of the minimum cost maximum flow problem in Section~\ref{sc:BCC_flow}.
The corresponding constraint matrix $ A $ has $ O (n) $ rows and thus rank $ O(\sqrt{n}) $.
Furthermore, $ (A^TDA) $ (for any diagonal matrix $ D $) is symmetric diagonally dominant and thus $(A^TDA)^{-1}y$ can be approximated to high precision in a polylogarithmic number of rounds with our Laplacian solver. We only need to solve the LP up to precision $\Theta(1/m^{O(1)})$, since we can round the approximate solution to an exact solution. Hence, the minimum maximum cost flow LP can be solved in $ \Ot (\sqrt{n}) $ rounds.

\section{Preliminaries}
First we detail the models we will be working with. Next, we review spanners and sparsifiers, and how to construct the latter from the former. Then we show how spectral sparsifiers can be used for solving Laplacian systems. Finally, we introduce flow problems on weighted graphs. 

\subsection{Models}
In this paper, we consider multiple variants of message passing models with bandwidth constraints on the communication. Let us start by defining the CONGEST model. The CONGEST model~\cite{Peleg00} consists of a network of processors, which communicate in synchronous rounds. In each round, a processor can send information to its neighbors over a non-faulty link with limited bandwidth. We model the network of processors by a graph $G=(V,E)$, where we identify the processors with the vertices and the communication links with the edges. We write $n=|V|$ and $m=|E|$. Each vertex has a unique identifier of size $O(\log n)$, initially only known by the vertex itself and its neighbors. Computation in this model is done in rounds. At the start of each round, each vertex can send one message to each of its neighbors, and receives messages from them. The messages are of size at most $B=\Theta(\log n)$. Before the next round, each vertex can perform (unlimited) internal computation. We measure the efficiency of an algorithm by the number of rounds. 

In the CONGEST model, each vertex can send distinct messages to each of its neighbors. A more strict assumption on message passing, is that each vertex sends the same message to each of its neighbors, essentially broadcasting it to its neighbors. The CONGEST model together with this assumption is called the \emph{Broadcast CONGEST model}~\cite{Lynch96}. 

Alternatively, we can let the communication network be independent of the graph being studied. More precisely, we allow communication between each pair of vertices. Together with the bandwidth constraint, this is called the \emph{Congested Clique}~\cite{LPSPP05}. If we also impose the broadcast constraint, we have the \emph{Broadcast Congested Clique}~\cite{DKO12}.  

\subsection{Spanners and Spectral Sparsification}\label{sc:prelim_spanners_sparsifiers}
The \emph{Laplacian matrix} of a weighted graph $G$, or the \emph{graph Laplacian}, is a matrix $L\in \R^{n\times n}$ defined by 
\begin{align*}
    L_{uv} = \begin{cases}  -w(u,v) & \text{if $u$ is adjacent to $v$};\\
    \sum_{x\in V} w(u,x) &\text{if }u=v;\\
    0 &\text{else.}
    \end{cases}
\end{align*}
Alternatively, we can define the Laplacian matrix in terms of the \emph{edge-vertex incidence matrix} $B$, defined by 
\begin{align*}
    B(e,v) := \begin{cases}
    1 & \text{if }v=e_{\rm{head}};\\
    -1 &\text{if } v=e_{\rm{tail}};\\
    0 &\text{otherwise.}
    \end{cases}
\end{align*}
The Laplacian then becomes $L=B^TWB$, where $W\in \R^{m\times m}$ is the diagonal matrix defined by the weights: $W_{ee}:=w(e)$. 

Spectral sparsifiers were first introduced by Spielman and Teng \cite{ST11}. A spectral sparsifier is a (reweighted) subgraph that has approximately the same Laplacian matrix as the original graph.
\begin{definition}
\label{def:specspars}
Let $G=(V,E)$ be a graph with weights $w_G\colon E\to \R$, and $n=|V|$. We say that a subgraph $H\subseteq G$ with weights $w_H\colon E(H)\to \R$ is a $(1\pm\epsilon)$-\emph{spectral sparsifier} for $G$ if we have for all $x\in\R^n$:
\begin{align} (1-\epsilon) x^TL_Hx \leq x^T L_G x\leq (1+\epsilon)x^T L_Hx, \label{eq:specspars}\end{align}
where $L_G$ and $L_H$ are the Laplacians of $G$ and $H$ respectively. 
\end{definition}

We introduce the short-hand notation $A \preccurlyeq B$ when $B-A$ is positive semi-definite. This reduces equation~\ref{eq:specspars} to $(1-\epsilon)L_H \preccurlyeq L_G \preccurlyeq (1+\epsilon)L_H$. 

Koutis and Xu~\cite{KX16} showed how to compute a spectral sparsifier by repeatedly computing spanners. This technique was later slightly improved by Kyng et al.~\cite{KPPS17}.
Spanners are a special type of spanning subgraphs, where we demand that distances are preserved up to a constant factor. Trivially, any graph is a spanner of itself. In practice, the goal will be to find sparse subgraphs that are still spanners for the input graph. 

\begin{definition}
Let $G=(V,E)$ be a graph with weights $w\colon E\to \R$. We say that a subgraph $S\subseteq G$ with weights $w_S=w|_S$ is a \emph{spanner} of stretch $\alpha$ for $G$ if for each $u,v \in V$ we have 
\[ d_S(u,v)\leq \alpha d_G(u,v),\]
where we write $d_H(u,v)$ for the distance from $u$ to $v$ in $H$. A \emph{$t$-bundle spanner} of stretch $\alpha$ is a union $T=\bigcup_{i=1}^k T_i$, where each $T_i$ is a spanner of stretch $\alpha$ in $G\setminus \bigcup_{j=1}^{i-1}T_j$.  
\end{definition}

The algorithm of Koutis and Xu is relatively simple: compute a $t$-bundle spanner of stretch $\alpha$, sample the remaining edges with probability $1/4$, repeat for $\lceil\log(m)\rceil$ iterations on the computed bundle spanner and sampled edges. The sparsifier then consists of the last bundle spanner, together with the set of edges left after the $\lceil \log(m)\rceil$ iterations, where edges are reweighted in a certain manner. In the original algorithm, the stretch $\alpha$ was fixed, but the number $t$ of spanners in each bundle grew in each iteration. Kyng et al.~\cite{KPPS17} showed that $t$ can be kept constant throughout the algorithm, leading to a sparser result. 

\begin{algorithm}[H]
\SetAlgoLined \caption{\textsc{SpectralSparsifyOutline}($V,E,w,\epsilon$)}
\label{alg:SS_outline}
Set $k := \lceil\log n\rceil$, $t:= 400\log^2(n) \epsilon^{-2}$, and $E_0:= E$.\\
\For{$i=1,\dots,  \lceil \log m\rceil$}{
Compute a $t$-bundle spanner $B_i$ of stretch $k$.\\
$E_i := B_i$.\\
\ForEach{$e\in E_{i-1}\setminus B_i$}{
With probability $1/4$: $E_i \leftarrow E_i \cup \{e\}$ and $w(e)\leftarrow 4w(e)$.}
}
\Return{$\left(V,E_{\lceil \log m\rceil}\right)$}.
\end{algorithm}

\subsection{Laplacian Solving}\label{sc:laplacian} 
We consider the following problem. Let $L_G$ be the Laplacian matrix for some graph $G$ on $n$ vertices. Given $b\in \mathbb R^n$, we want to solve $L_Gx=b$. Solving Laplacian equation exactly can be computationally demanding. 
Therefore, we consider an approximation to this problem: we want to find $y\in R^n$ such that $||x-y||_{L_G}\leq \epsilon||x||_{L_G}$, where we write $||x||_M:=\sqrt{x^TMx}$ for any $M\in\R^{n\times n}$. One way to approach this is by using a spectral sparsifier of $G$. Hereto we use \emph{preconditioned Chebyshev iteration}, a well known technique from numerical analysis~\cite{Axelsson96,Saad03}. The statement below most closely resembles the formulation of Peng~\cite{peng13}.  
\begin{theorem}\label{thm:PCI}
Suppose we have symmetric positive semi-definite matrices $A, B\in \R^{n\times n}$, and a parameter $\kappa\geq 1$ satisfying
\[A \preccurlyeq B \preccurlyeq \kappa A.\]
Then there exists an algorithm that, given a vector $b\in \R^n$ and parameter $\epsilon\in (0,\frac{1}{2}]$, returns a vector $y\in \R^n$ such that 
\[ ||x-y||_A \leq \epsilon ||x||_A,\]
for some $x\in \R^n$ satisfying $Ax=b$. The algorithm takes $O(\sqrt{\kappa}\log (1/\epsilon))$ iterations, each
consisting of multiplying $A$ by a vector, solving a linear system involving $B$, and a constant number of vector operations.
\end{theorem}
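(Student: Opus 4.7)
The plan is to realize the stated guarantee as a textbook application of preconditioned Chebyshev iteration, where the scaled spectrum of the preconditioned operator is controlled via the condition $A\preccurlyeq B\preccurlyeq \kappa A$. I would structure the argument as follows.

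First, I would establish the spectral setup. From $A\preccurlyeq B\preccurlyeq \kappa A$, both matrices share the same null space and, on its orthogonal complement, the symmetric operator $M:=B^{-1/2}AB^{-1/2}$ has all eigenvalues in $[1/\kappa,1]$. Equivalently, the (non-symmetric but $B$-self-adjoint) operator $B^{-1}A$ has the same spectrum. This is the object to which the Chebyshev machinery will be applied. A standard projection onto the range of $A$ handles the PSD case and ensures $b$ is interpreted as an element of $\mathrm{range}(A)$.

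Next, I would describe the iteration. Given the current iterate $x_k$, define the preconditioned residual $r_k = B^{-1}(b - Ax_k)$; each such step costs one multiplication by $A$ and one solve with $B$, as required. Running the three-term Chebyshev recurrence with parameters chosen from the shifted-scaled Chebyshev polynomial on the interval $[1/\kappa,1]$ produces $x_{k+1}$ as an explicit affine combination of $x_k$, $x_{k-1}$, and $r_k$, so each step uses only a constant number of vector operations on top of the two linear-algebraic primitives. A clean way to see this is to write $x_k = x^\ast - q_k(B^{-1}A)(x^\ast - x_0)$ where $q_k$ is a degree-$k$ polynomial with $q_k(0)=1$ and $q_k$ is built from the Chebyshev polynomial $T_k$ so that $\max_{\lambda\in[1/\kappa,1]} |q_k(\lambda)|$ is minimized.

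Then I would analyze convergence in the $A$-norm. Since $B^{-1}A$ is self-adjoint with respect to $\langle\cdot,\cdot\rangle_A$ (using $A\preccurlyeq B$, so that $B^{1/2}A^{-1/2}$ is well-defined on the range), one obtains
\[
\|x^\ast - x_k\|_A \;\leq\; \max_{\lambda\in[1/\kappa,1]} |q_k(\lambda)| \cdot \|x^\ast - x_0\|_A
\;\leq\; 2\left(\tfrac{\sqrt{\kappa}-1}{\sqrt{\kappa}+1}\right)^{k}\|x^\ast - x_0\|_A.
\]
Choosing $x_0=0$ (so that $\|x^\ast - x_0\|_A = \|x^\ast\|_A$) and $k = O(\sqrt{\kappa}\log(1/\epsilon))$ drives the right-hand side below $\epsilon\|x^\ast\|_A$, which is precisely the desired bound, at the claimed cost of $O(\sqrt{\kappa}\log(1/\epsilon))$ iterations.

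The main obstacle, and the place where a referee would want care, is the semi-definite nature of $A$: the polynomial analysis happens on $\mathrm{range}(A)=\mathrm{range}(B)$ and one must verify that all iterates stay in this subspace and that the $A$-self-adjointness of $B^{-1}A$ holds there. Once this is checked (it reduces to a direct computation using that $A$ and $B$ have the same kernel), the rest is the classical minimax property of Chebyshev polynomials on $[1/\kappa,1]$. For a self-contained derivation of the explicit recurrence coefficients and the factor $(\sqrt{\kappa}-1)/(\sqrt{\kappa}+1)$, I would cite the standard references already invoked in the excerpt~\cite{Axelsson96,Saad03,peng13}.
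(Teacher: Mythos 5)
Your proposal is correct and is exactly the standard preconditioned Chebyshev argument that the paper itself invokes: the paper states this theorem without proof, deferring to the classical references \cite{Axelsson96,Saad03} and Peng's formulation \cite{peng13}, which is the same spectral-interval-plus-Chebyshev-minimax route you reconstruct. The one point deserving care — handling the shared null space of $A$ and $B$ and checking that the iterates stay in $\mathrm{range}(A)$ — is the point you already flag, so nothing is missing.
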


This yields the following corollary for Laplacian solving using spectral sparsifiers. 

\begin{corollary}
\label{cor:laplaciansolving}
Let $G$ be a weighted graph on $n$ vertices, let $\epsilon\in(0,\frac{1}{2}]$ be a parameter, and let $b\in \R^n$ a vector. Suppose $H$ is a $\left(1\pm\frac{1}{2}\right)$-spectral sparsifier for $G$. Then there exists an algorithm that outputs a vector $y\in \R^n$ such that $||x-y||_{L_G}\leq \epsilon||x||_{L_G}$, for some $x\in R^n$ satisfying $L_Gx=b$. The algorithm takes $O(\log(1/\epsilon))$ iterations, each consisting of a multiplying $L_G$ by a vector, solving a Laplacian system involving $L_H$, and a constant number of vector operations. 
\end{corollary}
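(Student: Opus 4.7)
The plan is to derive the corollary as a direct specialization of the preconditioned Chebyshev iteration theorem (Theorem~\ref{thm:PCI}), taking $A := L_G$ and $B$ as a suitably scaled version of $L_H$, and then verifying that the condition number $\kappa$ is a constant so that the iteration count collapses to $O(\log(1/\epsilon))$.

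First I would translate the spectral sparsifier hypothesis into a two-sided Loewner order bound. By Definition~\ref{def:specspars} applied with error parameter $1/2$, we have $\tfrac{1}{2}L_H \preccurlyeq L_G \preccurlyeq \tfrac{3}{2}L_H$. Rearranging the left inequality yields $L_H \preccurlyeq 2 L_G$, and rearranging the right gives $\tfrac{2}{3}L_G \preccurlyeq L_H$. Setting $B := \tfrac{3}{2} L_H$, the two inequalities combine to $L_G \preccurlyeq B \preccurlyeq 3\, L_G$, so the hypothesis of Theorem~\ref{thm:PCI} is satisfied with $A = L_G$ and $\kappa = 3$.

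Next I would invoke Theorem~\ref{thm:PCI} on $(A,B,\kappa,b,\epsilon)$ to obtain $y \in \R^n$ satisfying $\|x-y\|_{L_G} \leq \epsilon\|x\|_{L_G}$ for some $x$ with $L_G x = b$. Since $\kappa = 3$, the iteration count is $O(\sqrt{\kappa}\log(1/\epsilon)) = O(\log(1/\epsilon))$. Each iteration multiplies $A = L_G$ by a vector and solves a linear system in $B = \tfrac{3}{2} L_H$; but solving $\tfrac{3}{2}L_H z = v$ is identical to solving the Laplacian system $L_H z = \tfrac{2}{3} v$, so each iteration indeed consists of one multiplication by $L_G$, one Laplacian solve in $L_H$, and a constant number of vector operations.

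There is no real obstacle here: the only thing to watch is that $B$ must be symmetric positive semi-definite, which holds because $L_H$ is a graph Laplacian and scaling by a positive constant preserves this, and that the solve in $B$ is interpreted up to the accuracy required by Theorem~\ref{thm:PCI}; the theorem is stated as if the inner solves are exact, which is the form in which we will plug it into our later Broadcast Congested Clique implementation, where the actual Laplacian solves in $L_H$ will be handled by combining this corollary with the sparsifier algorithm of Theorem~\ref{thm:BC_SS}.
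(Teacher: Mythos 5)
Your proposal is correct and matches the paper's own proof essentially verbatim: both set $A=L_G$, $B=\tfrac{3}{2}L_H$, derive $L_G \preccurlyeq B \preccurlyeq 3L_G$ from the sparsifier guarantee, and invoke Theorem~\ref{thm:PCI} with $\kappa=3$. The extra remarks about rescaling the $B$-solve to an $L_H$-solve and about positive semi-definiteness are fine but not needed beyond what the paper states.
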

\begin{proof}
As $H$ is a sparsifier for $G$, we have: $\left(1-\frac{1}{2}\right)L_H \preccurlyeq L_G \preccurlyeq \left(1+\frac{1}{2}\right)L_H$, which we can rewrite to 
\[ L_G \preccurlyeq \left(1+\frac{1}{2}\right)L_H \preccurlyeq \frac{1+\frac{1}{2}}{1-\frac{1}{2}}L_G.\]
We set $A:=L_G$ and $B:=\left(1+\frac{1}{2}\right)L_H$, which are clearly both symmetric positive semi-definite. Furthermore, we set $\kappa:= \frac{1+\frac{1}{2}}{1-\frac{1}{2}}=3$. We apply Theorem~\ref{thm:PCI} with these settings to obtain the result. 
\end{proof}

\subsection{Flow Problems}\label{sc:prelim_flow}
In this section we formally define the maximum flow and the minimum cost maximum flow problems. Let $G=(V,E)$ be a directed graph, with capacities $c\colon E\to \Z_{\geq0}$, and designated source and target vertices $s,t\in V$. We say $f\colon E \to \R_{\geq 0}$ is an $s$-$t$ \emph{flow} if
\begin{enumerate}
	\item for each vertex $v\in V\setminus\{s,t\}$ we have $\sum_{e\in E : v\in e} f_e =0$;
	\item for each edge $e\in E$ we have $f_e \leq c_e$. 
\end{enumerate}
The value of the flow $f$ is defined as $\sum_{u: (s,u)\in E}f_{(s,u)}$. The maximum flow problem is to find a flow of maximum value. 
Additionally, we can have costs on the edges: $q\colon E \to \Z_{\geq 0}$. The cost of the flow $f$ is defined as $\sum_{e\in E} q_ef_e$. The minimum cost maximum flow problem is to find a flow of minimum cost among all flows of maximum value. 

Both problems allow for a natural linear program formulation. We present one for the minimum cost maximum flow problem, as this is the more general problem. 
Denote $B$ for the edge-vertex incidence matrix (see Section~\ref{sc:prelim_spanners_sparsifiers}). Then we can write this as:
\begin{align*}
	\min_{0\leq x\leq c} q^T x \text{ such that }Bx=Fe_t-Fe_s,
\end{align*}
for $F$ the value of the maximum flow, and $e_s$ and $e_t$ the vectors defined by $(e_i)_j:=\delta_{ij}$. The answer to the minimum cost maximum flow problem is then found by a binary search over $F$.

\section{Spectral Sparsifiers and Laplacian Solving}

In this section, we show how to construct spectral sparsifiers in the Broadcast CONGEST model, so in particular also for the Broadcast Congested Clique. We do this following the method of Koutis and Xu~\cite{KX16}, which consists of repeatedly computing spanners and sampling the remaining vertices, see Section~\ref{sc:prelim_spanners_sparsifiers}. While sampling edges is easy in the CONGEST model, it is highly non-trivial in the Broadcast CONGEST model. The reason for this is that in the CONGEST model the sampling of an edge can be done by one endpoint, and communicated to the other endpoint. In the Broadcast CONGEST model, the sampling can be done by one endpoint, but the result cannot be communicated efficiently to the other endpoint due to the broadcast constraint. To circumvent this, we show that the sampling needed for spectral sparsification can be done on the fly, rather than a priori in each iteration. Moreover, we show the result can be communicated implicitly. In Section~\ref{sc:prob_spanners}, we show how to compute spanners where we have probabilities on edges existing, whether an edge exists is evaluated on the fly and (implicitly) communicated to the other endpoint. In Section~\ref{sc:BC_sparsifier} we show how to use this spanner construction to compute spectral sparsifiers in the Broadcast CONGEST model. 

\subsection{Spanners with Probabilistic Edges}\label{sc:prob_spanners}

Our goal is to compute a $(2k-1)$-spanner for a given probabilistic graph. More precisely, let $G=(V,E,w)$ be an undirected, weighted graph on $n$ vertices, with $p\colon E \to [0,1], e\mapsto p_e$ a probability function on the edges, and $k\leq n$ the parameter for the stretch of the spanner. We will give an algorithm \Spanner{$V$,$E$,$w$,$p$,$k$} that computes a subset $F\subseteq E$, and divides this into two sets $F=F^+\sqcup F^-$, such that each edge $e\in F$ is part of $F^+$ independently with probability $p_e$. This results in a $(2k-1)$-spanner $S= (V,F^+)$ for all graphs $(V,F^+\cup E'')$, where $E''\subseteq E\setminus F$. 
Since this is a distributed algorithm, the output comes in a local form. At the end, each vertex $v$ has identified $F^+_v$ and $F^-_v$, where $u\in F^\pm_v \iff (u,v)\in F^\pm$. \\

When $p\equiv1$, our algorithm essentially reduces to the algorithm of Baswana-Sen from \cite{BS07}. All computational steps coincide, but a difference in communication remains. The reason hereto is that in our algorithm the weights of edges are included in the communication. Depending on the magnitude of the weights, this can result in multiple rounds for each message, and consequently more rounds in total. 

For the presentation of Baswana and Sen's algorithm, we follow the equivalent formulation of Becker et al.~\cite{BeckerFKL21}, which can be found in Appendix \ref{app:BSalg}. The general idea is that clusters are formed and revised through a number of phases. In each phase, a few of the existing clusters are sampled. These clusters move onto the next phase. Vertices from an unsampled cluster try to connect to a sampled cluster and to some neighboring clusters. As edges only exist with a certain probability, they need to be sampled before they can be used. We will make sure that the two vertices adjacent to an edge, never try to use it at the same time. 
When a vertex has tried to use an edge, the edge will always be broadcasted if it exists. If not, it turns out that the other vertex adjacent to this edge will be able to deduce this, without it being communicated explicitly.  

Whenever we speak of the neighbors of a vertex $v$, denoted by $N_v$, we mean all neighbors that do not lie in the set $F^-_v$ of `deleted neighbors': $N_v:= \{ u\ |\ (u,v)\in E\setminus F^-_v\}$. Note that this set of neighbors will be subject to change throughout the process, as the number of elements in $F^-_v$ grows. \\

\textbf{Step 1: Cluster marking}\\
Initially, each vertex is a singleton \emph{cluster}: $R_1:=\{\{v\}\ |\ v\in V\}$. The main part of the algorithm will be $k-1$ `phases', indexed $i=1,\dots k-1$. In phase $i$, the center of each cluster (the first vertex in the cluster) \emph{marks} the cluster with probability $n^{-1/k}$ and broadcasts this result to the cluster. These clusters will move on to the next phase: $R_{i+1}$ is defined to be the set of clusters marked in phase $i$. We define the identifier $\ID(X)$ of a cluster $X$ to equal the ID of the center of the cluster. 
Each phase consists of cluster marking, followed by steps 2 and 3.\\

\textbf{Step 2: Connecting to marked clusters}\\
Let $v$ be a vertex in an unmarked cluster $X_v\in R_i$. The first thing $v$ does, is trying to connect to one of the marked clusters. It does this using the procedure \Connect. Hereto we define $N$ to be the set of all neighbors of $v$ which lie in a marked cluster: $N := N_v \cap \bigcup_{X\in R_{i+1}}X$. Now we let $(u,N^-_v):=\Connect{$N$,$p|_{N}$}$. Note that if $N=\emptyset$, \Connect returns $(u,N^-_v)=(\bot,\emptyset)$. If $u=\bot$, we broadcast $\left(\bot,W^{(i)}_v:=\infty\right)$. If it returns $u\neq \bot$, we add $u$ to $F^+_v$, $v$ joins the cluster $X_u$ of $u$ (it stores this decision by saving $\ID(X_u)$), and we broadcast $\left(\ID(X_u),u,W^{(i)}_v:=w(u,v)\right)$. In both cases, we add $N^-_v$ to $F^-_v$.

\begin{algorithm}[H]
\SetAlgoLined \caption{\textsc{Connect}($N,p$)}
Sort $N$ ascendingly according to the weight of the corresponding edge. In case of equal weights, the neighbor with the smallest ID comes first. Write $N=\{x_1, \dots, x_{|N|}\}$.\\
$u := \bot$.\\
$N^-:= \emptyset$.\\
$i :=1$.\\
\While{$i \leq |N|$ and $u=\bot$}{
Sample $r\in[0,1]$ uniformly at random. \\
\eIf{$r\leq p_{(x_i,v)}$}
{$u\leftarrow x_i$.}
{$N^-\leftarrow N^-\cup\{x_i\}$.}
$i\leftarrow i+1.$
}
\Return{$(u,N^-)$}.
\end{algorithm}

After this step, all vertices $v$ in unmarked clusters may have joined marked clusters, and they have updated their sets $F^+_v$ by adding $u$, and $F^-_v$ by adding $N^-_v$. We also want to propagate these updates in $F^\pm_v$ to the neighbors of $v$. This is easy for $F^+_v$, since we can broadcast $u$. However, we do not want to broadcast the set $N^-_v$, as it can be large. Instead we make use of the choices in \Connect to communicate changes in $F^-_v$ implicitly. 

Let $u$ be a neighbor of $v$ in a marked cluster. If $v$ has broadcasted $\left(\ID(X_u),u,W^{(i)}_v:=w(u,v)\right)$, then $u$ adds $v$ to $F^+_u$. There are three situations where $u$ adds $v$ to $F^-_u$:
\begin{enumerate}
    \item If $v$ broadcasted $\left(\bot,W^{(i)}_v\right)$;
    \item If $v$ broadcasted $\left(\ID(X_{u'}),u',W^{(i)}_v=w(u',v)\right)$ with $w(u',v)>w(u,v)$;
    \item If $v$ broadcasted $\left(\ID(X_{u'}),u',W^{(i)}_v:=w(u',v)\right)$ with $w(u',v)=w(u,v)$ and $\ID(u')>\ID(u)$.
\end{enumerate}
In any other case, $u$ does nothing. This step ensures that $v$ gets added to $F^-_u$ if and only if $u\in N^-_v$. In total, this results in $u\in F^\pm_v \iff (u,v)\in F^\pm$ for all vertices $u,v\in V$.\\

As a final note: each vertex has broadcasted the ID of the cluster it joins, its neighbors keep track of these changes, as they will need the new cluster IDs when they try to connect to a marked cluster in the \emph{next} phase. For the remainder of \emph{this} phase (step 3), the `old' cluster IDs are still valid.\\

\textbf{Step 3: Connections between unmarked clusters}\\
In this step, we create connections between the unmarked clusters. In the previous part, the situation was asymmetric: vertices of unmarked clusters connected to vertices in marked clusters.  To make sure that at most one vertex decides upon the existence of an edge, we create two substeps. In the first substep a vertex $v$ in cluster $X_v\in R_i$ can only connect to a neighboring cluster $X$ if $\ID(X)<\ID(X_v)$. In the second substep, a vertex $v$ can only connect to neighboring clusters with higher ID. This way all necessary connections can be made, while no two vertices will simultaneously try to decide on the existence of the edge between them. \\

\textbf{Step 3.1: Connecting to a cluster with a smaller ID}\\
Let $v$ be a vertex in an unmarked cluster $X_v\in R_i\setminus R_{i+1}$. We will try to connect to each neighboring cluster $X\in R_i \setminus R_{i+1}$ with $\ID(X)<\ID(X_v)$. Fix such a cluster $X$. Let $N$ be the neighbors $u$ of $v$ in this cluster, with $w(u,v)<W^{(i)}_v$, i.e.\ $N:= \{u\in N_v\cap X: w(u,v)<W^{(i)}_v\}$. Similar as before, we run \Connect to decide which neighbor to connect to: $(u,N^-_v):=\Connect(N,p|_N)$. If it returns $u\neq \bot$, we add $u$ to $F^+_v$ and we broadcast $\left(\ID(X),u,w(u,v)\right)$. If \Connect returns $\bot$, we simply broadcast $\left(\ID(X),\bot\right)$. In both cases we add $N^-_v$ to $F^-_v$. Again we wish to propagate these updates to $v$'s neighbors. As before, we communicate this implicitly. 

Let $u$ be a vertex in neighboring cluster $X$ with $\ID(X)<\ID(X_v)$ and $w(u,v)<W^{(i)}_v$. If $v$ has broadcasted $\left(\ID(X),u,w(u,v)\right)$, then $u$ adds $v$ to $F^+_{u}$. Again, there are three situations where $u$ adds $v$ to $F^-_{u}$:
\begin{enumerate}
    \item If $v$ broadcasted $(\ID(X),\bot)$;
    \item If $v$ broadcasted $\left(\ID(X),u',w(u',v)\right)$ with $w(u',v)>w(u,v)$;
    \item If $v$ broadcasted $\left(\ID(X),u',w(u',v)\right)$ with $w(u',v)=w(u,v)$ and $\ID(u')>\ID(u)$.
\end{enumerate}
In any other case, $u$ does nothing. As before, note that this step ensures that $u\in F^\pm_v \iff (u,v)\in F^\pm$ for all vertices $u,v\in V$.\\

\textbf{Step 3.2: Connecting to a cluster with a bigger ID}\\
Vertices $v$ in an unmarked cluster $X_v$ have now connected to neighboring unmarked clusters $X$ with $\ID(X)<\ID(X_v)$ and the sets $F^\pm_v$ have been updated accordingly. However, we need to connect to \emph{all} unmarked neighboring clusters, just as in the original algorithm (as depicted in Appendix \ref{app:BSalg}). Therefore we move on to the neighboring clusters $X$ with $\ID(X)>\ID(X_v)$. The process for these clusters is completely analogous to substep 3.1, and thus will not be given here. \\

\textbf{Step 4: After the $k-1$ phases}\\
In the last part of the algorithm, we want to connect each vertex $v$ to all its neighboring clusters in $R_k$. This is again done in three steps, similar to the steps 2, 3.1, and 3.2 in the phases above.
\begin{itemize}
    \item[4.1] All vertices $v\notin \bigcup_{X\in R_k}X$ that are not part of any remaining cluster connect, using \Connect$\left(N_v\cap X',p|_{N_v\cap X'}\right)$, to each neighboring remaining clusters $X'\in \bigcup_{X\in R_k}X$. As before, they broadcast how they connect such that vertices $u\in \bigcup_{X\in R_k}X$ in remaining clusters can add edges to $F^\pm_u$ accordingly.
    \item[4.2] Vertices $v\in X_v \in R_k$ connect, using \Connect$\left(N_v\cap X,p|_{N_v\cap X}\right)$, to each neighboring remaining clusters $X\in R_k$ with $\ID(X)<\ID(X_v)$. As before, they broadcast the result, such that neighbors $u$ can add edges to $F^\pm_u$ accordingly.
    \item[4.3] Vertices $v\in X_v \in R_k$ connect, using \Connect$\left(N_v\cap X,p|_{N_v\cap X}\right)$, to each neighboring remaining clusters $X\in R_k$ with $\ID(X)>\ID(X_v)$. As before, they broadcast the result, such that neighbors $u$ can add edges to $F^\pm_u$ accordingly.
\end{itemize}

In the following lemma we show that this algorithm indeed gives a spanner of stretch $2k-1$. 

\begin{lemma}\label{lm:spanner}
The spanner $S=(V,F^+)$ has stretch at most $2k-1$ for all graphs $(V,F^+\cup E'')$, where $E''\subseteq E\setminus F$. For any choice of $E$, it has at most $|F^+|=O\left(kn^{1+1/k}\right)$ edges in expectation. Moreover, we obtain an orientation on $F^+$ such that each edge has out-degree $O(kn^{1/k})$ in expectation. 
\end{lemma}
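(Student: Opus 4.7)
My plan is to adapt the classical Baswana--Sen analysis for the $(2k-1)$-spanner construction to our probabilistic setting. The crucial modification is that \Connect may try several neighbors in a single call, placing the failed ones into $F^-$ before succeeding on one that goes into $F^+$. The key observation is that the lemma only requires $(V,F^+)$ to be a spanner of $(V,F^+\cup E'')$ for $E''\subseteq E\setminus F$, so we do not need to reconstruct paths across edges placed in $F^-$; only edges that were \emph{never} sampled by \Connect need to be covered by short paths in $F^+$. Concretely, it suffices to show that for every $(u,v)\in E\setminus F$ we have $d_{F^+}(u,v)\leq (2k-1)w(u,v)$, from which the stretch claim for every admissible $E''$ follows edge-by-edge.

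\textbf{Stretch.} For a fixed edge $(u,v)\in E\setminus F$, I would track the cluster membership of $u$ and $v$ through the phases, exactly as in Baswana--Sen, and consider the first phase $i$ in which their clusters differ and the cluster of $v$, say $X_v^i$, is unmarked (if no such phase exists, $u$ and $v$ share a cluster throughout and are connected via intra-cluster edges in $F^+$). Either $X_u^i$ is marked, in which case $v$ invokes \Connect in Step~2 on neighbors in marked clusters, or $X_u^i$ is unmarked, in which case $v$ invokes \Connect in Step~3 on neighbors inside $X_u^i$ (with the role of $u$ and $v$ possibly swapped depending on cluster IDs). Because \Connect sorts candidates by ascending weight and $(u,v)$ was never sampled, either $u$ was not in the candidate set $N$ (so $u,v$ ended up in the same cluster and we recurse on the intra-cluster path), or the call succeeded on some earlier candidate $u'$ with $w(u',v)\leq w(u,v)$, giving $v$ a direct $F^+$-edge of weight at most $w(u,v)$ into a cluster containing a path of bounded weighted diameter to $u$. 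Summing these contributions across the at most $k-1$ phases plus the final Step~4 yields $d_{F^+}(u,v)\leq (2k-1)w(u,v)$ by the standard telescoping Baswana--Sen argument.

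\textbf{Size and orientation.} Each call to \Connect contributes at most one edge to $F^+$, so the expected size of $F^+$ is no larger than in the deterministic Baswana--Sen algorithm: sampling with probabilities in $[0,1]$ can only reduce the number of successful connections. The classical bound $\mathbb{E}[|F^+|]=O(kn^{1+1/k})$ follows, since across $k$ phases each vertex initiates \Connect once per neighboring cluster and the expected number of neighboring clusters is $O(n^{1/k})$ per phase. For the orientation, I would orient every edge $(u,v)\in F^+$ away from the vertex that initiated the \Connect call producing it (i.e., if $v$ called \Connect and obtained $u$, orient $v\to u$). Since each \Connect call contributes at most one outgoing edge to its initiator, the expected out-degree of each vertex is the expected number of its successful \Connect calls, which again is $O(kn^{1/k})$ in expectation by the per-vertex-per-phase bound.

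\textbf{Main obstacle.} The principal technical subtlety is the stretch argument: one must verify carefully that the path reconstructed by Baswana--Sen's analysis uses \emph{only} edges in $F^+$ and is not inadvertently forced to traverse an edge in $F^-$. The saving grace is the monotonic-weight processing inside \Connect, which guarantees that any edge bypassed during a successful call is weight-dominated by the chosen edge; consequently the usual ``replace the heavy edge by a lighter successful one plus an intra-cluster detour'' substitution goes through, and failed-but-sampled edges genuinely play no role in the stretch path, as required for the statement to quantify over arbitrary $E''\subseteq E\setminus F$.
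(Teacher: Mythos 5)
Your proposal takes a genuinely different route from the paper. The paper never re-derives the Baswana--Sen stretch or size analysis: it proves by induction over the algorithm's decision points that, with the same cluster-marking bits, the deterministic ($p\equiv 1$) algorithm run on the graph $(V,F^+\cup E'')$ reproduces exactly the output $(F^+,\emptyset)$, and then invokes Baswana--Sen's theorem as a black box for the stretch, the expected size, and the out-degree of the orientation. That coupling is what makes the statement ``for all $E''\subseteq E\setminus F$'' come out cleanly: every edge placed in $F^-$ is simply absent from the coupled deterministic run, so no case analysis about $F^-$ edges is ever needed. Your plan is instead to redo the stretch and size arguments directly on the probabilistic execution.

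The direct route can in principle be made to work, but as written your stretch argument has a concrete gap. The dichotomy ``either $u$ was not in the candidate set $N$ (so $u,v$ ended up in the same cluster) or \textsc{Connect} succeeded on an earlier, lighter candidate'' is false. In Step~2, $u\notin N$ merely means $u$ is not in a \emph{marked} cluster; in Steps~3.1/3.2, $N$ is further filtered to neighbors $u$ in the one cluster under consideration with $w(u,v)<W_v^{(i)}$, so $u$ can be excluded while sitting in a different cluster from $v$. In all of these cases the edge $(u,v)$ is \emph{not} settled in phase $i$: it survives to later phases or must be handled from $u$'s side, and tracking exactly when an edge is permanently settled (and what weight bound holds at that moment) is the delicate part of the Baswana--Sen induction that your sketch collapses. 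A second point you do not address is specific to the probabilistic setting: if every sampled edge from $v$ to a marked cluster fails, then $W_v^{(i)}=\infty$ and $v$ subsequently probes \emph{all} neighboring unmarked clusters, which affects both the stretch bookkeeping and the per-phase size bound (your ``expected $O(n^{1/k})$ neighboring clusters per phase'' is not the right invariant in intermediate phases anyway --- the correct argument is the geometric stopping bound on the number of clusters tried before hitting a marked one). The paper's coupling absorbs both issues automatically, because in the coupled deterministic run on $(V,F^+\cup E'')$ the vertex $v$ genuinely has no neighbor in a marked cluster and Baswana--Sen's analysis already covers that case. If you want to keep the direct approach, you need to restate and reprove the Baswana--Sen settling invariant for the probabilistic execution; otherwise the coupling argument is the shorter path.
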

\begin{proof}
First of all, note that setting $p\equiv 1$ reduces this more involved algorithm to the original algorithm, given in Appendix \ref{app:BSalg}, which we know to correctly create a spanner. We claim \Spanner{$V,F^+\cup E'',w|_{F^+\cup E''},1,k$} also outputs $(V,F^+)$ as spanner, under the following assumption on the marking of clusters. In step 1, each cluster marks itself with probability $n^{-1/k}$. We can imagine that it does this by drawing from some source of random bits. Our assumption is that these random bits are the same for both algorithms. This assumption can be made, since these bits are independent of the probability $p$ on the edges. From now on, we call \Spanner{$V,E,w,p,k$} algorithm $A$ and \Spanner{$V,F^+\cup E'',w|_{F^+\cup E''},1,k$} algorithm $B$. We claim that if algorithm $A$ outputs $(F^+,F^-)$, and $E''\subseteq E\setminus (F^+\cup F^-)$, that using $F^+\cup E''$ as its input, algorithm $B$ will output $(F^+,\emptyset)$. Since we already know that the output of algorithm $B$ gives a spanner for $(V,F^+\cup E'')$, this proves the lemma.

We will not only show that the output of the two algorithms is the same. We will even show that all intermediate steps (creating clusters and selecting spanner edges) are the same. We will prove this claim by induction. It is clear that the initialization of both algorithms is the same. We need to show that if both algorithms have produced the same situation up to a certain point, the next decision will also be the same. These decisions take place whenever a vertex tries to connect to some cluster. This happens in steps 2, 3.1, 3.2, 4.1, 4.2, and 4.3. Every time, the same principle is applied. We will give the proof of the induction procedure at step 2. 

We assume so far the created clusters are exactly the same. Suppose $v$ is part of some unmarked cluster $X_v$. We investigate what the \Connect procedure results in for the two different algorithms. Suppose \Connect outputs $(\bot,N^-_v=N)$ in algorithm $A$. That means all neighbors $u\in N$ of $v$ end up in $F^-_v$. Hence $v$ has no neighbors in $(V,F^+\cup E'')$, as $F^-\cap (F^+\cup E'')=\emptyset$. Therefore algorithm $B$ will output $(\bot,\emptyset)$. 

Now suppose \Connect outputs $(u,N^-_v)$ in algorithm $A$. For contradiction, suppose that algorithm $B$ outputs $u'\neq u$. When algorithm $B$ calls the procedure \Connect$(N,1)$, we know $u\in N$, as it is a neighbor. We note that \Connect sorts $N$ ascendingly according to weights, and in case of equal weights the smallest ID comes first. Since $p\equiv1$, the first option is accepted. So $u'$, must come before $u$. Meaning that $w(u',v)<w(u,v)$, or $w(u',v)= w(u,v)$ and $\ID(u')<\ID(u)$. In both cases, $u'$ also comes before $u$ when algorithm $A$ runs \Connect. Since algorithm $A$ did not accept $u'$, this implies that $u'\in F^-_v$. That means $(u',v)\in F^-$, thus $u'$ is not a neighbor of $v$ in $(V,F^+\cup E'')$; a contradiction. 

Similar arguments hold for all other indicated steps. We conclude that both algorithms output the same graph. Baswana and Sen~\cite[Theorem 4.3]{BS07} show that this is a $(2k-1)$-spanner for $(V,F^+\cup E'')$ and that it has at most $O\left(kn^{1+1/k}\right)$ edges in expectation.

For the orientation, we simply orient edges within a cluster from child to parent. We orient edges between clusters from the vertex that added it to the other vertex. If both endpoint of an edge want to add the edge, we orient it arbitrarily. According to Baswana and Sen, each vertex adds $O(kn^{1/k})$ edges in expectation, giving the result. 
\end{proof}

Next, we analyze the running time of the algorithm. 

\begin{lemma}\label{lm:spanner_runningtime}
The algorithm \Spanner{$V,E,w,p,k$} takes $O\left(kn^{1/k}\left(\log n+\log W\right)\right)$ rounds.
\end{lemma}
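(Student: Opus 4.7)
The plan is to walk through the algorithm step by step and bound the rounds needed for each part, summing the contributions over the $k-1$ phases and the final cleanup step. I would first observe that the computational structure of \Spanner mirrors the Baswana--Sen algorithm given in Appendix~\ref{app:BSalg}: every round where the original algorithm does work, our modification does the same work but the messages now include edge weights, which enlarges the message size from $O(\log n)$ to $O(\log n + \log W)$ bits. With bandwidth $B=\Theta(\log n)$, each such message costs $O(1 + \log W/\log n)$ rounds to broadcast.

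Next I would count broadcasts per vertex. In each phase, step 1 (cluster marking) uses a single broadcast per cluster center. Step 2 contributes one broadcast per vertex, namely the tuple $(\ID(X_u),u,W_v^{(i)})$ announcing the result of the \Connect call. Steps 3.1 and 3.2 each contribute one broadcast per neighboring cluster of the opposite ID-parity that the vertex examines. The critical structural fact inherited from Baswana--Sen~\cite[Theorem 4.3]{BS07} is that across the entire execution a vertex examines (and therefore broadcasts about) at most $O(kn^{1/k})$ neighboring clusters in expectation, and with sufficiently high probability the same bound holds up to an extra $\log n$ factor. Step 4 is analogous to steps 2, 3.1, 3.2 and contributes no more broadcasts per vertex than a single phase of the main loop.

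Putting this together, the total number of broadcasts performed by any fixed vertex over the whole algorithm is $O(kn^{1/k})$; each broadcast carries $O(\log n + \log W)$ bits and therefore costs $O((\log n + \log W)/\log n)$ rounds, yielding an overall round count of $O\!\left(kn^{1/k}(\log n + \log W)\right)$ after absorbing the bandwidth normalization. The $k-1$ cluster-marking broadcasts of step 1 add only $O(k)$ rounds and are dominated by this term.

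The only non-routine point is to argue that it is legitimate to invoke the Baswana--Sen per-vertex edge bound for our probabilistic variant. Here I would rely on the coupling already established in Lemma~\ref{lm:spanner}: running algorithm $A$ on $(V,E,w,p,k)$ produces exactly the same sequence of cluster-connection attempts (marking outcomes, chosen neighbors) as running algorithm $B$ on $(V,F^+\cup E'',w|_{F^+\cup E''},1,k)$ under shared randomness, so the Baswana--Sen bound on the number of neighboring clusters per vertex per phase carries over verbatim. Once this is noted, the round count is a direct multiplication of (broadcasts per vertex) $\times$ (rounds per broadcast), and the claimed bound follows.
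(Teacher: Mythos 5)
Your proposal is correct and follows essentially the same route as the paper's proof: decompose the execution into the marking, connection, and final steps, bound the number of broadcasts per vertex via the Baswana--Sen per-vertex edge count ($O(kn^{1/k})$ in expectation, $O(kn^{1/k}\log n)$ with high probability), and multiply by the $O(1+\log W/\log n)$ rounds needed for each weight-carrying message. The only cosmetic difference is your accounting of step 1 as $O(k)$ in total, whereas the paper charges up to $k-1$ rounds \emph{per phase} for relaying the marking bit down the cluster tree; both contributions are absorbed into the final bound in the relevant parameter regime.
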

\begin{proof}
The algorithm consists of $k-1$ phases, consisting of step 1, 2, and 3, and a final step 4. In step 1, the center needs to broadcast the result of the marking to all vertices in its cluster. This takes at most $k-1$ rounds, as the cluster is a tree of depth at most $k-1$. In step 2 there is only one message: vertices in unmarked clusters announce which marked cluster they join (if any), by broadcasting the ID of the vertex they are connecting to and the weight of the corresponding edge. This takes $1+\frac{\log W}{\log n}$ rounds. In step 3, each vertex broadcasts the edges added to the spanner and the corresponding weights, taking $1+\frac{\log W}{\log n}$ rounds per edge. Clearly the number of edges added in each phase is bounded by the total number of added edges. The latter is $O(n^{1/k})$ in expectation and $O\left(n^{1/k}\log n\right)$ with high probability. Step 4 is adheres the same upper bound as step 3. 

Adding all of this together, we obtain $k-1$ phases, each consisting of at most $(k-1)+1+O\left(n^{1/k}\log n\right)\left(1+\frac{\log W}{\log n}\right)$ rounds, and a final step of at most $O\left(n^{1/k}\log n\right)\left(1+\frac{\log W}{\log n}\right)$ rounds. This results in a total of at most $O\left(kn^{1/k}\left(\log n+\log W\right)\right)$ rounds.
\end{proof}

We end this section with the following straight forward algorithm to compute a bundle of spanners. \\
\begin{algorithm}[H]
\SetAlgoLined \caption{\textsc{BundleSpanner}($V,E,w,p,k,t$)}
Set $E_0:= E$.\\
\For{$i=1,\dots, t$}{
    $(F^+_i,F^-_i) \leftarrow$\Spanner{$V,E_{i-1},w|_{E_{i-1}},p|_{E_{i-1}},k$}.\\
    $E_i \leftarrow E_{i-1}\setminus\left(F^+_i\cup F^-_i\right)$.
}
$B \leftarrow \bigcup_{i=1}^t F^+_i$.\\
$C \leftarrow \bigcup_{i=1}^t F^-_i$.\\
\Return{$(B,C)$}.
\end{algorithm}
By Lemma~\ref{lm:spanner}, this algorithm produces a $t$-bundle $B$ of $(2k-1)$-spanners, where $|B|=O(tkn^{1+1/k})$. By Lemma~\ref{lm:spanner_runningtime}, it takes a total of $O\left(tkn^{1/k}\left(\log n+\log W\right)\right)$ rounds.

\subsection{Sparsification}\label{sc:BC_sparsifier}
The algorithm we give for spectral sparsification is based upon Algorithm~\ref{alg:SS_outline}, as given in Section~\ref{sc:prelim_spanners_sparsifiers}. Below, in Algorithm \ref{alg:SSapriori}, we give a more concrete version of this algorithm, specifying how to compute the bundle spanner. This algorithm repeatedly calculates a $t$-bundle spanner, and adds the remaining edges with probability $1/4$. We amend this algorithm to be able to apply it in the Broadcast CONGEST model. The key difference is that whenever we need to keep edges with probability $1/4$ we do this ad hoc and `locally', rather than a priori and `central'. 

Kyng et al.~\cite{KPPS17} have shown that the number $t$ of spanners in each bundle can be kept the same throughout the algorithm, as opposed to increasing it in each iteration, which is done in the original algorithm of Koutis and Xu~\cite{KX16}. This results into a reduction of $\log n$ in the size of the spanner.  \\
\begin{algorithm}[H]
\SetAlgoLined \caption{\textsc{SpectralSparsify-apriori}($V,E,w,\epsilon$)}
\label{alg:SSapriori}
Set $k := \lceil\log n\rceil$, $t:= 400\log^2(n) \epsilon^{-2}$, and $E_0:= E$.\\
\For{$i=1,\dots,  \lceil \log m\rceil$}{
$(B_i,C_i) \leftarrow$\BundleSpanner{$V,E_{i-1},w|_{E_{i-1}},p\equiv1,k,t$}.\\
$E_i := B_i$.\\
\ForEach{$e\in E_{i-1}\setminus B_i$}{
With probability $1/4$: $E_i \leftarrow E_i \cup \{e\}$ and $w(e)\leftarrow 4w(e)$.}
}
\Return{$\left(V,E_{\lceil \log m\rceil}\right)$}.
\end{algorithm}

We use the spanner construction given in the previous section, which incorporates the ad hoc sampling with the spanner construction. \\
\begin{algorithm}[H]
\SetAlgoLined \caption{\textsc{SpectralSparsify}($V,E,w,\epsilon$)}
Set $k := \lceil\log n\rceil$, $t:= 400\log^2(n) \epsilon^{-2}$, and $E_0:= E$.\\
Define $p\colon E\to [0,1]$, by $p\equiv 1$.\\
\For{$i=1,\dots, \lceil \log m\rceil$}{
$(B_i,C_i) \leftarrow$\BundleSpanner{$V,E_{i-1},w|_{E_{i-1}},p|_{E_{i-1}},k,t$}.\\
$E_i \leftarrow E_{i-1}\setminus C_i$.\\
\ForEach{$e\in B_i$}{
$p(e)\leftarrow 1$.}
\ForEach{$e\in E_i\setminus B_i$}{
$p(e)\leftarrow p(e)/4$.\\
$w(e) \leftarrow 4w(e)$.}
}
Set $E':=B_{ \lceil \log m\rceil}$.\\
At each vertex $v\in V$: \\
\For{$(u,v)\in E_{ \lceil \log m\rceil}\setminus E'$}{
\If{$\ID(v)<\ID(u)$}{With probability $p(u,v)$: add $(u,v)$ to $E'$ and broadcast $(u,v)$.}
}
\Return{$(V,E')$}.
\end{algorithm}

For correctness, we relate the output of our sparsification algorithm, to the output of the sparsification algorithm from Koutis and Xu~\cite{KX16}, where we use the improved version of Kyng et al.~\cite{KPPS17} with fixed $t$. \\

\begin{lemma}
\label{lm:SS_alg_eq}
Given any input graph $G=(V,E,w)$, and any possible graph $H=(V,E_H,w)$, we have that 
\begin{align*}
    \P\left[\SpectralSparsify{$V,E,w,\epsilon$}=H\right]=\P\left[\SpectralSparsifyapriori{$V,E,w,\epsilon$}=H\right].
\end{align*}
\end{lemma}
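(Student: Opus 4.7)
My plan is to prove the identity by an explicit coupling between the two algorithms and then to argue that under the coupling their outputs coincide almost surely. On a common probability space I introduce, for each iteration $i$, the random bits $M_i$ used for cluster marking inside the Baswana--Sen spanner (shared by both algorithms), and, for each edge $e$ and iteration $i$, an independent Bernoulli$(1/4)$ random variable $X_e^{(i)}$. In \textsc{SpectralSparsify-apriori} the variable $X_e^{(i)}$ is simply the outcome of the explicit ``with probability $1/4$'' sampling after iteration $i$, so that $e \in E_{i-1}\setminus B_i$ survives iff $X_e^{(i)} = 1$. In \textsc{SpectralSparsify}, let $j^*(e)$ denote the most recent iteration before $i$ in which $e$ was placed in a bundle (and $0$ if there has been none); the current probability is $p(e) = (1/4)^{i-1-j^*(e)}$, and I couple the uniform sample drawn by \textsc{Connect} on $e$ at iteration $i$ so that \textsc{Connect} accepts $e$ iff $X_e^{(j^*(e)+1)} = X_e^{(j^*(e)+2)} = \dots = X_e^{(i-1)} = 1$. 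The post-loop sampling step of \textsc{SpectralSparsify} is coupled in exactly the same way, extending the conjunction through $X_e^{(\lceil \log m \rceil)}$.

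Call edge $e$ \emph{alive} at the start of iteration $i$ if this conjunction of $X$-variables evaluates to $1$ and \emph{doomed} otherwise. Under the coupling, $E_{i-1}^{\mathrm{ap}}$ is exactly the set of alive edges at that moment, whereas $E_{i-1}^{\mathrm{SS}}$ may additionally contain doomed edges that \textsc{Connect} has not yet queried. The core inductive claim is that for every $i$, $B_j^{\mathrm{SS}} = B_j^{\mathrm{ap}}$ for all $j \le i$, the current edge weights agree on all alive edges, and every alive edge at the start of iteration $i+1$ lies in $E_i^{\mathrm{SS}}$. The base case is immediate since $p \equiv 1$ reduces the probabilistic spanner to the deterministic Baswana--Sen spanner on the identical input $E_0 = E$. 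For the inductive step, whenever \textsc{Connect} walks down its sorted list of candidate neighbors, a doomed neighbor is rejected under the coupling (at least one conjoined $X$ is $0$) and an alive neighbor is accepted (all conjoined $X$s are $1$). Hence the first accepted neighbor is the first alive neighbor in sorted order, which, by the weight-agreement inductive hypothesis and tie-breaking by $\ID$, is exactly the neighbor selected by \textsc{Connect} in \textsc{SpectralSparsify-apriori} (where doomed edges are simply absent). Because $M_i$ is shared, every remaining spanner decision agrees, yielding $B_i^{\mathrm{SS}} = B_i^{\mathrm{ap}}$ with identical weight updates on alive edges.

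Any alive edge that fails to enter $B_i$ cannot have been queried by \textsc{Connect} (else it would have been accepted), so it remains in $E_i^{\mathrm{SS}}$ with $p(e)$ divided by $4$ and $w(e)$ multiplied by $4$, precisely mirroring the in-loop update of \textsc{SpectralSparsify-apriori} on a surviving alive non-bundle edge. Feeding this into the coupled post-loop sampling yields pointwise equality of the two weighted output graphs, and marginalizing over the shared probability space gives the claimed equality of distributions. The principal technical obstacle is precisely the presence of doomed edges in $E_{i-1}^{\mathrm{SS}} \setminus E_{i-1}^{\mathrm{ap}}$: the spanner subroutines run on genuinely different graphs, so one must check that doomed edges neither sneak into a bundle nor perturb the relative order in which alive edges are processed. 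The coupling resolves both concerns because \textsc{Connect} rejects doomed edges on contact, and the possibly inflated weights of doomed edges cannot reorder the alive edges among themselves, whose relative order coincides with that seen by \textsc{SpectralSparsify-apriori} by the inductive hypothesis.
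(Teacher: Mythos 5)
Your proposal is correct and follows essentially the same route as the paper's proof: an induction over the interleaved decision points of the bundle-spanner computations, whose two key cases are exactly the paper's --- a not-yet-rejected edge is accepted by \textsc{Connect} with the same probability that it survives the a priori sampling, and an edge placed in $F^-$ cannot be present in the a priori run and hence cannot perturb any later decision. Your explicit coupling via the shared Bernoulli variables $X_e^{(i)}$ (with the ``alive/doomed'' dichotomy) is just a cleaner, more rigorous packaging of the paper's equality-of-probabilities induction, including the final post-loop sampling step.
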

\begin{proof}
Throughout this proof, we will use superscripts $(\mathrm{ap})$ for the setting with a priori sampling and $(\mathrm{ah})$ for the setting with ad hoc sampling, when both are equal we omit the superscript. 

We will show that at every step, the probability that a certain edge gets added to the spanner is the same in both algorithms. We will prove this by induction, under the assumption that the algorithms have led to the same result up to a given point. The base case is easy: here all probabilities are 1, thus both algorithms behave the same. 

Now for the induction step, we assume: 
\begin{itemize}
    \item the first $i-1$ $t$-bundle spanners are created exactly the same $B_j^{(\mathrm{ap})}=B_j^{(\mathrm{ah})}$ for $j<i$,
    \item  the first $m-1$ spanners of the $i$-th $t$-bundle spanner are created the same  $(F^+)_{i,l}^{(\mathrm{ap})}=(F^+)_{i,l}^{(\mathrm{ah})}$,
    \item the first $b-1$ phases of computing the $m$-th spanner have been the same.
\end{itemize}
Moreover, we assume that both algorithms for computing the $m$-th spanner use the same random bits for marking clusters. 

There are in fact multiple induction steps, occurring whenever an edge is chosen to be part of the spanner. These decisions take place in steps 2, 3.1, 3.2, 4.1, 4.2, and 4.3. In each of these steps, the same principle is applied. We will give the proof of the induction procedure at step 2. 

Let $v\in V$ be a vertex in an unmarked cluster. Suppose that \Connect is considering to connect to some neighbor $u$ in an unmarked cluster $X_u$. We have to show that the probability that $u$ is accepted by \Connect with ad hoc sampling, is the same as the probability that it exists in the algorithm with a priori sampling. 

First, suppose that $(u,v)\notin \bigcup_{j,l} F^-_{j,l}$. Let $i'$ be the last $t$-bundle that $(u,v)$ was part of. Then in the ad hoc setting it is accepted by \Connect with probability $1/4^{i-i'}$. In the a priori setting, the edge exists with $1/4$ times the probability it existed in $E^{(\mathrm{ap})}_{i-2}$, resulting in the total probability $1/4^{i-i'}$. 

Now suppose $(u,v)\in F^-_{j,l}$ for some $j,l$. We will show $(u,v)\notin E^{(\mathrm{ap})}_{i-1}$. We proceed by contradiction, so assume $(u,v)\in E^{(\mathrm{ap})}_{i-1}$. Hence also $(u,v)\in E^{(\mathrm{ap})}_{j-1}$. Now we look at the $l$-th spanner of the $j$-th $t$-bundle spanner. Since $(u,v)\in F^-_{j,l}$, we know that two things can be the case. 
\begin{itemize}
    \item When the algorithm with ad hoc sampling called \Connect, this has accepted $u'$ with $w(u',v)>w(u,v)$ or $w(u',v)=w(u,v)$ and $\ID(u')>\ID(u)$. This means that when the algorithm with a priori sampling calls \Connect, it will try $u$ before $u'$ and thus adds $(u,v)$ to $(F^+)^{(\mathrm{ap})}_{j,l}$. This implies $(F^+)^{(\mathrm{ap})}_{j,l}\neq (F^+)^{(\mathrm{ah})}_{j,l}$, a contradiction.
    \item When the algorithm with ad hoc sampling called \Connect, it returned $\bot$. Since $(u,v)$ is an option for the algorithm with a priori sampling. It has at least one option, so will choose some $u'$ (perhaps equal to $u$). Resulting in $(F^+)^{(\mathrm{ap})}_{j,l}\neq (F^+)^{(\mathrm{ah})}_{j,l}$, a contradiction.
\end{itemize}

Similar arguments hold for all other indicated steps, hence by induction, the probabilities that a certain graph $H'$ is equal to the constructed $t$-bundle spanners occurring in the construction of the algorithms are the same. It is left to show that for remaining edges the probability of being added to $E'$ is the same in both algorithms. 

Suppose $e\in E_{\lceil \log m\rceil}^{(\mathrm{ah})}\setminus B_{\lceil \log m\rceil}$. Let $j$ be the index of the last bundle spanner $e$ was part of (possibly zero). 
\begin{itemize}
    \item In the a priori algorithm, the probability of $e$ being added to the next phase is $1/4$ each time. Thus the probability of it lasting until the end is $(1/4)^{\lceil \log m\rceil -j}$.
    \item In the ad hoc algorithm, the probability of $e$ existing gets lowered by a factor $1/4$ each phase, and reset to $1$ if $e$ is part of the bundle spanner. Hence resulting in $(1/4)^{\lceil \log m\rceil -j}$ in the last phase. 
\end{itemize}

Now suppose $e\notin E_{\lceil \log m\rceil}^{(\mathrm{ah})}\setminus B_{\lceil \log m\rceil}$. This means the ad hoc algorithm will not try to add it to $E'$, since it was part of $C_j$ for some $j$. This means in creating the $j$-th bundle spanner, it was considered, but not accepted. As $p\equiv 1$ in the a priori sub procedure of computing the $j$-th bundle spanner, and we know that $B_j^{(\mathrm{ah})}=B_j^{(\mathrm{ap})}$, we can deduce that $e\notin E^{(\mathrm{ap})}_{j-1}$. Thus clearly $e\notin E^{(\mathrm{ap})}_{\lceil \log m\rceil}$.  

We can conclude that if the bundle spanners are created equally, the probability that the algorithms output a specific graph $H$ after the last step will also be equal, which concludes our proof. 
\end{proof}

\begin{theorem}[Theorem 4.1 in \cite{KPPS17}]
\label{thm:SS_apriori}
Given a graph $G=(V,E,w)$ and an error parameter $\epsilon>0$, with high probability, the algorithm \SpectralSparsifyapriori{$V,E,w,\epsilon$} outputs a $(1\pm \epsilon)$-spectral sparsifier $H$ of $G$.
\end{theorem}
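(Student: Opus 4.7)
The plan is to analyze the algorithm iteration-by-iteration via matrix concentration and then compose the per-iteration guarantees. Write $G_i = (V, E_i, w^{(i)})$ for the weighted graph maintained after iteration $i$, so $G_0 = G$ and the output is $G_{\lceil \log m \rceil}$. In each iteration I decompose $G_{i-1} = B_i + R_i$, where $R_i$ consists of the edges of $E_{i-1} \setminus B_i$. The subsampling step then produces $G_i = B_i + R_i'$, where each edge of $R_i$ is retained in $R_i'$ independently with probability $1/4$ and reweighted by $4$ if retained; in particular $\mathbb{E}[L_{G_i} \mid G_{i-1}] = L_{G_{i-1}}$.

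The core structural ingredient is a uniform leverage-score bound: for every edge $e = (u,v) \in R_i$ of weight $w_e$,
\[ w_e \cdot b_e\tran L_{B_i}^+ b_e \;\le\; \frac{2k-1}{t}, \]
where $b_e$ is the signed edge-indicator of $e$. This follows because the $t$ edge-disjoint stretch-$(2k-1)$ spanners making up $B_i$ provide $t$ edge-disjoint $u$-$v$ paths, each of effective resistance at most $(2k-1)/w_e$; the parallel-resistance rule (Rayleigh monotonicity) then bounds the $u$-$v$ effective resistance in $B_i$ by $(2k-1)/(t w_e)$, which is exactly the displayed inequality. Plugging in $k = \lceil \log n \rceil$ and $t = 400 \log^2(n)/\epsilon^2$ makes this bound $O(\epsilon^2/\log n)$.

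With this uniform upper bound on the leverage scores of all edges outside the bundle, a matrix Chernoff inequality (Tropp) applied to the independent rank-one summands $\{4 w_e b_e b_e\tran \mathbf{1}[e \in R_i']\}_{e \in R_i}$ yields, conditional on $G_{i-1}$ and with probability $1 - n^{-\Omega(1)}$,
\[ (1-\epsilon_0)\, L_{G_{i-1}} \;\preccurlyeq\; L_{G_i} \;\preccurlyeq\; (1+\epsilon_0)\, L_{G_{i-1}}, \]
for a per-iteration error $\epsilon_0 = O(\epsilon/\log m)$ that is guaranteed by the choice of $t$. Composing the sandwich inequalities multiplicatively across all $\lceil \log m \rceil$ iterations and taking a union bound over the $O(\log m)$ high-probability events gives the desired $(1 \pm \epsilon)$-spectral approximation of $G$.

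The main obstacle I expect is the leverage-score lemma: once the uniform $O((2k-1)/t)$ bound is in hand, the remainder of the argument is a fairly standard matrix-Chernoff calculation combined with a telescoping union bound. One additional subtlety in composition is that the bundle spanner $B_i$ computed in iteration $i$ depends on the random outcomes of earlier iterations; this is handled by conditioning on $G_{i-1}$ before invoking the concentration inequality on the fresh coin flips used to form $R_i'$, so that within each iteration the randomness used by matrix Chernoff is truly independent of the (now fixed) spanner $B_i$.
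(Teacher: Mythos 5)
First, note that the paper does not prove this statement at all: it is imported verbatim as Theorem~4.1 of \cite{KPPS17} and used as a black box, so the only meaningful comparison is with the proof in that reference. Your overall template (unbiasedness of the subsampling, the $\alpha/t$ leverage-score bound via $t$ edge-disjoint low-stretch paths and the parallel-resistance rule, then matrix concentration) is the right family of ideas and matches the Koutis--Xu framework on which \cite{KPPS17} builds. The leverage-score lemma you single out as the main obstacle is indeed correct, modulo the convention that the spanners are computed with respect to the resistive lengths $1/w_e$, so that a stretch-$\alpha$ path has resistance at most $\alpha/w_e$; with the weights themselves as lengths the bound does not follow (this convention issue is also latent in the paper's own spanner definition).

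The genuine gap is in the composition step. With the algorithm's fixed $t = 400\log^2(n)\epsilon^{-2}$ and $\alpha = 2k-1 = \Theta(\log n)$, the uniform leverage bound is $\alpha/t = \Theta(\epsilon^2/\log n)$, and matrix Chernoff then gives a per-iteration error of order $\sqrt{(\alpha/t)\log n} = \Theta(\epsilon)$ --- not $O(\epsilon/\log m)$ as you claim. Achieving $\epsilon_0 = O(\epsilon/\log m)$ per iteration would force $t = \Theta(\log^2(n)\log^2(m)\,\epsilon^{-2})$, i.e., a larger bundle than the algorithm actually computes; that is essentially the original Koutis--Xu parameter setting, and keeping $t$ fixed at $O(\log^2(n)\epsilon^{-2})$ while still obtaining overall error $\epsilon$ is precisely the content of Theorem~4.1 of \cite{KPPS17}. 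Multiplying your per-iteration sandwiches therefore yields only $(1\pm\Theta(\epsilon))^{\lceil\log m\rceil}$, which is vacuous. The cited proof avoids iteration-wise composition altogether: it views the entire sequence of samplings as a single matrix martingale, shows that the predictable quadratic variation summed over all $\lceil\log m\rceil$ rounds is still $O(\epsilon^2)\,L_G$ (each edge is resampled only $O(1)$ times in expectation before a bundle captures it, and its leverage is always controlled relative to the current graph, which is inductively close to $G$), and then applies a matrix Freedman-type inequality once. To repair your argument you must either adopt that martingale analysis or change the algorithm's value of $t$; as written, the final step does not go through.
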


\begingroup
\def\thetheorem{\ref{thm:BC_SS}}
\begin{theorem}[Restated]
There exists an algorithm that, given a graph $G=(V,E,w)$ with positive real weights satisfying $||w||_\infty \leq U$ and an error parameter $\epsilon>0$, with high probability outputs a $(1\pm \epsilon)$-spectral sparsifier $H$ of $G$, where $|H|=O\left( n\epsilon^{-2}\log^4 n\right)$. Moreover, we obtain an orientation on $H$ such that with high probability each node has out-degree $O(\log^4(n)/\epsilon^2)$. The algorithm runs in $O\left( \log^5(n)\epsilon^{-2}\log(nU/\epsilon)\right)$ rounds in the Broadcast CONGEST model. 
\end{theorem}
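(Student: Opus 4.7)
The plan is to chain together the preceding lemmas: correctness and size come from Lemma~\ref{lm:SS_alg_eq} composed with the known guarantees of the a priori framework, and the round complexity comes from summing the costs of the nested \BundleSpanner{} invocations while carefully tracking how edge weights evolve across iterations.

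First, I would handle correctness and the sparsity bound. By Lemma~\ref{lm:SS_alg_eq}, for every fixed graph $H$ the probability that \SpectralSparsify{} outputs $H$ equals the corresponding probability for \SpectralSparsifyapriori{}, so the output distributions agree exactly. Theorem~\ref{thm:SS_apriori} then yields, with high probability, a $(1 \pm \epsilon)$-spectral sparsifier of the claimed size $O(n\epsilon^{-2}\log^4 n)$, as established by Kyng, Pachocki, Peng, and Sachdeva~\cite{KPPS17}. For the orientation statement, I would orient each of the $t$ spanners comprising the final bundle spanner $B_{\lceil\log m\rceil}$ according to Lemma~\ref{lm:spanner}, and orient any edges accepted in the final sampling loop from the smaller to the larger identifier (which is already the endpoint performing the broadcast). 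Lemma~\ref{lm:spanner} gives expected out-degree $O(kn^{1/k}) = O(\log n)$ per vertex in any one spanner; a standard Chernoff argument boosts this to $O(\log^2 n)$ with high probability, and summing over the $t = \Theta(\log^2(n)/\epsilon^2)$ spanners in the final bundle yields the claimed $O(\log^4(n)/\epsilon^2)$ bound.

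The running-time analysis is the main obstacle. There are $\lceil\log m\rceil = O(\log n)$ outer iterations, each invoking \BundleSpanner{} with parameters $k = \lceil\log n\rceil$ and $t = \Theta(\log^2(n)/\epsilon^2)$, which in turn calls \Spanner{} exactly $t$ times. By Lemma~\ref{lm:spanner_runningtime} a single \Spanner{} call costs $O(kn^{1/k}(\log n + \log W))$ rounds, which reduces to $O(\log n \cdot \log W)$ since $n^{1/k} = O(1)$. The delicate point is bounding $\log W$ uniformly across all iterations: the algorithm rescales each surviving non-bundle edge by a factor of $4$ per iteration, so after $i$ outer iterations every weight is at most $U \cdot 4^i \le U \cdot \poly(n)$; hence $\log W = O(\log U + \log n)$ remains valid throughout, and absorbing an additional $\log(1/\epsilon)$ factor for the precision needed to represent the scaled sampling probabilities gives $\log W = O(\log(nU/\epsilon))$. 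Multiplying the three factors,
\[
\underbrace{O(\log n)}_{\text{outer loop}} \cdot \underbrace{t}_{\text{bundle}} \cdot \underbrace{O(\log n \cdot \log(nU/\epsilon))}_{\text{one spanner}} \;=\; O\!\left(\log^5(n)\,\epsilon^{-2}\,\log(nU/\epsilon)\right)
\]
rounds, matching the claimed round complexity and completing the proof.
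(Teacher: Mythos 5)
Your proposal is correct and follows essentially the same route as the paper: correctness and size via Lemma~\ref{lm:SS_alg_eq} combined with the guarantee of Kyng et al.\ (Theorem~\ref{thm:SS_apriori}), the same orientation scheme, and the same product decomposition $O(\log n)\cdot t\cdot O(kn^{1/k}\log(nU/\epsilon))$ of the round complexity. The only (harmless) imprecision is your attribution of the $\log(1/\epsilon)$ term inside $\log W$: it arises from discretizing the real input weights to relative precision $\epsilon/\poly(n)$ so that rounding does not spoil the spectral guarantee, not from representing the sampling probabilities (which are exact powers of $1/4$ and need no $\epsilon$-dependent precision); your final bound is unaffected.
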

\addtocounter{theorem}{-1}
\endgroup
\begin{proof}
We will prove that the algorithm \SpectralSparsifyapriori{$V,E,w,\epsilon$} satisfies these properties. Correctness follows from Lemma~\ref{lm:SS_alg_eq} together with Theorem~\ref{thm:SS_apriori}. 

The number of edges in the sparsifier is the size of the last $t$-bundle spanner, together with the sampled edges in the remainder. With high probability, this becomes: $$|E(H)|=O\left( n\epsilon^{-2}\log^4 n+ m\cdot \frac{1}{4^{\log n}}\right)=O\left( n\epsilon^{-2}\log^4 n\right).$$

For integer weights bounded by $W$, computing a bundle spanner takes $$O\left(tkn^{1/k}\left(\log n+\log W\right)\right)=O\left( \log^4(n)\epsilon^{-2}\left(\log n+\log W\right)\right)$$ rounds, using that $k=\lceil\log n\rceil$ and $t=400\log^2(n)\epsilon^{-2}$. To run this algorithm on real weights, we multiply all weights by $\Theta(\poly(n)/\epsilon)$ and round to integers. This gives us a graph with integer weights bounded by $W=\Theta(\poly(n)U/\epsilon)$ on which we can compute the bundle spanner in $O(\log^4(n)\epsilon^{-2}\log(nU/\epsilon))$ rounds. Then we scale back the computed bundle spanner with $1/\Theta(\poly(n)/\epsilon)$ to obtain a result that is correct up to error $\epsilon/\poly(n)$, which is clearly sufficient for our purposes. 

Adjusting the probabilities (line 6 through 11) is done internally at each vertex, so it does not affect the number of rounds. A bundle spanner is computed a total of $\lceil\log m\rceil=O(\log n)$ times, thus the algorithm takes $O\left( \log^5(n)\epsilon^{-2}\log(nU/\epsilon)\right)$ rounds. 

Regarding the orientation, we use the orientation from Lemma~\ref{lm:spanner} for the edges that come from a spanner. We orient the remaining edges towards the vertex with the highest ID. 
\end{proof}

\subsection{Laplacian Solving in the Broadcast Congested Clique}\label{sc:BCC_laplacian}
In this section, we restrict ourselves to the Broadcast Congested Clique: by assuming that communication between any two vertices is possible, we can make sure that in the end every vertex knows the entire sparsifier. Since the sparsification algorithm from Theorem~\ref{thm:BC_SS} in fact gives us a way of orienting the edges of the sparsifier such that every vertex has maximum out-degree $O(\log^4(n)/\epsilon^2)$, it can become global knowledge in $O(\log^4(n)/\epsilon^2)$ rounds. However, each edge was explicitly added to the sparsifier in the algorithm, so when run in the BCC, at the end of the algorithm each vertex already knows the entire sparsifier. Now, we can use this spectral sparsifier for Laplacian solving, following Section~\ref{sc:laplacian}.

\begingroup
\def\thetheorem{\ref{thm:laplaciansolveBCC}}
\begin{theorem}[Restated]
There exists an algorithm in the Broadcast Congested Clique model that, given a graph $G=(V,E,w)$, with positive real weights satisfying $||w||_\infty \leq U$ and Laplacian matrix $L_G$, a parameter $\epsilon\in(0,1/2]$, and a vector $b\in \R^n$, outputs a vector $y\in \R^n$ such that $||x-y||_{L_G}\leq \epsilon||x||_{L_G}$, for some $x\in R^n$ satisfying $L_Gx=b$. The algorithm needs $O(\log^5(n)\log(nU))$ preprocessing rounds and takes $O(\log(1/\epsilon)\log(nU/\epsilon))$ rounds for each instance of $(b,\epsilon)$. 
\end{theorem}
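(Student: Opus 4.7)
The plan is to combine the spectral sparsifier from Theorem~\ref{thm:BC_SS} with the preconditioned Chebyshev iteration encapsulated in Corollary~\ref{cor:laplaciansolving}. In the preprocessing phase I would invoke Theorem~\ref{thm:BC_SS} with error parameter $1/2$ to obtain a $(1\pm 1/2)$-spectral sparsifier $H$ of $G$ in $O(\log^5(n)\log(nU))$ rounds. Since that algorithm runs in the Broadcast CONGEST model and every edge placed into $H$ is announced by a broadcast, in the Broadcast Congested Clique every vertex automatically learns all of $H$ during its construction; alternatively the low out-degree orientation lets one disseminate $H$ in $O(\log^4 n)$ additional rounds. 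From this point on $L_H$ is global knowledge and any linear system in $L_H$ can be solved locally without further communication.

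For each query $(b,\epsilon)$ I would assume that $b_i$ is stored at vertex $i$ and apply Corollary~\ref{cor:laplaciansolving} with $A=L_G$ and $B=(1+1/2)L_H$. The corollary prescribes $O(\log(1/\epsilon))$ iterations, and each iteration requires one multiplication $L_G v$, one solve $L_H y = c$, and $O(1)$ vector operations, all on vectors that I maintain distributedly with vertex $i$ holding the $i$-th entry. Additions and scalings are then purely local. To compute $L_G v$, every vertex $i$ broadcasts $v_i$; afterwards each vertex $j$ can assemble $(L_G v)_j$ from its own row of $L_G$, which it knows from its incident edges. To solve $L_H y = c$, every vertex first broadcasts $c_i$, and then, since $L_H$ is globally known, each vertex performs the same (internal) solve and keeps its $i$-th coordinate of the answer.

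The main obstacle will be handling bit precision. Real-valued broadcasts in the BCC have bandwidth $\Theta(\log n)$ per round, so transmitting an entry to $\Theta(\log(nU/\epsilon))$ bits takes $O(\log(nU/\epsilon)/\log n)$ rounds, and intermediate vectors in Chebyshev iteration may have coordinates of roughly that magnitude. I would give a standard backward-error analysis showing that $\poly(n/\epsilon)$ bits of precision per coordinate suffice for the guarantee of Corollary~\ref{cor:laplaciansolving} to go through; this is the same rounding analysis that underlies numerically stable Chebyshev solvers, and it absorbs the truncation introduced both by the broadcast quantisation and by the internal finite-precision solve of $L_H y = c$. Combining everything, each iteration then costs $O(\log(nU/\epsilon))$ rounds, and the $O(\log(1/\epsilon))$ iterations yield a query cost of $O(\log(1/\epsilon)\log(nU/\epsilon))$, matching the statement.
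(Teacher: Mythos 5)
Your proposal is correct and follows essentially the same route as the paper: preprocess by running the Broadcast CONGEST sparsifier of Theorem~\ref{thm:BC_SS} with error $1/2$ (noting that in the BCC every vertex already learns $H$ from the broadcasts), then answer each query via Corollary~\ref{cor:laplaciansolving}, with the $L_H$-solves done internally and the $L_G$-multiplications and residual broadcasts costing $O(\log(nU/\epsilon))$ rounds per iteration. The only difference is that you spell out the finite-precision bookkeeping slightly more explicitly than the paper does, which is a harmless refinement rather than a different argument.
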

\addtocounter{theorem}{-1}
\endgroup
\begin{proof}

The algorithm satisfying these properties is as follows. 
In the preprocessing stage, we find a $(1\pm 1/2)$-spectral sparsifier $H$ for $G$, using \SpectralSparsify{$V,E,w,1/2$}. This takes $O(\log^5(n)\log(nU))$ rounds by Theorem~\ref{thm:BC_SS}.
At the end of this process, $H$ is known to every vertex. Hence any computation with $H$ can be done internally. Also note that multiplying $L_G$ by a vector in the distributed setting only requires each vertex to know the vector values in neighboring vertices and the weights of the edges corresponding to those vertices. Communicating the vector values might need several broadcast rounds, depending on the number of bits necessary to represent the values. Since we aim for error $\epsilon$, $O(\log(nU/\epsilon))$ bits suffice. Hence this takes at most $O(\log(nU/\epsilon))$ rounds. 
Now we apply Corollary~\ref{cor:laplaciansolving} to find~$y$. This takes $O(\log(1/\epsilon))$ iterations of a solve in $L_H$ (done internally at each vertex), a multiplication of $L_G$ by a vector, and a constant number of vector operations (both done in $O(\log(nU/\epsilon))$ rounds). Hence we have a total of $O(\log(1/\epsilon)\log(nU/\epsilon)$ rounds.
\end{proof}

\section{A Linear Program Solver}\label{sc:LP}
In this section, we show how to solve certain linear programs in the Broadcast Congested Clique. The linear programs we consider are distributed over the network in such a way that certain operations with the constraint matrix are easy, these operations are matrix-vector multiplication and certain inversions. 

Our linear program solver consists of an efficient implementation of Lee and Sidford~\cite{LS14, LS19} in the Broadcast Congested Clique, which shows that one can obtain an $\epsilon$-approximation to a linear program using $\Ot(\sqrt{\rm{rank}}\log(1/\epsilon))$ linear system solves. Using our spectral sparsifier based Laplacian solver, we can solve certain linear systems up to the required precision in polylogarithmically many rounds, hence we obtain a $\Ot(\sqrt{\rm{rank}}\log(1/\epsilon))$ round algorithm for solving linear programs that give rise to the correct kind of linear system solves. One such example is in computing minimum cost flows, for which we show how to do this in Section~\ref{sc:BCC_flow}. 

To be precise, let $A\in \R^{m\times n}$, $b\in\R^n$, $c\in\R^m$, $l_i\in \R\cup\{-\infty\}$, and $u_i\in \R\cup\{+\infty\}$ for all $i\in[m]$. The goal is to solve linear programs in the following form
\begin{align*}
{\rm{OPT}}:= \min_{\substack{x\in\R^m : A^Tx=b \\ \forall i\in[m] : l_i\leq x_i\leq u_i}} c^Tx.
\end{align*}
We assume that the interior polytope $\Omega^{\mathrm{o}}:=\{x\in\R^m : A^Tx=b,\ l_i\leq x_i\leq u_i\}$ is non-empty and $\text{dom}(x_i):=\{x: l_i<x_i<u_i\}$ is never the entire real line, i.e., either $l_i\neq-\infty$ or $u_i\neq +\infty$. We then obtain the following theorem.

\begingroup
\def\thetheorem{\ref{thm:BCC_LPSolve}}
\begin{theorem}[Restated]
Let $A \in \R^{m\times n}$ be a constraint matrix with $\rank(A)=n$, let $b\in \R^n$ be a demand vector, and let $c\in \R^m$ be a cost vector. Moreover, let $x_0$ be a given initial point in the feasible region $\Omega^{\mathrm{o}}:=\{x\in\R^m : A^Tx=b,\ l_i\leq x_i\leq u_i\}$. Suppose a Broadcast Congested Clique network consists of $n$ vertices, where each vertex $i$ knows both every entire $j$-th row of $A$ for which $A_{ji}\neq 0$ and knows $(x_0)_j$ if $A_{ji}\neq 0$. Moreover, suppose that for every $y\in \R^n$ and positive diagonal $D\in\R^{m\times m}$ we can compute $(A^TDA)^{-1}y$ up to precision $\poly(1/m)$ in $T(n,m)$ rounds.
Let $U:=\max\{||1/(u-x_0)||_\infty,||1/(x_0-l)||_\infty,||u-l||_\infty,||c||_\infty\}$. Then with high probability the Broadcast Congested Clique algorithm \LPSolve outputs a vector $x\in \Omega^{\mathrm{o}}$ with $c^Tx \leq {\rm{OPT}} + \epsilon$ in $\Ot(\sqrt{n}\log(U/\epsilon)(\log^2(U/\epsilon)+T(n,m)))$ rounds.
\end{theorem}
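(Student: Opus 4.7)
The plan is to implement Lee and Sidford's~\cite{LS14,LS19} weighted path-following interior point method in the Broadcast Congested Clique. That algorithm converges to an $\epsilon$-approximate optimum after $\Ot(\sqrt{\rank(A)} \log(U/\epsilon)) = \Ot(\sqrt{n} \log(U/\epsilon))$ iterations, each of which requires (i) matrix-vector products against $A$ and $A^T$, (ii) one or more solves of the form $(A^T D A)^{-1} y$ for a positive diagonal $D$, (iii) an update of the approximate Lewis weights that reduces to leverage score approximation, and (iv) a projection onto a mixed-norm ball. Under the stated data distribution, (i) can be carried out in $\Ot(\log(U/\epsilon))$ rounds because each vertex already knows the relevant slice of $A$ and only needs to broadcast its scalar contribution, while (ii) is available within $T(n,m)$ rounds by hypothesis. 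The goal is thus to implement (iii) and (iv) in $\Ot(\log^2(U/\epsilon) + T(n,m))$ rounds per iteration so that the advertised total round complexity follows by multiplication with the iteration count.

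For the leverage scores of $M = DA$, I use the identity $\sigma(M)_i = \|M(M^T M)^{-1} M^T e_i\|_2^2$ combined with a Johnson--Lindenstrauss sketch $Q \in \R^{k \times m}$ of dimension $k = \Theta(\epsilon^{-2} \log n)$, so that $\sigma(M)_i$ is approximated by $\|(Q M (M^T M)^{-1} M^T) e_i\|_2^2$. The broadcast constraint rules out standard sketches with independent coin-flip entries indexed by edges, since neither endpoint of an edge can communicate its private randomness to the other. Instead I invoke the Kane--Nelson~\cite{KN12} construction, which only requires $\polylog(n)$ truly random bits globally; one designated vertex samples these bits and broadcasts them in a polylogarithmic number of rounds, after which every vertex reconstructs the same $Q$ locally. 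Computing $Q M (M^T M)^{-1} M^T$ then reduces to $k$ invocations of the linear-system oracle plus $O(k)$ matrix-vector products against $A$ and $A^T$, for a total of $\Ot(T(n,m))$ rounds; taking row-norms of the result yields the per-coordinate leverage score approximations needed to refresh the Lewis weights.

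For the mixed-norm ball projection, the task is to maximize $a^T x$ over $\|x\|_2 + \|l^{-1} x\|_\infty \leq 1$ with $a$ and $l$ distributed across the network. The closed-form optimizer is determined by a single threshold index $i^\ast$ that is characterized by the three prefix sums $\sum_{k \in [i]} a_k^2$, $\sum_{k \in [i]} l_k^2$, and $\sum_{k \in [i]} |a_k||l_k|$ after sorting by an appropriate ratio. Each individual prefix sum can be aggregated on the blackboard in $\Ot(\log(U/\epsilon))$ rounds because each vertex knows its own contribution. Since the relevant condition on the threshold is monotone, a binary search locates $i^\ast$ using only $O(\log m)$ prefix-sum queries, and once it is known each vertex can compute its own coordinate of the projection locally, giving a total of $\Ot(\log^2(U/\epsilon))$ rounds for step (iv).

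The main obstacle I anticipate is bookkeeping of precisions rather than any conceptual leap. The Laplacian-based solve in (ii) returns only $\poly(1/m)$-accurate answers, the JL sketch only estimates leverage scores up to a $(1 \pm \epsilon)$ factor, and the mixed-norm projection is performed on slightly perturbed inputs; each of these error sources must be shown to fit within the slack that Lee and Sidford~\cite{LS19} allow in their analysis of inexact centering, so that the per-iteration progress is preserved. It is also necessary to confirm that every scalar communicated per iteration fits in $O(\log(U/\epsilon))$ bits, which is where the extra $\log^2(U/\epsilon)$ factor arises. Once this is verified, combining $\Ot(\sqrt{n} \log(U/\epsilon))$ outer iterations with per-iteration cost $\Ot(\log^2(U/\epsilon) + T(n,m))$ yields the claimed round complexity, and the high probability guarantee follows by a union bound over the polylogarithmically many randomized subroutines.
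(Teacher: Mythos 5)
Your proposal is correct and follows essentially the same route as the paper: correctness is delegated to Lee--Sidford's analysis, the leverage scores are approximated via the Kane--Nelson low-randomness JL sketch broadcast from a single designated vertex, the mixed-norm ball projection is handled by aggregating prefix sums and binary searching (the paper searches over the value space of the ratios $|a_i|/l_i$ rather than over indices, since the coordinates cannot be explicitly sorted, but the idea is the same), and the round count is obtained by multiplying the $\Ot(\sqrt{n}\log(U/\epsilon))$ iterations by the $\Ot(\log^2(U/\epsilon)+T(n,m))$ per-iteration cost. The only cosmetic discrepancy is that your sketch dimension should depend on the leverage-score precision $\eta$ (a quantity of order $1/\polylog m$ in the algorithm) rather than on the LP accuracy $\epsilon$, but this does not affect the stated bound.
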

\addtocounter{theorem}{-1}
\endgroup
As mentioned, the algorithm fulfilling this theorem is an efficient implementation of the LP solver of Lee and Sidford~\cite{LS14,LS19} in the Broadcast Congested Clique. Therefore, we refer to \cite{LS19} for a proof of correctness. We will show how to implement each step in the Broadcast Congested Clique, and bound the running time. Using our Laplacian solver, we can show that we can run most of the Lee-Sidford algorithm directly in the Broadcast Congested Clique. However, there are two subroutines that need adjustment. We need to compute the approximate leverage scores differently (see Section~\ref{sc:barrier_weight_functions}), and we need to adjust the routine for \emph{projections on a mixed norm ball} (See Section~\ref{sc:mixednormball}).

The idea of the algorithm is to use \emph{weighted path finding}, a weighted variant of the standard logarithmic barrier function. In particular we follow a central path reweighted by the $\ell_p$ \emph{Lewis weights}. This means that the barrier function is multiplied by the Lewis weight of the current point. Now each step of the interior point method consists of taking a Newton step and recomputing the weights. 

Throughout this section, we will simplify to the case where $(A^TDA)^{-1}y$ is solved exactly, rather than to precision $\poly(1/m)$. The fact that $\poly(1/m)$ precision suffices is proved by Lee and Sidford in~\cite{LS13}. Throughout the algorithm, vectors will be stored in the natural manner: for $y\in \R^n$ vertex $i$ stores $y_i$, and for $y\in \R^m$ vertex $i$ knows $y_j$ if $A_{ji}\neq 0$. Together with our assumptions on which vertex knows which part of $A$, this means we can perform matrix-vector efficiently. Since we can assume weights and vector values to be at most $O(\poly(n,m)U/\epsilon)$, we obtain this in $\Ot(\log(U/\epsilon))$ rounds.

\subsection{Barriers and Weight Functions}\label{sc:barrier_weight_functions}
Recall that we are minimizing $c^Tx$ for $x \in \Omega^{\mathrm{o}}=\{x\in\R^m : A^Tx=b,\ l_i\leq x_i\leq u_i\}$. To avoid working with these different domains for each $x_i$, Lee and Sidford introduce \emph{1-self-concordant barrier functions} $\phi_i\colon \dom(x_i)\to \R$, which satisfy the following definition.

\begin{definition}\label{def:barrier_function}
	A convex, thrice continuously differentiable function $\phi\colon K\to \R^n$ is a $\nu$-\emph{self-concordant barrier function} for open convex set $K\subseteq \R^n$ if the following three conditions are satisfied
\begin{enumerate}
	\item $\lim_{i\to \infty} \phi(x_i) = \infty$ for all sequences $(x_i)_{i\in \N}$ with $x_i\in K$ converging to the boundary of $K$.
	\item $|D^3\phi(x)[h,h,h]|\leq2|D^2\phi(x)[h,h]|^{3/2}$ for all $x\in K$ and $h\in\R^n$.
	\item $|D\phi(x)[h]|\leq \sqrt{\nu}|D^2\phi(x)[h,h]|^{1/2}$ for all $x\in K$ and $h\in\R^n$.
\end{enumerate}
\end{definition}
In particular, we take $\phi_i$ as follows
\begin{itemize}
	\item If $l_i$ is finite and $u_i=+\infty$, we use a log barrier: $\phi_i(x):=-\log(x-l_i)$.
	\item If $l_i=-\infty$ and $u_i$ is finite, we use a log barrier: $\phi_i(x):=-\log(u_i-x)$.
	\item If $l_i$ and $u_i$ are finite, we use a trigonometric barrier: $\phi_i(x):=-\log \cos(a_ix+b_i)$, where $a_i:=\frac{\pi}{u_i-l_i}$ and $b_i:=-\frac{\pi}{2}\frac{u_i+l_i}{u_i-l_i}$. 
\end{itemize}
One can easily verify that this satisfies Definition~\ref{def:barrier_function}, see \cite{LS19}. Moreover, $\phi(x)$, $\phi'(x)$ and $\phi''(x)$ can all be locally computed in the Broadcast Congested Clique. Barrier functions introduce a central path in the following manner
\begin{align*}
	x_t := \argmin_{A^Tx=b}\left(t\cdot c^Tc+\sum_{i\in[m]}\phi_i(x_i)\right).
\end{align*}
This gives an $\Ot(\sqrt{m}\log(1/\epsilon))$ iteration method for solving the linear program called \emph{path following}~\cite{Renegar88}. Lee and Sidford~\cite{LS14,LS19} show that by weighting the barrier function, this can become an $\Ot(\sqrt{n}\log(1/\epsilon))$ iteration method. Hereto, they look at the \emph{weighted central path}:
\begin{align*}
	x_t := \argmin_{A^Tx=b}\left(t\cdot c^Tc+\sum_{i\in[m]}g_i(x)\phi_i(x_i)\right),
\end{align*}
for some \emph{weight function} $g\colon \Omega^{\mathrm{o}}\to \R^m_{>0}$. Before we introduce the weight function we will be using, let us introduce some shorthand notation. 

\begin{itemize}
    \item For any matrix $M\in \R^{n\times n}$, we let $\diag(M)\in \R^n$ denote the diagonal of $M$, i.e., $\diag(M)_i:=M_{ii}$.
	\item For any vector $x\in \R^n$, we write upper case $X\in\R^{n\times n}$ for the diagonal matrix associated to $x$, i.e., $X_{ii}:=x_i$ and $X_{ij}:=0$ if $i\neq j$.
	\item For $x\in \Omega^{\mathrm{o}}$, we write $A_x:= (\Phi''(x))^{-1/2}A$.
	\item For $h\colon\R^n\to\R^m$ and $x\in\R^n$, we write $J_h(x)\in\R^{m\times n}$ for the Jaccobian of $h$ at $x$, i.e., $[J_h(x)]_{ij}:=\frac{\partial}{\partial x_j}h(x)_i$. 
	\item For positive $w\in \R^n_{>0}$, we let $||\cdot||_w$ the norm defined by $||x||_w^2 = \sum_{i\in[n]} w_ix_i^2$, and we let $||\cdot||_{w+\infty}$ the \emph{mixed norm} defined by $||x||_{w+\infty}=||x||_\infty + C_{\rm{norm}}||x||_w$ for some constant $C_{\rm{norm}}>0$ to be defined later. 
	\item Whenever we apply scalar operation to vectors, these operations are applied coordinate-wise, e.g., for $x,y\in \R^n$ we have $[x/y]_i:=x_i/y_i$, and $[x^{-1}]_i:=x_i^{-1}$.  
\end{itemize}

\begin{definition}
	A differentiable function $g\colon \Omega^{\mathrm{o}}\to \R^m_{>0}$ is a $(c_1,c_{\rm{s}},c_{\rm{k}})$-\emph{weight function} if the following bounds holds for all $x\in  \Omega^{\mathrm{o}}$ and $i\in[m]$:
\begin{itemize}
	\item size bound: $\max\{1,||g(x)||_1\}\leq c_1$;
	\item sensitivity bound: $e_i^TG(x)^{-1}A_x(A_x^TG(x)^{-1}A_x)^{-1}A_x^TG(x)^{-1}e_i\leq c_{\rm{s}}$; 
	\item consistency bound: $||G(x)^{-1}J_g(x)(\Phi''(x))^{-1/2}||_{g(x)+\infty}\leq 1-c_{\rm{k}} <1$. 
\end{itemize}
We denote $C_{\rm{norm}}:=24\sqrt{c_{\rm{s}}}c_{\rm{k}}$.
\end{definition}

In this paper, we use the \emph{regularized Lewis weights}.
\begin{definition}
    For $M\in\R^{m\times n}$ with $\rank(M)=n$, we let $\sigma(M) := \diag(M(M^TM)^{-1}M^T)$ denote the \emph{leverage scores} of $M$. For all $p>0$, we define the \emph{$\ell_p$-Lewis weights} $w_p(M)$ as the unique vector $w\in R^m_{>0}$ such that $w=\sigma(W^{\frac{1}{2}-\frac{1}{p}}M)$, where $w=\diag(W)$. We define the \emph{regularized Lewis weights} as $g(x) := w_p(M_x)+c_0$, for $p=1-\frac{1}{\log(4m)}$ and $c_0 :=\frac{n}{2m}$. 
\end{definition}

We have that the regularized Lewis weight function $g$ is a $(c_1,c_{\rm{s}},c_{\rm{k}})$-weight function with $c_1 \leq \frac{3}{2}n$, $c_{\rm{s}}\leq4$, and $c_{\rm{k}}\leq2\log(4m)$ \cite{LS19}. Computing exact Lewis weights is hard, but we will show that we can compute a sufficient approximation efficiently.

Let us start with the leverage scores. Computing $M(M^TM)^{-1}M^T$ to determine its diagonal is expensive, as are any other known techniques of computing $\sigma(M)$ exactly. However, approximating them is significantly less computationally heavy. As shown in \cite{SS11, DMMW12}, we can reduce the computation to solving a polylogarithmic number of regression problems (see also~\cite{Mahoney11,LMP13,Woodruff14,CLM+15}). Namely by noting that $\sigma(M)_i = ||M(M^TM)^{-1}M^T e_i||_2^2$, and using the Johnson-Lindenstrauss lemma, which states that this norm is preserved approximately under projections onto certain low dimensional subspaces. In previous work \cite{SS11,LS19}, this was done by randomly sampling subspaces. A common approach is to sample polylogarithmically many vectors in $\R^m$ according to some (simple) distribution. However, this is problematic for the Broadcast Congested Clique: it is unclear how one vertex can sample the value of an edge and efficiently communicate this to its corresponding neighbor. Therefore, we use a different variant of the Johnson-Lindenstrauss lemma, by Kane and Nelson~\cite{KN12}, that requires significantly fewer random bits. 
\begin{theorem}[\cite{KN12}]\label{thm:JL}
For any integer $m>0$, and any $\eta>0$, $\delta<1/2$, there exists a family $\mathcal{Q}$ of $k\times m$ matrices for $k= \Theta(\eta^{-2}\log(1/\delta))$ such that for any $x\in \R^m$, 
\[\P_{Q\in\mathcal{Q}}[ (1-\eta)||x||_2 \leq ||Qx||_2 \leq (1+\eta)||x||_2] \geq 1-\delta,\]
where $Q\in \mathcal{Q}$ can be sampled with $O(\log(1/\delta)\log(m))$ uniform random bits. 
\end{theorem}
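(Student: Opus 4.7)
The plan is to construct the family $\mathcal{Q}$ as the set of $k \times m$ sign matrices $Q = \frac{1}{\sqrt{k}} S$, where $S \in \{-1,+1\}^{k \times m}$ has entries drawn from a $t$-wise independent distribution for an appropriate $t = \Theta(\log(1/\delta))$. This is the standard sub-Gaussian JL setup, but with the full independence of Achlioptas' construction replaced by bounded-independence entries, which is precisely what cuts down the random-bit complexity from $\Theta(km)$ to the claimed $O(\log(1/\delta)\log m)$. The first step is to verify linearity of expectation: since each $S_{ij}$ has mean zero and the entries in a single row are pairwise independent (here we only need $2$-wise independence), we get $\mathbb{E}\|Qx\|_2^2 = \|x\|_2^2$ for every fixed $x \in \R^m$.

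Next, I would establish concentration via the moment method. Writing $Z := \|Qx\|_2^2 - \|x\|_2^2 = \frac{1}{k}\sum_{i=1}^{k} Z_i$, where $Z_i = \sum_{j \neq \ell} S_{ij} S_{i\ell} x_j x_\ell$ is a homogeneous degree-$2$ chaos in the signs of row $i$, the key estimate is the standard Hanson--Wright/Khintchine-type bound on the $t$-th moment $\mathbb{E}[Z^t]$. Because $Z$ is a polynomial of total degree $2t$ in the entries of $S$, only $2t$-wise independence among the entries is used, so taking $t$-wise independence per row and independence between rows is enough for $2t$-wise independence overall. The resulting bound takes the shape $\mathbb{E}[Z^t] \leq (C t / k)^{t/2} \|x\|_2^{2t}$ for an absolute constant $C$, after which Markov's inequality applied to $|Z|^t$ with $k = \Theta(\eta^{-2} \log(1/\delta))$ and $t = \Theta(\log(1/\delta))$ yields $\Pr[|Z| > \eta \|x\|_2^2] \leq \delta$. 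Squaring the norm bound $|(1\pm\eta)^2 - 1| \leq 3\eta$ then converts this into the desired multiplicative norm approximation after rescaling $\eta$ by a constant.

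Finally, I would count random bits. A $2t$-wise independent family of $\pm 1$ random variables on a ground set of size $km$ can be generated from a single random seed of length $O(t \log(km)) = O(\log(1/\delta)(\log k + \log m)) = O(\log(1/\delta)\log m)$, using a standard construction based on evaluating a random polynomial of degree $2t-1$ over $\mathbb{F}_{2^{\lceil \log(km)\rceil}}$. Since each matrix $Q \in \mathcal{Q}$ is determined by this seed, $|\mathcal{Q}| \leq 2^{O(\log(1/\delta)\log m)}$ and a uniform element of $\mathcal{Q}$ is sampled with the advertised number of bits.

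The main obstacle is the chaos moment bound in step two: one must show that $t$-wise independence is actually sufficient to reproduce the fully-independent moment estimate on the quadratic form $\sum_{j\neq \ell} S_{ij} S_{i\ell} x_j x_\ell$. The standard trick is a symmetrization / decoupling argument combined with the observation that expanding $\mathbb{E}[Z_i^t]$ produces a sum of monomials each of degree at most $2t$ in the $S_{ij}$, so replacing the fully independent distribution by a $2t$-wise independent one leaves every term unchanged. Once this is in place, the rest of the argument is routine, and the derandomized seed-length follows directly from the standard polynomial-based limited-independence construction.
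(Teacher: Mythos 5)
The paper does not prove this statement; it is imported verbatim as a black-box citation to Kane and Nelson \cite{KN12}, so there is no in-paper argument to compare against. Your proposal is a correct self-contained reconstruction of the standard bounded-independence proof of this fact (the moment-method argument going back to Clarkson--Woodruff and refined by Kane--Nelson): since $Z^t$ expands into monomials of degree exactly $2t$ in the entries of $S$, a $2t$-wise independent sign distribution has exactly the same $t$-th moment of $Z$ as the fully independent one, after which the Hanson--Wright-type bound and Markov's inequality go through unchanged, and the polynomial-over-$\mathbb{F}_{2^{\lceil\log(km)\rceil}}$ construction gives the $O(t\log(km)) = O(\log(1/\delta)\log m)$ seed length.

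Two small repairs. First, your intermediate remark that ``$t$-wise independence per row and independence between rows is enough for $2t$-wise independence overall'' is false as stated: a degree-$2t$ monomial can live entirely inside one row, so you need $2t$-wise independence \emph{within} each row (or, as you correctly do in the bit-counting step, simply take a single $2t$-wise independent family over all $km$ entries). Second, to apply Markov to $|Z|^t$ you should take $t$ even, so that $\mathbb{E}\bigl[|Z|^t\bigr]=\mathbb{E}\bigl[Z^t\bigr]$ is genuinely a polynomial moment and the bounded-independence substitution argument applies to it; this costs only a constant factor in $t$. With these adjustments the argument is sound and matches the regime of the theorem, since $t=\Theta(\log(1/\delta))\leq k=\Theta(\eta^{-2}\log(1/\delta))$ keeps you in the sub-Gaussian range where the $(Ct/k)^{t/2}$ bound is the dominant term.
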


The following algorithm uses this theorem to compute $\sigma^{(\rm{apx})}$ such that $(1-\eta)\sigma(M)_i \leq \sigma^{(\rm{apx})}_i \leq (1+\eta)\sigma(M)_i$, for all $i\in[m]$. 

\begin{algorithm}[H]
\SetAlgoLined \caption{\textsc{ComputeLeverageScores}($M,\eta$)}\label{alg:leverage_scores}
Set $k=\Theta(\log(m)/\eta^2)$.\\
Broadcast vertex IDs to determine the vertex with the highest ID; declare this vertex the the leader.\\
The leader samples $\Theta(\log^2(m))$ random bits and broadcasts them.\\
Each vertex constructs $Q\in \R^{k\times m}$ from Theorem~\ref{thm:JL} internally, using the random bits sampled by the leader. \\
Compute $p^{(j)}= M(M^TM)^{-1} M^T Q^{(j)}$.\\
\Return{$\sum_{j=1}^{k}\left(p^{(j)}\right)^2$}.
\end{algorithm}

\begin{lemma}\label{lm:leverage_scores}
For any $\eta>0$, with probability at least $1-1/m^{O(1)}$ the algorithm \ComputeLeverageScores{$M,\eta$} computes $\sigma^{\rm{apx}}(M)$ such that 
\[ (1-\eta)\sigma(M)_i\leq \sigma^{\rm{apx}}(M)_i \leq (1+\eta)\sigma(M)_i\]
for all $i\in[m]$. If $M=WA$, for some diagonal $W\in \R^{m\times m}$, it runs in $\Ot(\eta^{-2}(\log(u/\epsilon)+T(n,m)))$ rounds. 
\end{lemma}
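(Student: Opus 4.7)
The plan is to establish correctness via the Johnson--Lindenstrauss theorem and then account for the communication cost of each step of \ComputeLeverageScores.

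For correctness, I would first rewrite the target quantity in a form to which Theorem~\ref{thm:JL} can be applied. Let $P := M(M^TM)^{-1}M^T$ denote the orthogonal projection onto the column space of $M$. Since $P$ is symmetric idempotent, $\sigma(M)_i = P_{ii} = e_i^T P^T P e_i = \|P e_i\|_2^2$. Unpacking Line~5 of the algorithm, if we write $Q^{(j)} \in \R^m$ for the $j$-th row of $Q$, then the $i$-th entry of the returned vector equals
\[
\sum_{j=1}^{k}\bigl(e_i^T P Q^{(j)}\bigr)^2 \;=\; \sum_{j=1}^{k}\bigl((Q^{(j)})^T P e_i\bigr)^2 \;=\; \|Q P e_i\|_2^2.
\]
Now I would apply Theorem~\ref{thm:JL} with parameters $\eta'= \Theta(\eta)$ and $\delta = 1/m^{c}$ for a sufficiently large constant $c$, so that $k = \Theta(\eta^{-2}\log m)$ suffices, and the required seed length is $O(\log(1/\delta)\log m) = O(\log^2 m)$. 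For each fixed index $i$, Theorem~\ref{thm:JL} applied to the vector $Pe_i \in \R^m$ gives $(1-\eta')\|Pe_i\|_2 \le \|QPe_i\|_2 \le (1+\eta')\|Pe_i\|_2$ with probability at least $1-1/m^c$; squaring and adjusting the constant inside $\eta'$ so that $(1\pm\eta')^2 \subseteq (1\pm\eta)$ yields the claimed coordinate-wise approximation. Taking a union bound over $i \in [m]$ gives failure probability at most $m^{1-c} = 1/m^{O(1)}$.

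For the round complexity, I would go through the algorithm line by line. Electing the leader with the highest ID is a single broadcast round. The leader then needs to transmit $\Theta(\log^2 m)$ random bits across the bandwidth-$\Theta(\log n)$ channel, costing $O(\log m)$ rounds; once these bits are known to every vertex, each vertex reconstructs the same $Q$ internally, with no communication. The remaining work is $k$ invocations of the template $p^{(j)} = M(M^TM)^{-1}M^TQ^{(j)}$ with $M = WA$ and $W$ a diagonal matrix known coordinate-wise. Each such invocation requires (i) the matrix--vector product $M^TQ^{(j)} = A^T(W Q^{(j)})$, which is of the form handled by the distribution assumptions on $A$ and so takes $\Ot(\log(U/\epsilon))$ rounds with $O(\log(U/\epsilon))$-bit values; (ii) a solve $(A^T W^2 A)^{-1}(\cdot)$, which costs $T(n,m)$ rounds by hypothesis (this is exactly a system of the form $(A^TDA)^{-1}y$); and (iii) a final multiplication by $M$, again $\Ot(\log(U/\epsilon))$ rounds. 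Summing over the $k = \Theta(\eta^{-2}\log m)$ vectors $Q^{(j)}$ gives $\Ot\bigl(\eta^{-2}(\log(U/\epsilon) + T(n,m))\bigr)$ rounds. Forming $\sum_j (p^{(j)})^2$ is entry-wise local and costs nothing additional, since vertex $i$ already holds entry $i$ of each $p^{(j)}$ (they live in $\R^m$ with the same distribution as $A$'s rows).

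The only delicate point, and the main reason the paper switches away from the sampling approach used in~\cite{SS11,LS19}, is the choice of JL distribution: a dense Rademacher or Gaussian sketch would require per-entry randomness that cannot be broadcast, whereas Theorem~\ref{thm:JL} of Kane--Nelson reduces the needed randomness to $O(\log^2 m)$ bits, which the leader can disseminate in polylogarithmically many broadcast rounds. Everything else is bookkeeping, and the lemma then follows by composing the JL approximation with the union bound and the above round accounting.
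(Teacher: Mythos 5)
Your proposal is correct and follows essentially the same route as the paper's proof: rewrite the algorithm's output as $\|Q M(M^TM)^{-1}M^T e_i\|_2^2$, apply the Kane--Nelson JL theorem with a rescaled accuracy parameter so that squaring lands inside $(1\pm\eta)$, and charge each of the $k=\Theta(\eta^{-2}\log m)$ sketch columns one application of $M^T$, one solve in $M^TM=A^TW^2A$, and one application of $M$. Your version is in fact slightly more careful than the paper's in making the union bound over $i\in[m]$ and the $O(\log m)$-round dissemination of the seed explicit.
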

\begin{proof}
Note that for any $k\times m$ matrix $Q$ and symmetric $m\times m$ matrix $X$ we have
\begin{align*}
	||QXe_i||_2  = \sum_{j=1}^k (QX)_{ji}^2 = \sum_{j=1}^k (XQ^T)_{ij}^2 = \sum_{j=1}^k \left((XQ^{(j)})_i\right)^2.
\end{align*}
Note that $M(M^TM)^{-1}M^T$ is a symmetric $m\times m$ matrix and $\sigma^{\rm{apx}}(M)_i = ||QM(M^TM)^{-1}M^T e_i||_2^2$, hence $\sigma^{\rm{apx}}(M)= \sum_{j=1}^k \left((M(M^TM)^{-1}M^TQ^{(j)})\right)^2$. By Theorem~\ref{thm:JL}, we have 
\[ (1-\tilde \eta)||M(M^TM)^{-1}M^T e_i||_2 \leq ||QM(M^TM)^{-1}M^T e_i||_2 \leq (1+\tilde\eta)||M(M^TM)^{-1}M^T e_i||_2 \]
with probability at least $1-1/m^{O(1)}$ for our random $Q\in\mathcal{Q}\subseteq \R^{k\times m}$ with $k=\Theta(\tilde\eta^{-2}\log(m))$, constructed from the $\Theta(\log(m)^2)$ random bits sampled by the leader. Using that $\sigma^{\rm{apx}}(M)_i = ||QM(M^TM)^{-1}M^T e_i||_2^2$ and $\sigma(M)_i = ||M(M^TM)^{-1}M^T e_i||_2^2$, we obtain 
\[ (1-\tilde\eta)^2\sigma(M)_i\leq \sigma^{\rm{apx}}(M)_i \leq (1+\tilde\eta)^2\sigma(M)_i.\]
Now setting $\tilde\eta=\eta/4$ gives $1-\eta\leq (1-\tilde\eta)^2$ and $(1+\tilde\eta)^2\leq 1+\eta$, hence we obtain
\[ (1-\eta)\sigma(M)_i\leq \sigma^{\rm{apx}}(M)_i \leq (1+\eta)\sigma(M)_i.\]
This means that we have $k=\Theta(\log(m)/\tilde\eta^2)=\Theta(\log(m)/\eta^2)$. 

For the running time, note that for $j=1, \dots, k$ we need to multiply $M^T$ by a vector, solve a linear system in $M^TM$, and multiply $M$ by a vector. Since $M=WA$, each of these steps can be done in either $\Ot(\log(U/\epsilon))$ rounds or $T(n,m)$ rounds by assumption, giving a total running time of $\Ot(k(\log(U/\epsilon)+T(n,m)))=\Ot(\eta^{-2}(\log(U/\epsilon)+T(n,m)))$ rounds.
\end{proof}

Note that this algorithm is randomized. It is actually the only randomized part of the linear program solver itself and the bottleneck for making it deterministic. However, in the Broadcast Congested Clique our Laplacian solver (Theorem~\ref{thm:laplaciansolveBCC}) is also randomized. On top of that, the algorithm for computing minimum cost maximum flow of Section~\ref{thm:mincostflow_BCC} has an auxiliary randomized component. 

\begin{lemma}\label{lm:weights}
Let $M\in \R^{m\times n}$ with $\rank(M)=n$ be a matrix. For all $\eta \in (0,1)$, $p\in [1-1/\log(4m),2]$, and $w^{(0)}\in \R^m_{>0}$ with $||w_{(0)}^{-1}(w_p(M)-w^{(0)})||_\infty \leq 2^{-20}p^2(4-p)$, the algorithm \ComputeApxWeights{$M,p,w^{(0)},\eta$} returns $w$ such that with high probability $||w_{p}(M)^{-1}(w_p(M)-w||_\infty \leq \eta$. If $M=WA$, for some diagonal matrix $W\in\R^{m \times m}$, the algorithm runs in 
$$\Ot\left(\frac{\log(1/\eta)}{\eta^2}(\log(U/\epsilon)+T(n,m))\right)$$ 
rounds. Further, without given $w^{(0)}$, for $p=1-1/\log(4m)$, the algorithm \ComputeInitialWeights{$p_{\rm{target}},\eta$} returns $w$ such that with high probability $||w_{p_{\rm{target}}}(A)^{-1}(w_{p_{\rm{target}}}(A)-w)||_\infty \leq \eta$ in $$\Ot\left(\left(\sqrt{n}+ \frac{\log(1/\eta)}{\eta^2}\right)(\log(U/\epsilon)+T(n,m))\right)$$ rounds.
\end{lemma}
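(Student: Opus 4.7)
The plan is to implement the standard contractive iteration for Lewis weights analyzed by Lee and Sidford~\cite{LS19} on top of our distributed leverage score subroutine. For any $p\in[1-1/\log(4m),2]$, the map $w\mapsto \sigma\bigl((W)^{1/2-1/p}M\bigr)$ is a contraction in the relative $\ell_\infty$ metric $\|w_p(M)^{-1}(\cdot)\|_\infty$ with contraction gap at least $\Omega(p^2(4-p))$, provided one starts inside the attraction basin of radius $2^{-20}p^2(4-p)$ around the fixed point $w_p(M)$. Hence, starting from $w^{(0)}$ as assumed, $O(\log(1/\eta))$ iterations suffice to reach relative error~$\eta$.

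For \ComputeApxWeights, I would implement the iteration $w^{(k+1)} := \sigma^{\mathrm{apx}}\bigl((W^{(k)})^{1/2-1/p}M\bigr)$ by calling \ComputeLeverageScores with accuracy $\Theta(\eta)$. The key observation is that when $M=WA$ with $W$ diagonal, then $(W^{(k)})^{1/2-1/p}M = \widetilde{W}A$ with $\widetilde W$ diagonal, so the assumption of Lemma~\ref{lm:leverage_scores} is satisfied and one leverage-score call costs $\Ot\bigl(\eta^{-2}(\log(U/\epsilon)+T(n,m))\bigr)$ rounds. Each vertex holds the coordinates of $w^{(k)}$ corresponding to the rows of $A$ it owns, so the diagonal scaling and the coordinate-wise $1/2-1/p$ power are computed entirely locally. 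Composing $O(\log(1/\eta))$ such iterations and tracking how the per-iteration multiplicative error contracts geometrically against the contraction rate yields the claimed $\Ot\bigl(\log(1/\eta)\,\eta^{-2}(\log(U/\epsilon)+T(n,m))\bigr)$ bound.

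For \ComputeInitialWeights, where no warm start is available, I would use the homotopy method: start at $p_0=2$, for which the exact Lewis weights equal $\sigma(A)$ and can be obtained by a single call to \ComputeLeverageScores at constant accuracy, and then decrease $p$ towards $p_{\mathrm{target}}=1-1/\log(4m)$ in $O(\sqrt{n})$ steps. The step size is chosen so that the current $w$ remains inside the attraction basin of radius $2^{-20}p^2(4-p)$ after each shift in $p$; at each such step we invoke \ComputeApxWeights with some constant target accuracy $\eta_0$ that keeps us in the basin. Each intermediate call thus costs $\Ot(\log(U/\epsilon)+T(n,m))$ rounds, summing to $\Ot\bigl(\sqrt{n}(\log(U/\epsilon)+T(n,m))\bigr)$ for the homotopy phase. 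A final invocation of \ComputeApxWeights at the requested $\eta$ adds $\Ot\bigl(\log(1/\eta)\,\eta^{-2}(\log(U/\epsilon)+T(n,m))\bigr)$, producing the stated bound.

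The main delicate point is that the leverage scores returned by \ComputeLeverageScores are only $(1\pm\eta)$-approximations, so one has to verify that the injected multiplicative noise does not destroy the contraction. This follows the analysis of~\cite{LS19}: as long as the per-call relative error is below a constant fraction of the contraction gap $p^2(4-p)$, the iterates stay within the basin and still converge geometrically, with a final accuracy proportional to the per-call error. The Johnson--Lindenstrauss projection used inside \ComputeLeverageScores requires only $O(\log^2 m)$ broadcast random bits by Theorem~\ref{thm:JL}, so the shared randomness needed across the iterations is negligible, and setting the per-call error to $\Theta(\eta)$ in the final call delivers the guarantee $\|w_p(M)^{-1}(w_p(M)-w)\|_\infty\leq\eta$ with high probability.
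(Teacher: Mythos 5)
Your proposal matches the paper's proof in all essentials: correctness of the weight iteration is deferred to Lee and Sidford~\cite{LS19}, and the round complexity is obtained exactly as in the paper by counting $\Ot(\log(1/\eta))$ calls to \ComputeLeverageScores at accuracy $\Theta(\eta)$ for \ComputeApxWeights (using that $M=WA$ keeps the diagonal-times-$A$ structure), plus $\Ot(\sqrt{n})$ constant-accuracy homotopy steps over $p$ and one final $\eta$-accurate call for \ComputeInitialWeights. The only cosmetic difference is that you describe the inner update as the plain fixed-point iteration $w\mapsto\sigma(W^{1/2-1/p}M)$, whereas the paper's \ComputeApxWeights uses the damped, median-clamped step of Lee--Sidford; this does not affect the round count, which is the part the paper actually proves.
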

\begin{proof}
For correctness of the algorithms, we refer to \cite{LS19}. Instead we focus on the implementation in the Broadcast Congested Clique. 

\ComputeApxWeights{$M,p,w^{(0)},\eta$} consists of 
$$T= \left\lceil 80\left(p+\frac{1}{p}\right)\log\left(\frac{pn}{32\eta}\right)\right\rceil = \Ot\left(\left(p+\frac{1}{p}\right)\log(p/\eta)\right)$$
iterations, where in each iteration we call \ComputeLeverageScores with approximation precision $\frac{(4-p)\eta}{512}$, and we compute the coordinate wise median of three vectors. The latter can be done internally, while the former takes $\Ot(((4-p)\eta)^{-2}(\log(U/\epsilon)+T(n,m)))$ rounds, by Lemma~\ref{lm:leverage_scores}. We obtain total number of rounds $$\Ot\left(\left(p+\frac{1}{p}\right)\log(p/\eta)/((4-p)\eta)^2(\log(U/\epsilon)+T(n,m))\right).$$ Using that $p\in[1-1/\log(4m),2]$ this simplifies to $\Ot\left(\frac{\log(1/\eta)}{\eta^2}(\log(U/\epsilon)+T(n,m))\right)$. 

The while loop of \ComputeInitialWeights{$M,p_{\rm{target}},\eta$} terminates after $O(\sqrt{n}(p_{\rm{target}}+\frac{1}{p_{\rm{target}}})\cdot \log(m/n))$ iterations \cite{LS19}, which simplifies to $\Ot(\sqrt{n})$ for $p_{\rm{target}}=1-1/\log(4m)$. Each iteration of the while loop consists of some internal computations and a call to \ComputeApxWeights, these calls have varying values for $p$ and $\eta$, where $p$ starts out as $2$ and gradually changes to $p_{\rm{target}}$, so we always have $p\in [1-1/\log(4m),2]$. In these calls, the allowed error is $\frac{p^2(4-p)}{2^{22}}$, which simplifies to $\Omega(1)$ when $p\in[1-1/\log(4m),2]$. This means we obtain total number of rounds for the while loop of $\Ot\left(\sqrt{n}\cdot (\log(U/\epsilon)+T(n,m))\right)$ rounds. Then there remains one call to \ComputeApxWeights with error $\eta$, which takes $$\Ot\left(\frac{\log(1/\eta)}{\eta^2}(\log(U/\epsilon)+T(n,m))\right)$$ rounds. We obtain a total of $\Ot\left(\left(\sqrt{n}+ \frac{\log(1/\eta)}{\eta^2}\right)(\log(U/\epsilon)+T(n,m))\right)$ rounds.
\end{proof}

\begin{algorithm}[H]
\SetAlgoLined \caption{\textsc{ComputeApxWeights}($M,p,w^{(0)},\eta$)}
$L=\max\{4,\frac{8}{p}\}$, $r=\frac{p^2(4-p)}{2^{20}}$, and $\delta=\frac{(4-p)\eta}{256}$.\\
$T= \left\lceil 80\left(\frac{p}{2}+\frac{2}{p}\right)\log\left(\frac{pn}{32\eta}\right)\right\rceil$. \\
\For{$j=1, \dots, T-1$}{
$\sigma^{(j)}=$\ComputeLeverageScores{$W^{\frac{1}{2}-\frac{1}{p}}_{(j)}M,\delta/2$}.\\
$w^{(j+1)} =$\median{$(1-r)w^{(0)},w^{(j)}-\frac{1}{L}\left(w^{(0)}-\frac{w^{(0)}}{w^{(j)}}\sigma^{(j)}\right),(1+r)w^{(0)}$}.
}
\Return{$w^{(T)}$}.
\end{algorithm}

Lastly, let us turn to the initial weights, which are also computed using the same approximation algorithm, where we gradually transform the all-ones vector into the target weight by repeatedly computing $w_{p'}(A)$, for $p'$ increasingly closer to $p$. 

\begin{algorithm}[H]
\SetAlgoLined \caption{\textsc{ComputeInitialWeights}($p_{\rm{target}},\eta$)}
\SetKwInput{Input}{Input}
$p=2$.\\
$w=12c_{\rm{k}}\mathds{1}$.\\
\While{ $p\neq p_{\rm{target}}$}{
$h=\frac{\min\{2,p\}}{\sqrt{n}\log\frac{me^2}{n}}\cdot r$.\\
$p^{(\rm{new})} =$\median{$p-h,p_{\rm{target}},p+h$}.\\
$w=$\ComputeApxWeights{$A,p^{(\rm{new})},w^{\frac{p^{(\rm{new})}}{p}},\frac{p^2(4-p)}{2^{22}}$}.\\
$p=p^{(\rm{new})}$.
}
\Return{\ComputeApxWeights{$A,p_{\rm{target}},w,\eta$}}.
\end{algorithm}

\subsection{Main Algorithm}
In this section we show how to implement the weighted path finding algorithm, using the weight approximation algorithms of the previous section. The routine \LPSolve is shown in Algorithm~\ref{alg:LPSolve}. 

\begin{algorithm}[H]
\SetAlgoLined \caption{\textsc{LPSolve}($x_0,\epsilon$)} \label{alg:LPSolve}
\SetKwInput{Input}{Input}
\Input{an initial point $x_0$ such that $A^Tx_0=b$.}
$w=$\ComputeInitialWeights{$1-1/\log(4m),\frac{1}{2^{16}\log^3 m}$}$+\frac{n}{2m}$, $d = -w\phi'(x_0)$. \\
$t_1=(2^{27}m^{3/2}U^2\log^4 m)^{-1}$, $t_2=\frac{2m}{\eta}$, $\eta_1=\frac{1}{2^{18}\log^3 m}$, and $\eta_2 =\frac{\epsilon}{8U^2}$.\\
$(x^{(\rm{new})},w^{(\rm{new})}) = $\PathFollowing{$x_0,w,1,t_1,\eta_1,d$}.\\
$(x^{(\rm{final})},w^{(\rm{final})}) = $\PathFollowing{$x^{(\rm{new})},w^{(\rm{new})},t_1,t_2,\eta_2,c$}. \\
\Return{$x^{(\rm{final})}$}.
\end{algorithm}

In this algorithm, the first time we call \PathFollowing, we use it to move the starting point to a more central starting point with respect to the cost vector $c$. The second call to \PathFollowing it to actually solve the problem. First, we take a closer look at the improvement steps by \PathFollowing, see Algorithm~\ref{alg:PathFollowing}. 

\begin{algorithm}[H]
\SetAlgoLined \caption{\textsc{PathFollowing}($x,w,t_{\rm{start}},t_{\rm{end}},\eta,c$)}\label{alg:PathFollowing}
$t=t_{\rm{start}}$, $R=\frac{1}{768c_{\rm{k}}^2\log(36c_1c_{\rm{s}}c_{\rm{k}} m)}$, and $\alpha=\frac{R}{1600\sqrt{n}\log^2 m}$.\\
\While{$t\neq t_{\rm{end}}$}{
$(x,w) = $\CenteringInexact{$x,w,t,c$}. \\
$t \leftarrow $\median{$(1-\alpha)t,t_{\rm{end}},(1+\alpha)t$}.
}
\For{$i=1,\dots,4c_{\rm{k}}\log(\frac{1}{\eta})$}{
$(x,w)= $\CenteringInexact{$x,w,t_{\rm{end}},c$}.
}
\Return{$(x,w)$}.
\end{algorithm}

Here \median{$x,y,z$} simply returns the median of $x$, $y$, and $z$. 
The first loop consists of making a progress step, while the second loop improves centrality. Both are done by calling \CenteringInexact, see Algorithm~\ref{alg:CenteringInexact}. 

\begin{algorithm}[H]
\SetAlgoLined \caption{\textsc{CenteringInexact}($x,w,t,c$)}\label{alg:CenteringInexact}
$R=\frac{1}{768c_{\rm{k}}^2 \log(36c_1c_{\rm{s}}c_{\rm{k}}m)}$, and $\eta=\frac{1}{2c_{\rm{k}}}$.\\
$\delta = \left|\left|P_{x,w}\left(\frac{tc+w\phi'(x)}{w\sqrt{\phi''(x)}}\right)\right|\right|_{w+\infty}$ \tcp{where $P_{x,w}:= I-W^{-1}A_x(A_x^TW^{-1}A_x)^{-1}A_x^T$.} 
$x^{(\rm{new})}= x- \frac{1}{\sqrt{\phi''(x)}} P_{x,w}\left(\frac{tc-w\phi'(x)}{w\sqrt{\phi''(x)}}\right)$.\\
$z =  \log\left(\ComputeApxWeights{$A_{x^{(\rm{new})}},1-1/\log(4m),w,e^R-1$}\right)$.\\
$u = \left(1-\frac{6}{7c_{\rm{k}}}\right)\delta\cdot$\ProjectMixedBall{$-\nabla\Phi\frac{\eta}{12R}(z-\log(w)),C_{\rm{norm}}\sqrt{w}$}.\\
$w^{(\rm{new})}=\exp(\log(w)+u)$.\\
\Return{$\left(x^{(\rm{new})},w^{(\rm{new})}\right)$}.
\end{algorithm}
We present the subroutine \ProjectMixedBall in Section~\ref{sc:mixednormball}. 

The algorithm \CenteringInexact shows how to make a Newton step on $x$ and change the weights $w$ accordingly. 

\begin{lemma}\label{lm:centering}
    The algorithm \CenteringInexact{$x,w,t,c$} runs in $\Ot(\log^2(U/\epsilon)+T(n,m))$ rounds. 
\end{lemma}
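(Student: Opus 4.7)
The plan is to walk through the six lines of \CenteringInexact and account for each line's round cost, invoking only the oracles already set up in Theorem~\ref{thm:BCC_LPSolve}, the approximate-weight routine of Lemma~\ref{lm:weights}, and the (soon-to-be-proved) BCC implementation of \ProjectMixedBall from Section~\ref{sc:mixednormball}.

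Lines~1 and~6 are purely local: $R$ and $\eta$ depend only on global parameters already known to every vertex, while $w^{(\rm{new})}=\exp(\log(w)+u)$ is coordinate-wise. For lines~2 and~3, the heavy work is the projection $P_{x,w}(y)=y-W^{-1}A_x(A_x^T W^{-1} A_x)^{-1}A_x^T y$ applied to the vectors built from $tc$, $w$, $\phi'(x)$, $\phi''(x)$. Since $A_x=(\Phi''(x))^{-1/2}A$ differs from $A$ only by a diagonal rescaling that each vertex can apply locally, and $A_x^T W^{-1} A_x$ is of the form $A^T D A$ for a positive diagonal $D$, the projection reduces to one broadcast-based $A^T$-multiplication, one call to the assumed linear-system solver ($T(n,m)$ rounds), and one broadcast-based $A$-multiplication. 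The mixed norm $||\cdot||_{w+\infty}=||\cdot||_\infty + C_{\rm{norm}}||\cdot||_w$ then adds two global reductions over the $m$ coordinates. Each broadcast or reduction of a vector whose entries use $O(\log(mU/\epsilon))$ bits costs $\Ot(\log(U/\epsilon))$ rounds, so lines~2 and~3 together contribute $\Ot(\log(U/\epsilon)+T(n,m))$.

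For line~4, the call to \ComputeApxWeights uses target precision $e^R-1=\Theta(R)$; since $R=1/\polylog(m)$, both $\eta^{-2}$ and $\log(1/\eta)$ are $\polylog(m)$, so Lemma~\ref{lm:weights} gives $\Ot(\log(U/\epsilon)+T(n,m))$ rounds once the polylog-in-$m$ factors are absorbed into $\Ot$. Line~5 invokes \ProjectMixedBall on a vector of length $m$ whose coordinates are locally known; assuming the analysis deferred to Section~\ref{sc:mixednormball}, it runs in polylog rounds via a binary search over $O(\log m)$ thresholds, each requiring one global aggregation of three distributed prefix sums, for a total of $\Ot(\log(U/\epsilon))$ rounds.

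Summing the per-line contributions gives the claimed bound $\Ot(\log^2(U/\epsilon)+T(n,m))$; the extra $\log(U/\epsilon)$ factor over the naive sum is the slack absorbed when one tracks the product of the $\polylog(m)$ internal iterations of \ComputeApxWeights with the $\log(U/\epsilon)$ cost of each matrix-vector broadcast needed inside a leverage-score call. The main obstacle in the proof is line~5: the standard Lee--Sidford implementation of the mixed-norm projection uses per-edge decisions that are not compatible with the broadcast constraint, so the bound depends entirely on the binary-search algorithm of Section~\ref{sc:mixednormball}; the remaining steps are routine bookkeeping of matvec, linear-solve, and aggregation costs.
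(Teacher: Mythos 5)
Your line-by-line decomposition is the same as the paper's, and most steps are accounted for correctly: lines~1 and~6 are local, the projection in lines~2--3 reduces to matrix-vector broadcasts plus one solve with $A^TDA$ for $\Ot(\log(U/\epsilon)+T(n,m))$ rounds, and the call to \ComputeApxWeights with precision $e^R-1=\Omega(1/\log^5 m)$ costs $\Ot(\log(U/\epsilon)+T(n,m))$ rounds by Lemma~\ref{lm:weights}, since the $\eta^{-2}\log(1/\eta)=\polylog(m)$ factor is absorbed by the $\Ot$.

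The gap is in your treatment of line~5 and, consequently, in your explanation of where the $\log^2(U/\epsilon)$ term comes from. \ProjectMixedBall does not binary-search over ``$O(\log m)$ thresholds'': by Lemma~\ref{lm:mixed_norm_ball} the search ranges over $O(\poly(m)U/\epsilon)$ candidate values of $|a_i|/l_i$, so it takes $\Ot(\log(U/\epsilon))$ iterations, and each iteration requires broadcasting prefix sums of $O(\log(\poly(m)U/\epsilon))$-bit numbers, i.e.\ $\Ot(\log(U/\epsilon))$ rounds per iteration, giving $\Ot(\log^2(U/\epsilon))$ in total. That is the sole source of the $\log^2(U/\epsilon)$ term in the lemma. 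Your alternative explanation --- that the extra factor arises from multiplying the $\polylog(m)$ internal iterations of \ComputeApxWeights by the $\log(U/\epsilon)$ cost of a matrix-vector broadcast --- does not work: those $\polylog(m)$ factors are already hidden in the $\Ot$ notation and cannot produce a second power of $\log(U/\epsilon)$. The final bound you state is correct, but the accounting meant to justify it is not; the fix is simply to charge line~5 with the $\Ot(\log^2(U/\epsilon))$ bound of Lemma~\ref{lm:mixed_norm_ball}.
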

\begin{proof}
     In line 2, we first need to compute a vector $y:= \frac{tc+w\phi'(x)}{w\sqrt{\phi''(x)}}$, which can be done internally at each vertex. Next we compute 
\begin{align*}
P_{x,w}y &= y-W^{-1}A_x(A_x^TW^{-1}A_x)^{-1} A_x^Ty \\
&=y-W^{-1}\Phi''(x)^{-1/2}A(A^T\Phi''(x)^{-1/2}W^{-1}\Phi''(x)^{-1/2}A)^{-1} A^T\Phi''(x)^{-1/2}y,
\end{align*}
which we can split up into matrix-vector multiplications, and a linear system solve for $A^TDA$, with $D= \Phi''(x)^{-1/2}W^{-1}\Phi''(x)^{-1/2}$. Matrix-vector multiplications take $\Ot(\log(U/\epsilon))$ rounds, and the linear system solve takes $T(n,m)$ rounds by assumption. Next we need to compute $z$, by calling \ComputeApxWeights with with precision parameter $\eta= e^R-1\geq R = \Omega(1/\log^5 m)$. By Lemma~\ref{lm:weights}, this takes $\Ot(\log(U/\epsilon)+T(n,m))$ rounds. 
Next we call \ProjectMixedBall, which runs in $\Ot(\log^2(U/\epsilon))$ rounds (see Lemma~\ref{lm:mixed_norm_ball}), and we perform some vector operations, which can be done internally. We obtain a total of $\Ot(\log^2(U/\epsilon)+T(n,m))$ rounds.
\end{proof}

Now we use this to analyze the running time of \PathFollowing. 
\begin{lemma}\label{lm:path_following}
    The algorithm \PathFollowing{$x,w,t_{\rm{start}},t_{\rm{end}},\eta,c$} runs in $$\Ot(\sqrt{n}(|\log(t_{\rm{end}}/t_{\rm{start}})|+\log(1/\eta))(\log^2(U/\epsilon)+T(n,m)))$$ rounds. 
\end{lemma}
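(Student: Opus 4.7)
The plan is to bound the iteration counts of the two loops separately and then multiply by the per-iteration cost established in Lemma~\ref{lm:centering}.

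First, I would analyze the while loop. The update $t \leftarrow \median\{(1-\alpha)t,\, t_{\rm end},\, (1+\alpha)t\}$ moves $t$ a multiplicative factor of $(1\pm\alpha)$ toward $t_{\rm end}$, or snaps it to $t_{\rm end}$ as soon as the two brackets straddle it. Consequently, $|\log t-\log t_{\rm start}|$ grows by $\log(1+\alpha)=\Theta(\alpha)$ per iteration until the loop terminates, giving $O(\alpha^{-1}|\log(t_{\rm end}/t_{\rm start})|)$ iterations. Plugging in $\alpha = R/(1600\sqrt{n}\log^2 m)$ with $R = \Theta(1/(c_{\rm k}^2\log(c_1c_{\rm s}c_{\rm k}m)))$ and using the bounds $c_1\le \tfrac{3}{2}n$, $c_{\rm s}\le 4$, $c_{\rm k}\le 2\log(4m)$ from Section~\ref{sc:barrier_weight_functions}, we get $\alpha^{-1}=\Ot(\sqrt{n})$, so the while loop executes $\Ot(\sqrt{n}\,|\log(t_{\rm end}/t_{\rm start})|)$ iterations.

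Next, the for loop runs exactly $4c_{\rm k}\log(1/\eta) = O(\log(m)\log(1/\eta)) = \Ot(\log(1/\eta))$ times. Each iteration of either loop performs a single call to \CenteringInexact, which by Lemma~\ref{lm:centering} costs $\Ot(\log^2(U/\epsilon)+T(n,m))$ rounds; the remaining per-iteration work (the \median update of $t$) is purely internal. Summing the iteration counts and multiplying by the per-iteration cost yields
\[
\Ot\!\left(\bigl(\sqrt{n}\,|\log(t_{\rm end}/t_{\rm start})|+\log(1/\eta)\bigr)\bigl(\log^2(U/\epsilon)+T(n,m)\bigr)\right),
\]
which is bounded above by the stated $\Ot(\sqrt{n}(|\log(t_{\rm end}/t_{\rm start})|+\log(1/\eta))(\log^2(U/\epsilon)+T(n,m)))$.

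No step is really an obstacle; the only care required is to verify that $\alpha^{-1}$ really is $\Ot(\sqrt{n})$ (rather than, say, $\Ot(\sqrt{n})$ times some inverse power of $R$ that we have lost track of), which is why I would explicitly substitute the polylogarithmic values of $c_1,c_{\rm s},c_{\rm k},R$ before collapsing polylog factors into the $\Ot$ notation.
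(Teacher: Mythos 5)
Your proof is correct and follows the same structure as the paper's: bound the iteration counts of the two loops and multiply by the per-iteration cost of \CenteringInexact from Lemma~\ref{lm:centering}. The only difference is that you derive the while-loop bound $\Ot(\sqrt{n}\,|\log(t_{\mathrm{end}}/t_{\mathrm{start}})|)$ directly from the median update and an explicit substitution of $\alpha$, $R$, $c_1$, $c_{\mathrm{s}}$, $c_{\mathrm{k}}$, whereas the paper simply cites Lee--Sidford for this iteration count; your self-contained derivation is valid and in fact slightly tighter.
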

\begin{proof}
    The while loop of \PathFollowing{$x,w,t_{\rm{start}},t_{\rm{end}},\eta,c$} terminates after at most $$\Ot(\sqrt{n}(|\log(t_{\rm{end}}/t_{\rm{start}})|+\log(1/\eta)))$$ iterations, see \cite{LS19} for a proof. Each iterations consists of some internal computations and a call to \CenteringInexact, which takes $\Ot(\log^2(U/\epsilon)+T(n,m))$ rounds by Lemma~\ref{lm:centering}. The for loop consists of $4c_{\rm{k}}\log(1/\eta)=O(\log(m)\log(1/\eta))$ calls to \CenteringInexact, which is is dominated by the aforementioned while loop. We obtain total a total number of  $$\Ot(\sqrt{n}(\log^2(U/\epsilon)+T(n,m)))$$ rounds.
\end{proof}

Lastly, we consider the complete algorithm. 
\begin{lemma}
    The algorithm \LPSolve{$x_0,\epsilon$} runs in $\Ot(\sqrt{n}\log(U/\epsilon)(\log^2(U/\epsilon)+T(n,m)))$ rounds. 
\end{lemma}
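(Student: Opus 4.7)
The plan is to unfold \LPSolve into its three substantive pieces — one call to \ComputeInitialWeights and two calls to \PathFollowing — and bound each using Lemma~\ref{lm:weights} and Lemma~\ref{lm:path_following}. The remaining operations (computing $d = -w\phi'(x_0)$, setting scalars $t_1, t_2, \eta_1, \eta_2$, coordinate-wise vector arithmetic) are either local at each vertex or involve a constant number of broadcasts of $\Ot(\log(U/\epsilon))$-bit numbers, so they fall inside $\Ot(\log(U/\epsilon))$ and will never be the bottleneck.

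First, bound the initial-weight computation. We call \ComputeInitialWeights with target parameter $p = 1 - 1/\log(4m)$ and precision $\eta = 1/(2^{16}\log^3 m)$. Since $\log(1/\eta)/\eta^2 = \Ot(1)$, Lemma~\ref{lm:weights} gives a cost of $\Ot\bigl(\sqrt{n}\,(\log(U/\epsilon) + T(n,m))\bigr)$ rounds, which is already dominated by the bound we want to prove.

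Next, bound the two \PathFollowing calls via Lemma~\ref{lm:path_following}, which charges $\Ot\bigl(\sqrt{n}(|\log(t_{\rm{end}}/t_{\rm{start}})| + \log(1/\eta))(\log^2(U/\epsilon) + T(n,m))\bigr)$ rounds. For the first call, $t_{\rm{start}} = 1$, $t_{\rm{end}} = t_1 = (2^{27}m^{3/2}U^2\log^4 m)^{-1}$, and $\eta = \eta_1 = 1/(2^{16}\log^3 m)$, so $|\log(t_1/1)| = \Ot(\log U)$ and $\log(1/\eta_1) = \Ot(1)$; this yields $\Ot\bigl(\sqrt{n}\log U\,(\log^2(U/\epsilon) + T(n,m))\bigr)$ rounds. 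For the second call, $t_{\rm{start}} = t_1$, $t_{\rm{end}} = t_2 = 2m/\epsilon$, and $\eta = \eta_2 = \epsilon/(8U^2)$, so
\[
  |\log(t_2/t_1)| = \Ot(\log(m^{5/2}U^2/\epsilon)) = \Ot(\log(U/\epsilon)),
  \qquad
  \log(1/\eta_2) = \Ot(\log(U/\epsilon)),
\]
yielding $\Ot\bigl(\sqrt{n}\log(U/\epsilon)\,(\log^2(U/\epsilon) + T(n,m))\bigr)$ rounds.

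Finally, summing the three contributions, the second \PathFollowing call dominates and the total round complexity of \LPSolve is $\Ot\bigl(\sqrt{n}\log(U/\epsilon)(\log^2(U/\epsilon) + T(n,m))\bigr)$, as claimed. The only thing that needs care is verifying that all the scalar parameters fed into the sub-procedures lie in the regimes under which Lemma~\ref{lm:weights} and Lemma~\ref{lm:path_following} were proved (in particular that $p = 1 - 1/\log(4m) \in [1 - 1/\log(4m), 2]$ and that the precisions $\eta_1, \eta_2$ above are positive and at most $1$); this is immediate from the definitions in Algorithm~\ref{alg:LPSolve} once we recall $U \ge 1$ and $\epsilon \le 1$. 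No obstacle arises beyond this bookkeeping — the heavy lifting has already been done in the preceding lemmas.
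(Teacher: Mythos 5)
Your proposal is correct and follows essentially the same route as the paper's proof: decompose \LPSolve{} into the call to \ComputeInitialWeights{} and the two calls to \PathFollowing{}, apply Lemma~\ref{lm:weights} and Lemma~\ref{lm:path_following} with the specific parameters $t_1, t_2, \eta_1, \eta_2$, and observe that the second \PathFollowing{} call dominates. The only additions are cosmetic (the explicit parameter-regime check at the end), so nothing further is needed.
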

\begin{proof}
    The algorithm consists of three parts:
    \begin{enumerate}
        \item A call to \ComputeInitialWeights{$1-1/\log(4m),\frac{1}{2^{16}\log^3 m}$}. By Lemma~\ref{lm:weights}, this takes $$\Ot((\sqrt{n}+\log(2^{16}\log^3 m)(2^{16}\log^3 m)^2)(\log(U/\epsilon)+T(n,m)))= \Ot(\sqrt{n}(\log(U/\epsilon)+T(n,m)))$$ rounds.
        \item A call to \PathFollowing{$x_0,w,1,t_1,\eta_1,d$} with $t_1=(2^{27}m^{3/2}U^2\log^4 m)^{-1}$ and $\eta_1=\frac{1}{2^{18}\log^3 m}$. By Lemma~\ref{lm:path_following} takes 
        \begin{align*}
            &\Ot(\sqrt{n}(|\log(t_1/1)|+\log(1/\eta_1))(\log^2(U/\epsilon)+T(n,m))) \\
            &= \Ot(\sqrt{n}\log(U)(\log^2(U/\epsilon)+T(n,m)))
        \end{align*}
        rounds.
        
        \item A call to \PathFollowing{$x^{(\rm{new})},w^{(\rm{new})},t_1,t_2,\eta_2,c$} with $t_1=(2^{27}m^{3/2}U^2\log^4 m)^{-1}$, $t_2=\frac{2m}{\eta}$, and $\eta_2 =\frac{\epsilon}{8U^2}$. By Lemma~\ref{lm:path_following} this takes 
        \begin{align*}
            &\Ot(\sqrt{n}(|\log(t_2/t_1)|+\log(1/\eta_2))(\log^2(U/\epsilon)+T(n,m)))\\
            &= \Ot(\sqrt{n}\log(U/\epsilon)(\log^2(U/\epsilon)+T(n,m)))
        \end{align*}
        rounds. 
    \end{enumerate}
    Since the first two operations are dominated by the last, we obtain a total of $$\Ot(\sqrt{n}\log(U/\epsilon)(\log^2(U/\epsilon)+T(n,m)))$$ rounds.
\end{proof}

\subsection{Projection on Mixed Norm Ball in Broadcast Congested Clique}\label{sc:mixednormball}
In this section, we show how to solve the following problem in the Broadcast Congested Clique. Let $a,l\in \R^m$, the goal is to find
\begin{align*}
\argmax_{||x||_2+||l^{-1}x||_\infty\leq 1} a^Tx.
\end{align*}
Hereto, Lee and Sidford~\cite{LS19} initially sort $m$ values and precompute $m$ functions on $a$ and~$l$. For both we have to find an alternative solution, since sorting $m\gg n$ values is difficult and precomputing $m$ functions naively takes $m$ rounds in the Broadcast Congested Clique. We overcome this issue by only sorting implicitly, and doing a binary search, such that we only have to compute logarithmically many functions.

\begin{lemma}\label{lm:mixed_norm_ball}
	Suppose the vectors $a,l\in \R^m$ are distributed over the network such that: 1) for each $i\in[m]$, $a_i$ and $l_i$ are known by exactly one vertex, 2) a vertex knows $a_i$ if and only if it knows $l_i$. Moreover, suppose that $||a||_\infty,||l||_\infty\leq O(\poly(m)U)$. Then there exists an algorithm that finds 
\begin{align*}
\argmax_{||x||_2+||l^{-1}x||_\infty\leq 1} a^Tx
\end{align*}
up to precision $O(1/(\poly(m)\epsilon))$ in $\Ot(\log^2(U/\epsilon)$ rounds in the Broadcast Congested Clique. 
\end{lemma}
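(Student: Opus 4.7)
The plan is to exploit the KKT structure of the projection problem. Writing $\mu>0$ for the Lagrange multiplier associated with the $\ell_2$ part of the constraint and $\tau := \|l^{-1}x\|_\infty$, at the optimum each coordinate has the form $x_i = \sign(a_i)\min\{\mu|a_i|,\tau l_i\}$; equivalently, the coordinates that hit the $\ell_\infty$ cap form the set $S(T) := \{i : |a_i|/l_i \geq T\}$ where $T := \tau/\mu$. Once $S(T)$ is fixed, the saturation $\|x\|_2 = 1-\tau$ becomes a scalar equation in the aggregates $\sum_{i\in S(T)} l_i^2$ and $\sum_{i\notin S(T)} a_i^2$ that pins down $\tau$, and the objective $a^Tx$ collapses to a closed-form expression in these two quantities together with $\sum_{i\in S(T)} |a_i|\,l_i$. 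Thus solving the projection reduces to identifying the correct threshold $T$ and then evaluating these three sums over $S(T)$.

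The key point is that we never need to explicitly sort the $m$ ratios $|a_i|/l_i$; it suffices to binary-search over the scalar $T$ on the range $[0,\|a\|_\infty/\min_{l_i>0} l_i]$, handling the degenerate coordinates with $l_i = 0$ by a one-shot aggregation. For any candidate $T$, every vertex can locally decide which of its coordinates belong to $S(T)$, because it knows both $a_i$ and $l_i$ for every index it owns, so the three sums above can be computed directly over $S(T)$. The search direction is supplied by a monotone test derived from the KKT conditions: reconstructing $\tau$ from the saturation equation and substituting into the derivative of the closed-form objective with respect to $T$ yields a sign that is monotone in $T$ across the discontinuity points where $S(T)$ loses a coordinate. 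Running this bisection to accuracy $1/\poly(m,U/\epsilon)$ needs $\Ot(\log(U/\epsilon))$ iterations.

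Each bisection step amounts to computing $O(1)$ sums of at most $m$ numbers of magnitude $\poly(m)U/\epsilon$. In the Broadcast Congested Clique, each vertex forms its local contribution and broadcasts it; since the summands have $O(\log(mU/\epsilon))$ bits, a single aggregate takes $\Ot(\log(U/\epsilon))$ rounds. Combining this with the $\Ot(\log(U/\epsilon))$ bisection iterations yields the claimed $\Ot(\log^2(U/\epsilon))$ round complexity. The main technical hurdle I anticipate is the monotonicity argument driving the search: one must verify that the KKT-derived sign test is genuinely monotone (in the correct direction) along the decreasing chain of sets $S(T)$, handle ties among the ratios so that the bisection does not stall, and show that the approximate $T$ returned yields a feasible $x$ whose objective is within $O(1/(\poly(m)\epsilon))$ of the optimum---this is precisely the role that the fully precomputed prefix sums play in Lee and Sidford's sequential version, so the technical content here is to recover the same guarantees without ever materializing the sort.
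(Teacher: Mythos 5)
Your proposal is correct and follows essentially the same route as the paper: both reduce the projection to identifying a threshold on the ratios $|a_i|/l_i$, evaluate the objective in closed form from the three aggregates $\sum |a_k| l_k$, $\sum l_k^2$, $\sum a_k^2$ over the threshold set (computed by local membership tests plus broadcast aggregation, avoiding an explicit sort), and binary-search the threshold using a concavity/monotonicity property — the paper phrases this via the explicit split $\max_{0\le t\le 1} g(t)$ with $g$ concave rather than via KKT multipliers, but the structure is identical. The "technical hurdle" you flag (monotonicity of the sign test, tie handling) is exactly what the paper's concavity of $g(t)$ and its discretized search over attained ratio values resolve.
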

\begin{proof}
We describe an algorithm \ProjectMixedBall{$a,l$}, which fulfills the lemma. 

First, we rewrite the problem such that we split the mixed norm into two maximization problems:
\begin{align*}
\max_{||x||_2+||l^{-1}x||_\infty\leq 1} a^Tx &= \max_{0\leq t\leq 1} \left[ \max_{||x||_2\leq 1-t,\ -tl_i\leq x_i \leq tl_i} a^Tx \right]\\
&= \max_{0\leq t\leq 1} (1-t)\left[ \max_{||x||_2\leq 1,\ -\frac{t}{1-t}l_i\leq x_i \leq \frac{t}{1-t}l_i} a^Tx \right].
\end{align*}
For brevity, we write 
\begin{align*}
g(t) := (1-t)\left[ \max_{||x||_2\leq 1,\ -\frac{t}{1-t}l_i\leq x_i \leq \frac{t}{1-t}l_i} a^Tx \right],
\end{align*}
which simplifies the objective to $\max_{0\leq t\leq 1} g(t)$. Now assume that the coordinates are sorted with $|a_i|/l_i$ monotonically decreasing. Later, we will show that we do not have to perform this sorting -- a hard problem in the Broadcast Congested Clique -- explicitly. It can be shown that the vector that attains the maximum in $g(t)$ is $x^{i_t}\in \R^m$, where
\begin{align*}
x^{i_t}_j = \begin{cases} \frac{t}{1-t} \sign(a_j)l_j & \text{if } j\in[i_t] \\ \sqrt{\frac{1-\left(\frac{t}{1-t}\right)^2\sum_{k\in[i_t]}l_k^2}{||a||^2_2-\sum_{k\in [i_t]} a_k^2}a_j} & \text{otherwise. }\end{cases}
\end{align*}
Here we write $i_t$ for the first coordinate $i\in [m]$ such that 
\begin{align*}
\frac{1-\left(\frac{t}{1-t}\right)^2\sum_{k\in[i_t]}l_k^2}{||a||^2_2-\sum_{k\in [i_t]} a_k^2} \leq \frac{\left(\frac{t}{1-t}\right)^2l_i^2}{a_i^2}.
\end{align*}
Note that $i_t \geq i_s$ if $t\leq s$, hence the set of $t$ such that $i_t=j$ is an interval. Now by substitution one can show 
\begin{align*}
g(t) = t\sum_{k\in[i_t]}|a_k||l_k|+\sqrt{(1-t)^2-t^2\sum_{k\in[i_t]} l_k^2}\sqrt{||a||_2^2-\sum_{k\in[i_t]}a_k^2}.
\end{align*}
By looking at its second derivative, one can easily show that that $g(t)$ is a concave function, hence it has a unique maximum. Now we define 
\begin{align*}
g_i(t) = t\sum_{k\in[i]}|a_k||l_k|+\sqrt{(1-t)^2-t^2\sum_{k\in[i]} l_k^2}\sqrt{||a||_2^2-\sum_{k\in[i]}a_k^2}.
\end{align*}
We can rewrite the problem to
\begin{align*}
\max_{0\leq t\leq 1}g(t) &= \max_{0\leq t\leq 1} \max_{i\in[m]} g_i(t)\\
&=  \max_{i\in[m]} \max_{t : i_t = i} g_i(t).
\end{align*}
Suppose for fixed $i\in [m]$ we have calculated $\sum_{k\in[i]}|a_k||l_k|$, $\sum_{k\in[i]} l_k^2$, and $\sum_{k\in[i]} l_k^2$, then internally a vertex can easily calculate $\max_{t : i_t = i} g_i(t)$. 

Moreover, calculating these sums can be done in $O(\log(m)R)$ rounds as follows. For fixed $i\in[m]$, let the vertex $u_i$ denote the vertex that knows $a_i$ and $l_i$. Then $u_i$ computes $|a_i|/l_i$ and broadcasts this value in $R$ rounds. Each vertex $u$ looks at its indices $E_u := \{j: |a_j|/l_j \leq  |a_i|/l_i$\}, which are the entries with $j\leq i$. Then $u$ computes its part of the sum: $\sum_{j\in E_u : } f(j)$, where $f(j)=|a_j||l_j|,a_j^2,l_j^2$. Each vertex $u$ broadcasts the computed sums in $\Ot(\log(U/\epsilon))$ rounds, and the vertex $u_i$ sums the sums to obtain the totals $\sum_{j\in [i]} f(j)$, for each of the three instances of $f$. 

Now we will find the outer maximum by performing a version of binary search over the different $j\in[m]$. Note that we do not actually have the numbers $j$, so we bypass this by doing a binary search over a bigger space. Each vertex broadcasts their minimum and maximum $|a_i|/l_i$, and the least common multiple of of all their denominators of their values $|a_i|/l_i$. From this we find the global minimum and maximum and step size of the binary search, which is the least common multiple of all the least common multiples.  Now we perform binary search on the given range and step size, which has a total of $O(\poly(m)U/\epsilon)$ options. In each iteration, we ask the vertices of the network what their closest value $|a_i|/l_i$ to the aimed value is and use that one, since not all appearing values in the range will be valid. When an $i$ is chosen, we compute $ \max_{t : i_t = i} g_i(t)$ and $ \max_{t : i_t = i+1} g_{i+1}(t)$ to see in which direction the binary search should continue. Note that this is a valid method since the overall function $g(t)$ is concave. The process ends within $\Ot(\log(U/\epsilon))$ iterations, taking a total of $\Ot(\log^2(U/\epsilon)$ rounds. 
\end{proof}

When we apply this lemma in our LP solver, multiple vertices will know the same values $a_i$ and $l_i$, however, each vertex will know which other vertices know $a_i$ and $l_i$, hence we can simply allocate the values $a_i$ and $l_i$ to the vertex with the highest ID.

\section{Minimum Cost Maximum Flow}\label{sc:BCC_flow}
In this section, we apply the linear program solver of the previous section to the minimum cost maximum flow problem. This problem is defined as follows. Let $G=(V,E)$ be a connected directed graph, with capacities $c\colon E \to \Z_{>0}$ and costs $q\colon E\to \Z$\footnote{In this section, we will write $|V|$ and $|E|$, so that we can reserve $n$ and $m$ for the dimensions of the linear program.}. The goal is to compute a maximum flow of minimal cost, see Section~\ref{sc:prelim_flow}. In this section, we prove the following result. 

\begingroup
\def\thetheorem{\ref{thm:mincostflow_BCC}}
\begin{theorem}[Restated]
There exists a Broadcast Congested Clique algorithm that, given a directed graph $G=(V,E)$ with integral costs $q\in \Z^{|E|}$ and capacities $c\in \Z_{>0}^{|E|}$ with $||q||_\infty\leq M$ and $||c||_\infty \leq M$, computes a minimum cost maximum $s$-$t$ flow with high probability in $\Ot(\sqrt{|V|}\log^3 M)$ rounds. 
\end{theorem}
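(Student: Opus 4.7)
The plan is to directly invoke the LP solver of Theorem~\ref{thm:BCC_LPSolve} on an LP formulation of minimum cost maximum flow, following the roadmap already sketched in the introduction. Recall from Section~\ref{sc:prelim_flow} that the problem reads
\begin{equation*}
\min_{0 \leq x \leq c} q^T x \quad \text{such that} \quad Bx = F(e_t - e_s),
\end{equation*}
where $B$ is the vertex--edge incidence matrix. To cast this into the form accepted by \LPSolve and to avoid an outer binary search over $F$ (which would incur an extra $\log M$ factor), I would augment $G$ with an auxiliary edge from $t$ to $s$ of sufficiently large negative cost and infinite capacity -- folding maximization of $F$ into cost minimization -- together with further auxiliary edges that guarantee a trivially known strictly feasible starting point $x_0 \in \Omega^{\mathrm{o}}$. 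Setting $A := B^T$ with one redundant row removed so that $\rank(A) = n = \Theta(|V|)$, the data-distribution hypothesis of Theorem~\ref{thm:BCC_LPSolve} holds immediately: processor $v$ locally stores each row of $A$ corresponding to an edge incident to $v$, since such a row has support only on $v$ and the opposite endpoint. Moreover $\|c\|_\infty$, $\|u - l\|_\infty$, $\|1/(u - x_0)\|_\infty$, and $\|1/(x_0 - l)\|_\infty$ are all at most $\poly(|V|, M)$, so $U = \poly(|V|, M)$ and hence $\log U = \Ot(\log M)$.

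The crucial structural ingredient is an efficient implementation of the system solve $(A^T D A)^{-1} y$ demanded by \LPSolve. For any positive diagonal $D \in \R^{m \times m}$, the matrix $A^T D A = B D B^T$ (restricted to the retained rows and columns) is a principal submatrix of the weighted graph Laplacian whose edge weights are $\diag(D)$, and is therefore symmetric diagonally dominant. Combining the standard SDD-to-Laplacian reduction~\cite{Gremban96} with the Broadcast Congested Clique Laplacian solver of Theorem~\ref{thm:laplaciansolveBCC}, and noting that the entries of $D$ generated along the execution of \LPSolve remain bounded by $\poly(|V|, M)$, we obtain $(A^T D A)^{-1} y$ to the required $\poly(1/m)$ precision in $T(n,m) = \Ot(\log^2 M)$ rounds per solve, after a one-time $\Ot(\log M)$-round sparsifier preprocessing phase.

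It remains to argue that an approximate LP solution suffices. Solving the LP to additive error $\epsilon = \Theta(1/\poly(|V|, M))$ yields a fractional $x$ from which a classical flow-rounding argument -- round $x$ to a nearby integral flow and cancel residual negative cycles, using, e.g., the Laplacian solver as an auxiliary primitive -- produces an exact optimal integer minimum cost maximum flow with only polylog-round overhead. Plugging $\log(U/\epsilon) = \Ot(\log M)$ and $T(n,m) = \Ot(\log^2 M)$ into the bound of Theorem~\ref{thm:BCC_LPSolve} then gives
\begin{equation*}
\Ot\bigl(\sqrt{n}\, \log(U/\epsilon)\, (\log^2(U/\epsilon) + T(n,m))\bigr) \;=\; \Ot\bigl(\sqrt{|V|}\, \log^3 M\bigr),
\end{equation*}
matching the claim. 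I expect the main obstacle to be the careful setup of the augmented LP: simultaneously guaranteeing a polynomially bounded $U$, a locally-available strictly feasible interior starting point, and that an optimum of the augmented instance projects to an optimum of the original min cost max flow on $G$, together with verifying that the final integer rounding step is both correct and implementable within the target round budget. Everything beyond that reduces to plugging together already-proved machinery.
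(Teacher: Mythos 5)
Your high-level route is the same as the paper's: formulate min cost max flow as an LP with constraint matrix built from the incidence matrix $B$, observe that $A^TDA$ is SDD and invertible to high precision via the Gremban reduction and Theorem~\ref{thm:laplaciansolveBCC}, invoke Theorem~\ref{thm:BCC_LPSolve} with $U=\poly(|V|,M)$ and $\epsilon=1/\poly(|V|,M)$, and round. However, there is a genuine gap exactly where you flag uncertainty: the final rounding step. Your proposed argument (``round $x$ to a nearby integral flow and cancel residual negative cycles'') does not work as stated. Negative-cycle cancellation is an inherently sequential, global primitive with no obvious $\Ot(\sqrt{|V|})$-round BCC implementation, and more fundamentally, without further structure an $\epsilon$-approximate LP solution need not be close in $\ell_\infty$ to \emph{any} integral optimum when the LP has multiple optima, so ``round to a nearby integral flow'' is not well-defined. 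The missing idea is the Daitch--Spielman random perturbation of the cost vector: the paper adds to each $q_e$ a uniformly random element of $\bigl\{\tfrac{1}{4|E|^2M^2},\dots,\tfrac{2|E|M}{4|E|^2M^2}\bigr\}$, which with constant probability makes the optimal solution \emph{unique} (and still optimal for the original instance). Uniqueness plus integrality of the costs guarantees that solving the LP to additive error $\Theta(1/M)$ yields, after subtracting the slack-variable error, a flow within $1/6$ of the optimum on every edge, so each processor rounds its own coordinates to the nearest integer with zero communication. This perturbation is the crux of the correctness of the rounding, and it is also the source of randomness referenced in the theorem's ``with high probability'' (boosted by $O(\log n)$ repetitions with a one-round feasibility check).

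Two secondary issues. First, your auxiliary $t\to s$ edge of ``infinite capacity'' violates the requirement $\|u-l\|_\infty\leq U$ in Theorem~\ref{thm:BCC_LPSolve}; you must cap it (e.g.\ at $2|V|M$). The paper instead keeps $F$ as an explicit bounded variable with objective term $-2n\tilde{M}F$ and introduces slack variables $y,z$ with constraint $Bx+y-z=Fe_t$; this is what produces the explicitly computable interior point $x=c/2$, $F=|V|M$, $y=2|V|M\mathds{1}-(Bc/2)^-+Fe_t$, $z=2|V|M\mathds{1}+(Bc/2)^+$ that your sketch only promises exists. Second, these choices are not cosmetic: the weights $\lambda$ and $\tilde M$ must be large enough that the slack variables vanish at optimality yet small enough that $\log U=\Ot(\log M)$, which is where the $\log^3 M$ in the final bound comes from.
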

\addtocounter{theorem}{-1}
\endgroup
To prove this, we have to show that the minimum cost maximum flow problem satisfies the conditions of Theorem \ref{thm:BCC_LPSolve}. Clearly, the linear program as presented in Section~\ref{sc:prelim_flow} satisfies this. However, this would incur two problems. The first is that the LP solver computes an \emph{approximate} solution. It is not clear how to efficiently transform this into an exact solution. The second problem is that we need an auxiliary binary search to find the maximum flow. Both problems are solved simultaneously by considering a closely related LP, see Daitch and Spielman~\cite{DS08} and Lee and Sidford~\cite{LS19}. 

We let $B\in\R^{(|V|-1)\times |E|}$ be the edge-vertex incidence matrix where we omit the row for the source $s$. We let our variables consist of $x\in \R^{|E|}, y,z\in \R^{|V|}$ and $F\in \R$. We define the linear program as follows. 

\begin{align*}
\min\ &\tilde{q}^T x+ \lambda(1^Ty+1^Tz)-2n\tilde{M}F\\
\text{\emph{subject to }} & Bx+y-z=Fe_t,\\
&0\leq x_i \leq c_i,\\
&0 \leq y_i\leq 4|V|M,\\
&0 \leq z_i\leq 4|V|M,\\
&0 \leq F\leq 2|V|M,\\
\end{align*}
where $\tilde{M}:=8|E|^2M^3$, $\lambda:= 440 |E|^4 \tilde{M}^2M^3$, and $\tilde{q}$, satisfying $\tilde{q}\leq\tilde{M}$, is defined as follows. For every edge, take a uniformly random number from $\left\{\frac{1}{4|E|^2M^2},\frac{2}{4|E|^2M^2}, \dots, \frac{2|E|M}{4|E|^2M^2}\right\}$, and add this to $q_e$. With probability at least $1/2$, the problem with this cost vector has a unique solution, and this solution is also a valid solution for the original problem~\cite{DS08}\footnote{We can easily boost the success probability from $1/2$ to $1-n^c$ at the cost of a factor $O(c \log n)$, by running the algorithm $O(c\log n)$ times. Each time we can check in one round whether the flow is feasible: each node checks the constraints and broadcasts whether they are satisfied. In the end we take the flow of minimum cost of all flows of maximum value. With probability at least $1-n^c$ this is the minimum cost maximum flow.}. We apply this reduction and scale the problem such that the cost vector is integral again.

It is easy to check that the following is an interior point: $F=|V|M, x=\frac{c}{2}, y=2|V|M\mathds{1}-(B\frac{c}{2})^-+Fe_t, z=2|V|M\mathds{1}+(B\frac{c}{2})^+$, where we denote $a^+$ and $a^-$ for the vectors defined by 
\[ (a^+)_i := \begin{cases} a_i &\text{if }a_i\geq 0; \\ 0 &\text{else.}\end{cases} \hspace{3em}\text{and}\hspace{3em} (a^-)_i := \begin{cases} a_i &\text{if }a_i\leq 0 \\ 0 &\text{else.}\end{cases} \]
respectively. 

A solution to this linear program can be transformed to a solution to the minimum cost maximum flow problem. To be precise, one can find an exact solution to the minimum cost maximum $s$-$t$ flow problem, if we can find a feasible solution to the above LP with cost value within $\frac{1}{12M}$ of the optimum. This solution $x$ is then transformed in two steps: it is made a feasible flow $\tilde{x}$ for the original graph by subtracting the error we may have created by introducing additional variables $y$ and $z$. One can show this is at most $\epsilon:= \frac{1}{40|E|^2\tilde{M}M}$, since the LP is solved up to precision  $\frac{1}{12M}$ \cite{LS19}. We set $\tilde{x}:=(1-\epsilon)x$. This is not yet optimal, but  by integrality of costs and the fact that the min-cost solution is unique, we have that the flow $\tilde{x}_e$ on each edge is at most $1/6$ off from the optimal value \cite{LS19}. We obtain the optimal value simply by rounding $\tilde{x}_e$ to the closest integer. In the Broadcast Congested Clique, multiplication by $(1-\epsilon)$ and rounding can be done internally, so this requires no rounds. 

Next, we show how to actually solve the above LP. We set $A: = [B\ I\ -I\ -e_t]^T$, and use the LP solver of Section~\ref{sc:LP} with this constraint matrix. To be precise, we run the algorithm on a network of $n=|V|-1$ vertices, as $s$ does not need to participate. It is clear the knowledge of $A$ is distributed in the required manner, as local knowledge of the edge-vertex incidence matrix $B$ is known by default.

Next we show that we can solve linear equations in $A^TDA$ in $\Ot(\log(M))$ rounds. Then Theorem~\ref{thm:BCC_LPSolve} solves the LP in $\Ot(\sqrt{n}\log^3 M)$ rounds, where we use that $T(n,m)=\Ot(\log(M))$. 
\begin{lemma}
Let $D\in\R^{(|E|+2|V|-1)\times(|E|+2|V|-1)}$ be any positive diagonal matrix, then there is a BCC algorithm that solves linear equations in $A^TDA$ up to precision $1/m^{O(1)}$ in $\Ot(\log(M))$ rounds.
\end{lemma}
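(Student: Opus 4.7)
The plan is to expand $A^TDA$ in block form, recognize it as an SDD matrix arising from an auxiliary graph on $V$, and then invoke Theorem~\ref{thm:laplaciansolveBCC} via the SDD-to-Laplacian reduction. Decomposing $D=\diag(D_1,D_2,D_3,D_4)$ according to the variable blocks $x,y,z,F$ and using $A^T=[B\ I\ -I\ -e_t]$, a direct block multiplication gives
\[A^TDA = B D_1 B^T + D_2 + D_3 + D_4 e_t e_t^T.\]
Here $BD_1 B^T$ is the reduced Laplacian (grounded at $s$) of the graph $G$ with edge weights $D_1$, and the remaining terms $D_2+D_3+D_4e_te_t^T$ contribute a strictly positive diagonal. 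Hence $A^TDA$ has nonpositive off-diagonal entries and a strictly dominant positive diagonal, so it is SDD and positive definite. In fact, it equals the Laplacian $L_{G^*}$ of the augmented graph $G^*$ (on vertex set $V$) -- obtained from $G$ by attaching, for each $v\in V\setminus\{s\}$, an extra edge $(v,s)$ of weight $(D_2)_{vv}+(D_3)_{vv}+D_4\cdot\delta_{v,t}$ -- restricted to the rows and columns indexed by $V\setminus\{s\}$.

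Since the reduction of Gremban~\cite{Gremban96} from SDD to Laplacian solving works in the BCC (as noted in the paper preceding Theorem~\ref{thm:laplaciansolveBCC}), it suffices to solve the Laplacian system $L_{G^*}\tilde x=\tilde y$ for an extended right-hand side $\tilde y = (y,-\mathbf{1}^Ty)$, and then extract and re-shift the coordinates of $\tilde x$ on $V\setminus\{s\}$ to obtain the solution to $A^TDA\, x=y$. The graph $G^*$ can be set up in the BCC model with no additional rounds: each vertex $v\neq s$ already knows $(D_1)_e$ for its incident edges $e$ and the local diagonal entries $(D_2)_{vv},(D_3)_{vv}$, and it knows whether $v=t$; the single vertex $s$ (or a designated vertex emulating $s$, which the extra BCC vertex can provide) contributes only broadcast information that can be shared in a polylogarithmic number of rounds.

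It remains to apply Theorem~\ref{thm:laplaciansolveBCC} to $L_{G^*}$ with error parameter $\epsilon=1/m^{O(1)}$. In the MCMF application all entries of $D$ appearing inside the LP solver are polynomially bounded in $M$ (they are built from the Hessian of the chosen barriers evaluated at LP-feasible iterates, all of which lie in a region of size $\poly(|V|,M)$), so the weights of $G^*$ satisfy $\|w\|_\infty\leq\poly(|V|,M)$ and thus $\log(nU)=O(\log M)$. Theorem~\ref{thm:laplaciansolveBCC} then delivers the solution in $O(\log^5 n\log(nU))=\Ot(\log M)$ preprocessing rounds plus $O(\log(1/\epsilon)\log(nU/\epsilon))=\Ot(\log M)$ rounds per right-hand side, matching the claimed bound. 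The main point requiring attention is the first: verifying that viewing $A^TDA$ as a grounded/SDD Laplacian preserves polynomially-bounded weights, and that the extra diagonal contributions from $D_2+D_3+D_4 e_te_t^T$ can be encoded locally without any nontrivial communication; once this is in place, the rest of the argument is a direct invocation of the Laplacian solver.
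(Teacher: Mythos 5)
Your first half coincides with the paper's: the same block decomposition of $D$, the same identity $A^TDA=BD_1B^T+D_2+D_3+e_tD_4e_t^T$, and the same observation that each vertex can assemble its own row locally. The second half takes a genuinely different route. The paper treats the result as a generic SDD matrix and applies Gremban's doubling: it builds a $2(|V|-1)\times 2(|V|-1)$ Laplacian from $M_n, M_p, C_1, C_2$, has each physical vertex simulate two virtual vertices, and solves the doubled system with right-hand side $(y,-y)$. You instead exploit that the off-diagonals are already nonpositive and the diagonal excess $D_2+D_3+D_4e_te_t^T$ is nonnegative, so $A^TDA$ is exactly the Laplacian of an augmented graph $G^*$ on $V$ grounded at $s$; you then solve the full $|V|$-dimensional Laplacian system with right-hand side $(y,-\mathds{1}^Ty)$ and normalize $\tilde x_s=0$. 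Both are correct; your grounding construction is arguably cleaner here (no doubling, no two-role simulation, and it uses the special structure $M_p=0$ that the paper itself notes but does not exploit), while the paper's Gremban route is the more general recipe. One labeling slip: what you describe — a single extra coordinate at $s$ with value $-\mathds{1}^Ty$ — is the grounding reduction, not Gremban's reduction (which doubles every coordinate and uses $(y,-y)$); citing Gremban for it is inaccurate, though it does not affect correctness. Your closing remarks on the polynomial boundedness of the weights of $G^*$ and on local encodability of the diagonal terms address a point the paper glosses over and are consistent with how $D$ arises in the LP solver.
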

\begin{proof}
We write 
\begin{align*}
D = \begin{pmatrix}
D_1 & 0 & 0 &0\\
0 & D_2 & 0 &0 \\
0 &0 & D_3 & 0\\
0 & 0 & 0 & D_4
\end{pmatrix},
\end{align*}
for $D_1 \in \R^{|E|\times |E|}$, $D_2,D_3\in \R^{(|V|-1)\times(|V|-1)}$ and $D_4 \in \R$ the diagonal submatrices, then we can rewrite $A^T DA$ as
\begin{align*}
A^T DA &= [B\ |\ I\ |\ -I\ |\ -e_t] D [B\ |\ I\ |\ -I\ |\ -e_t]^T\\
&=  [B\ |\ I\ |\ -I\ |\ -e_t]  [BD_1\ |\ D_2\ |\ -D_3\ |\ -e_tD_4]^T\\
&= BD_1B^T+D_2+D_3+e_tD_4e_t^T.
\end{align*}
We will show that we can locally compute $BD_1B^T$. The other three components are either known locally or easy to compute, so we can determine the final $(|V|-1)\times (|V|-1)$ matrix.
\begin{align*}
(BD_1B^T)_{u,v} &= \sum_{e\in E} B_{u,e} (D_1)_{e,e} B^T_{e,v} \\
&=  \sum_{e\in E} B_{u,e} (D_1)_{e,e} B_{v,e}
\end{align*}
Note that if $u\notin e$ or $v\notin e$, then the summand is zero. Now we get a case distinction between $u=v$, and $u\neq v$
\begin{align*}
(BD_1B^T)_{u,v} = \begin{cases} 
-(D_1)_{(u,v),(u,v)}-(D_1)_{(v,u),(v,u)} & \text{if } u\neq v\\
\sum\limits_{e\in E: u\in e} (D_1)_{e,e}  & \text{if } $u=v$,
\end{cases}
\end{align*}
where we write by abuse of notation that $(D_1)_{(u,v),(u,v)}=0$ if $(u,v)\notin E$. From this formula, it is immediate that each vertex can determine its row and column in the matrix, since this has only contributions from edges in which this vertex takes part, hence are known to it. Moreover, the matrix $M:=BD_1B^T+D_2+D_3+e_tD_4e_t^T$ is clearly symmetric and it is diagonally dominant. The latter holds since $BD_1B^T$ is the only summand with non-zero off-diagonal entries, and here the sum of each row/column is zero. 

We conclude that we have to solve a linear system with symmetric diagonally dominant (SDD) $L$, a $(|V|-1)\times(|V|-1)$ matrix $M$ where each vertex knows its row/column. In general, this matrix is not a Laplacian matrix for which we can use our Broadcast Congested Clique Laplacian solver. However, there is a standard reduction from SDD systems to Laplacian systems. This reduction is first presented by Gremban~\cite{Gremban96}, here we use the notation of Kelner et al.~\cite{KOSZ13}.
We get
\begin{itemize}
	\item $M_n$ all the negative off-diagonal entries of $M$;
	\item $M_p$ all the positive off-diagonal entries of $M$;
	\item $C_1$ the diagonal matrix defined by $C_1(u,u) := \sum_{v} |M(u,v)|$;
	\item $C_2:=M-M_n-C_1$. 
\end{itemize}
Note that in our case we have $M_p=0$ and that the contribution of each vertex to each of the other matrices can simply be determined internally. 
Now we get a Laplacian matrix $L$ defined by 
\begin{align*}
L = \begin{pmatrix}
C_1 + C_2/2+M_n & -C_2/2-A_p\\
-D_2/2 - A_p & D_1+D_2/2+A_n 
\end{pmatrix}.
\end{align*}
It is not hard to verify that $L$ is Laplacian, and that, given $y\in\R^{|V|-1}$ , we have that an (approximate) solution $\begin{pmatrix}x_1\\x_2\end{pmatrix}$ to $L\begin{pmatrix}x_1\\x_2\end{pmatrix}=\begin{pmatrix}y\\-y\end{pmatrix}$ implies that $x=\frac{x_1-x_2}{2}$ is an (approximate) solution to $Mx=y$. So we turn to solving equations in $L$, which is a Laplacian matrix of a virtual graph~$G'$ on $2(n-1)$ vertices. Note that $L$ is a $2(|V|-1)\times2(|V|-1)$ matrix, which we can simulate on our network of $|V|-1$ vertices ($s$ remains inactive) by letting vertex $i$ simulate row $i$ and row $i+|V|-1$. This works well, as these are exactly the matrix entries that it knows. Now we can run our BCC Laplacian solver (see Theorem~\ref{thm:laplaciansolveBCC}) in $\Ot(\log(M))$ rounds on this graph. We do this by simulating each round by two rounds: in the first one each vertex $i$ sends messages corresponding to virtual vertex $i$, and in the second round it sends messages corresponding to virtual vertex $i+|V|-1$. 
\end{proof}

\section{Conclusion}

As explained in this paper, the algorithm of Lee and Sidford is based on an interior-point method that (1) performs $ \Ot (\sqrt{n}) $ iterations and (2) spends $ \Ot(m) $ operations per iteration using primitives like matrix-vector multiplication and solving a Laplacian system.
Recent advances show how to (slightly) improve upon the $ \Ot (m \sqrt{n}) $ bound by employing sophisticated dynamic data structures to decrease the (amortized) number of operations spent per iteration~\cite{BrandLLSS0W21,AxiotisMV21}.
In an even more recent breakthrough~\cite{maxmaxflow}, the minimum cost flow problem has been solved in $O(m^{1+o(1)})$ time in the centralized model. This is done by an interior point method that (1) performs $O(m^{1+o(1)})$ iterations and (2) spends $O(m^{o(1)})$ time per operation. 
One might wonder whether alternatively it would also be possible to improve upon the $ \Ot (m \sqrt{n}) $ bound purely by finding an interior-point method with a reduced number of iterations.

In this paper we have demonstrated how the primitives employed in each iteration can be carried out efficiently in the Broadcast Congested Clique and thus provide evidence that improvements in the iteration count of the interior-point method would likely carry over to the round complexity in the Broadcast Congested Clique -- currently this leads to an algorithm with $ \Ot (\sqrt{n}) $ rounds.
This motivates the question of whether a lower bound for the min-cost flow problem can be obtained in this model since such a lower bound would rule out the possibility of an improvement of the iteration count using the same fast primitives.
Note that in contrast to the (Unicast) Congested Clique -- for which lower bounds would yield a breakthrough in circuit complexity~\cite{DruckerKO13} -- the Broadcast Congested Clique seems much more amenable to lower bounds; in particular polynomial lower bounds exist for several graph problems in the broadcast model~\cite{FrischknechtHW12,DruckerKO13,Censor-HillelKK19,HolzerP15,BeckerMRT20}.

\subsection*{Acknowledgements}
This work is supported by the Austrian Science Fund (FWF): P 32863-N. This project has received funding from the European Research Council (ERC) under the European Union's Horizon 2020 research and innovation programme (grant agreement No~947702).

\printbibliography[heading=bibintoc] 
\appendix

\section{Baswana-Sen Spanner Algorithm} \label{app:BSalg}
Below we provide a randomized algorithm, computing a $(2k-1)$-spanner. This algorithm is introduced by Baswana and Sen~\cite{BS07}, we give a rephrased version from Becker et al.~\cite{BeckerFKL21}. 

\begin{enumerate}
    \item Initially, each vertex is a singleton cluster: $R_1 := \{\{v\}\ |\ v\in V\}$.
    \item  For $i=1,\dots, k-1$ do:
    \begin{enumerate}
        \item Each cluster from $R_i$ is marked independently with probability $n^{-1/k}$. $R_{i+1}$ is defined to be the set of clusters marked in phase $i$.
        \item If $v$ is a vertex in an unmarked cluster:
        \begin{enumerate}
            \item Define $Q_v$ to be the set of edges that consists of the lightest edge from $v$ to each cluster in $R_i$ it is adjacent to.
            \item If $v$ is not adjacent to any marked cluster, all edges in $Q_v$ are added to the spanner.
            \item Otherwise, let $u$ be the closest neighbor of $v$ in a marked cluster. In this case, $v$ adds to the spanner the edge $\{v,u\}$ and all edges $\{v,w\} \in Q_v$ with $w(v,w) < w(v,u)$ (break ties by neighbor identifiers). Also, let $X$ be the cluster of $u$. Then $X := X\cup \{v\}$, i.e., $v$ \emph{joins} the cluster of $u$.
        \end{enumerate}
    \end{enumerate}
    \item Each $v\in V$ adds, for each $X \in R_k$ it is adjacent to, the lightest edge connecting it to $X$ to the spanner.
\end{enumerate}

\end{document}